\documentclass[%
 reprint,
 superscriptaddress,
 amsmath,amssymb,
 aps,
]{revtex4-2}
\usepackage{graphicx}% Include figure files
\usepackage{bm}% bold math

\usepackage{wasysym}
\usepackage{enumerate}
\usepackage{tikz}
\usetikzlibrary{quantikz2}
\usepackage{amsfonts,amsthm}
\usepackage{physics}
\usepackage{bbm}
\usepackage[unicode=true, colorlinks=true, linkcolor=blue, citecolor=red]{hyperref}
\usepackage[capitalize,sort&compress]{cleveref}
\usepackage{appendix}
\usepackage{mathtools}
\usepackage{titlesec}

\usepackage{floatrow}
\usepackage[caption=false]{subfig}
\makeatletter
\newtheorem*{rep@theorem}{\rep@title}
\newcommand{\newreptheorem}[2]{%
\newenvironment{rep#1}[1]{%
 \def\rep@title{#2 \ref{##1}}%
 \begin{rep@theorem}}%
 {\end{rep@theorem}}}
\makeatother

\newtheorem{theorem}{Theorem}
\newreptheorem{theorem}{Theorem}

\newtheorem{lemma}{Lemma}
\newreptheorem{lemma}{Lemma}

\newtheorem{proposition}{Proposition}
\newreptheorem{proposition}{Proposition}

\newtheorem{corollary}{Corollary}
\newreptheorem{corollary}{Corollary}

\theoremstyle{definition}
\newtheorem{definition}{Definition}
\newreptheorem{definition}{Definition}

\let\oldsection\section
\renewcommand{\section}[1]{\textit{#1.---}}

\newcounter{suppeqnstart}
\newcounter{suppfigstart}
\newcounter{supptabstart}
\newcounter{suppthmstart}
\newcounter{suppdefstart}
\newcounter{supppropstart}
\newcounter{supplemstart}
\newcounter{suppcorstart}

\makeatletter
\let\latex@title\title
\newcommand{\savedtitle}{}
\renewcommand{\title}[1]{%
  \gdef\savedtitle{#1}%
  \latex@title{#1}%
}
\makeatother

\newcommand{\beginsupplement}{
    \clearpage
    \widetext
    \renewcommand{\section}{\oldsection}
    \begin{center}
    \textbf{\large Supplemental Material: \savedtitle}
    \end{center}
    \setcounter{section}{0}
    \renewcommand\thesection{\arabic{section}}
    \setcounter{page}{1}
    
    \renewcommand{\bibnumfmt}[1]{[S##1]}
    \renewcommand{\citenumfont}[1]{S##1}

    \setcounter{suppeqnstart}{\value{equation}}%
    \addtocounter{suppeqnstart}{1000}%
    \setcounter{equation}{\value{suppeqnstart}}%
    \renewcommand{\theequation}{S\number\numexpr\value{equation}-\value{suppeqnstart}\relax}%

    \setcounter{suppfigstart}{\value{figure}}%
    \addtocounter{suppfigstart}{1000}%
    \setcounter{figure}{\value{suppfigstart}}%
    \renewcommand{\thefigure}{S\number\numexpr\value{figure}-\value{suppfigstart}\relax}%

    \setcounter{supptabstart}{\value{table}}%
    \addtocounter{supptabstart}{1000}%
    \setcounter{table}{\value{supptabstart}}%
    \renewcommand{\thetable}{S\number\numexpr\value{table}-\value{supptabstart}\relax}%

    \setcounter{suppthmstart}{\value{theorem}}%
    \addtocounter{suppthmstart}{1000}%
    \setcounter{theorem}{\value{suppthmstart}}%
    \renewcommand{\thetheorem}{S\number\numexpr\value{theorem}-\value{suppthmstart}\relax}%

    \setcounter{supplemstart}{\value{lemma}}%
    \addtocounter{supplemstart}{1000}%
    \setcounter{lemma}{\value{supplemstart}}%
    \renewcommand{\thelemma}{S\number\numexpr\value{lemma}-\value{supplemstart}\relax}%

    \setcounter{suppdefstart}{\value{definition}}%
    \addtocounter{suppdefstart}{1000}%
    \setcounter{definition}{\value{suppdefstart}}%
    \renewcommand{\thedefinition}{S\number\numexpr\value{definition}-\value{suppdefstart}\relax}%

    \setcounter{suppcorstart}{\value{corollary}}%
    \addtocounter{suppcorstart}{1000}%
    \setcounter{corollary}{\value{suppcorstart}}%
    \renewcommand{\thecorollary}{S\number\numexpr\value{corollary}-\value{suppcorstart}\relax}%

    \setcounter{supppropstart}{\value{proposition}}%
    \addtocounter{supppropstart}{1000}%
    \setcounter{proposition}{\value{supppropstart}}%
    \renewcommand{\theproposition}{S\number\numexpr\value{proposition}-\value{supppropstart}\relax}%

    \crefalias{section}{suppsection}%
    \crefformat{suppsection}{Section~##2##1##3}%
    \Crefformat{suppsection}{Section~##2##1##3}%
}

\crefformat{suppsection}{the~#2SM#3}
\Crefformat{suppsection}{the~#2SM#3}

\newcommand{\CNOT}{\mathrm{CNOT}}
\newcommand{\SWAP}{\mathrm{SWAP}}
\newcommand{\SUM}{\mathrm{SUM}}

\newcommand{\GLkF}{\mathsf{GL}_k(\mathbb{F}_2)}
\newcommand{\GLkFq}{\mathsf{GL}_k(\mathbb{F}_q)}
\newcommand{\PSLkFq}{\mathsf{PSL}_k(\mathbb{F}_q)}
\newcommand{\SLkFq}{\mathsf{SL}_k(\mathbb{F}_q)}
\newcommand{\GLmF}{\mathsf{GL}_m(\mathbb{F}_2)}
\newcommand{\GLx}{\mathsf{GL}_4(\mathbb{F}_2)}
\newcommand{\PG}{\mathrm{PG}(3, 2)}

\newcommand{\ee}{\mathrm{e}}
\newcommand{\ii}{\mathrm{i}}
\newcommand{\overbar}[1]{\mkern 1.5mu\overline{\mkern-1.5mu#1\mkern-1.5mu}\mkern 1.5mu}

\newcommand{\G}{G}
\newcommand{\N}{N}
\newcommand{\GSUB}{\mathcal{G}}
\newcommand{\NSUB}{\mathcal{N}}
\newcommand{\Gq}{G_q}
\newcommand{\Nq}{N_q}

\DeclareMathOperator{\RM}{RM}
\DeclareMathOperator{\QRM}{QRM}
\DeclareMathOperator{\Span}{Span}
\DeclareMathOperator{\Aut}{Aut}
\DeclareMathOperator{\PAut}{PAut}
\DeclareMathOperator{\Eval}{Eval}
\DeclareMathOperator{\End}{End}
\DeclareMathOperator{\Stab}{Stab}
\DeclareMathOperator{\diag}{diag}

\usepackage{xcolor}
\usepackage{pdfrender}
\usepackage[outline]{contour}
\usetikzlibrary{calc}
\usetikzlibrary{arrows.meta}

\colorlet{XFaceColor}{red!85!black}
\colorlet{XTextColor}{red!90!black}
\colorlet{ZFaceColor}{blue!65!black}
\colorlet{ZTextColor}{blue!75!black}
\contourlength{0.9pt}

\begin{document}

\title{Constraints on phantom codes from automorphism group bounds}

\author{Arthur S. Morris}
\email{armo@math.ku.dk}
\affiliation{Department of Mathematical Sciences, University of Copenhagen, 2100 Copenhagen, Denmark}
\author{Daniel Malz}
\affiliation{Department of Mathematical Sciences, University of Copenhagen, 2100 Copenhagen, Denmark}
\affiliation{Department of Physics, University of Basel, 4056 Basel, Switzerland}

\date{\today}

\begin{abstract}
Executing a logical quantum circuit fault-tolerantly incurs a large spacetime overhead. Recent work has proposed and investigated \emph{phantom codes}, defined by the property that every in-block logical $\CNOT$ circuit can be implemented with a physical permutation, a property that has the potential to greatly reduce the depth of compiled circuits. Here we show that phantomness comes at the cost of low encoding rate. Specifically, we prove that any binary phantom code encoding $k$ logical qubits into $n$ physical qubits with distance $d\geq 2$ obeys the bound $k\leq \log_2(n+1)$ for all $k\neq 4$. For $k=4$ we explicitly construct a nonstabiliser $(\!(8, 2^4, 2)\!)$ phantom code that violates the bound and has a transversal non-Clifford gate. We further show that, within the class of nontrivial CSS phantom codes with $k\neq 4$, there is a unique family of codes saturating this bound. In addition, we prove that this logarithmic ceiling cannot be circumvented by permitting additional local unitary gates, or by making use of subsystem codes: any subspace or subsystem code admitting a $\SWAP$-transversal implementation of every logical $\CNOT$ circuit is constrained to satisfy the same bound. These bounds follow from a general theorem relating the length of a quantum code to the structure of its automorphism group, a result which may find applications beyond phantom codes.
\end{abstract}

\maketitle

\section{Introduction}Quantum computers are highly susceptible to environmental noise. To enable the fault-tolerant execution of deep quantum circuits, quantum error-correcting (QEC) codes must be used to protect logical quantum information from decoherence. The construction of useful QEC codes is, therefore, a problem of significant practical relevance to the development of both near-term and future large-scale devices \cite{Terhal_2015}. 

QEC incurs overhead in both space (number of required physical qubits) and time, collectively referred to as spacetime overhead, which depends sensitively on the code used.
For a code encoding $k$ logical qubits into $n$ physical qubits, the overhead depends on the \emph{rate} $R=k/n$ of the code, as well as the time for syndrome extraction and fault-tolerant implementations of logical gates.
Recent advances in the theory of quantum low-density parity-check (qLDPC) codes~\cite{breuckmann_quantum_2021} have produced a number of promising constructions, including high-rate planar~\cite{l4mx-l3xx, liang2025planarquantumlowdensityparitycheck, Bravyi_2024} and asymptotically good (constant rate and relative distance) qLDPC codes~\cite{panteleev_asymptotically_2022, breuckmann_balanced_2021, leverrier_quantum_2022, dinur_good_2022, lin_good_2022}, that offer efficient encoding of logical states while maintaining low-overhead syndrome extraction.
More recently, progress has been made on constructing high-rate codes with transversal non-Clifford gates~\cite{he_asymptotically_2025, golowich_asymptotically_2024}, including in qLDPC codes~\cite{golowich_quantum_2024, lin_transversal_2024}.
While sparse high-rate codes reduce the overhead associated with memory, their complex structure can make it difficult to efficiently apply targeted logical gates; indeed, such gates are subject to fundamental limitations~\cite{Guyot2025}.
This has sparked efforts to find general constructions to fault-tolerantly implement addressable logical gates~\cite{Cross2024,Swaroop2024,He2025,Liu2026b,Chang2026,Williamson2026}.

\begin{figure*}
\centering

\sidesubfloat[]{\begin{quantikz}[row sep = {0.6cm,between origins}]
     & \gate[5]{V} &  \permute{2, 1, 5, 3, 4} &  \\
     & & & \\
     & & & \\
     \setwiretype{n} & & \setwiretype{q} & \\
     \setwiretype{n} & & \setwiretype{q} & \\
\end{quantikz}  \raisebox{.5em}{$=$} \begin{quantikz}[row sep = {0.6cm,between origins}]
     &\targ{1}  & \ctrl{2} & \gate[5]{V} &  \\
     & \ctrl{-1} & &  & \\
     & & \targ{-2} & &\\
     \setwiretype{n} & & & & \setwiretype{q} \\
     \setwiretype{n} & & & & \setwiretype{q} \\
\end{quantikz}}
\hspace{1cm}
\sidesubfloat[]{\begin{quantikz}[row sep = {0.6cm,between origins}]
     & \gate[5]{V} & \permute{2, 4, 1, 5, 3} & \gate[1,disable auto
height][.65cm][.4cm]{U_1} & \\
     & & & \gate[1,disable auto
height][.65cm][.4cm]{\phantom{U_2}} & \\
     & & & \gate[1,disable auto
height][.65cm][.4cm]{\phantom{U_2}\hspace{-1.1em}\raisebox{.5em}{\vdots}} & \\
     \setwiretype{n} & & \setwiretype{q} & \gate[1,disable auto
height][.65cm][.4cm]{\phantom{U_2}} & \\
     \setwiretype{n} & & \setwiretype{q} & \gate[1,disable auto
height][.65cm][.4cm]{U_n} & \\
\end{quantikz} \raisebox{.5em}{$=$} \begin{quantikz}[row sep = {0.6cm,between origins}]
     &  & \targ{1} & \gate[5]{V} &  \\
     & \ctrl{1} & \ctrl{-1}&  & \\
     & \targ{2}& & &\\
     \setwiretype{n} & & & & \setwiretype{q} \\
     \setwiretype{n} & & & & \setwiretype{q} \\
\end{quantikz}}
\caption{\textbf{Phantom codes.} When commuted through the encoding isometry $V$, physical permutations may induce an operation on the encoded logical qubits. \textbf{(a)} Phantom codes of $k$ logical qubits are defined by the property that every logical $\CNOT$ circuit can be implemented by a permutation of the $n$ physical qubits. \textbf{(b)} In phantom-LU codes, every logical $\CNOT$ circuit can be implemented by a local unitary (LU) gate and/or permutation of the physical qubits. One of our key results is that both phantom and phantom-LU codes are subject to the bound $k\leq\log_2(n+1)$ whenever $k\geq 2$, $k\neq 4$.}
    \label{fig:Phantom}
\end{figure*}
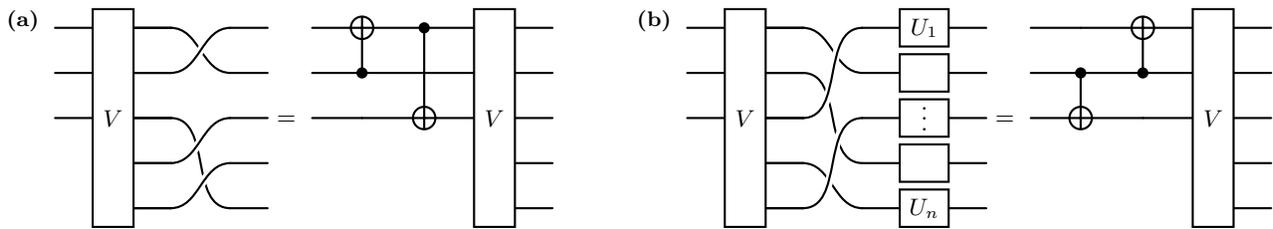

One promising way in which certain logical gates can be implemented cheaply is via code automorphisms, which are permutations of the physical qubits, possibly followed by a local unitary, that leave the code space invariant~\cite{Calderbank1996a,Grassl2013,vf7v-cpq9}. Since permutations can be realised by relabelling the qubits, such circuits are, in principle, almost free to implement, with only the unitary component requiring a physical operation. In practice, however, relabelling alters the connectivity of the device and thus automorphism gates may be most applicable to platforms with long-range connectivity or shuttling, such as trapped ions or neutral atoms~\cite{Bruzewicz_2019,Henriet_2020,Evered_2023,Majidy_Wilson_Laflamme_2024}. Automorphism gates can also be employed in high-rate codes~\cite{Xu2025b,Malcolm2025,yang2026spacetimeefficienthardwarecompatiblecomplexquantum} and have already been demonstrated experimentally~\cite{Hong2024,Reichardt2024}.
This motivates the search for families of codes with large automorphism groups~\cite{berthusen2025simplelogicalquantumcomputation,macaree2026exhaustiveoptimisationautomorphismgroups}.

In this vein, Ref.~\cite{koh2026entanglinglogicalqubitsphysical} has introduced \textit{phantom codes}, in which every in-block logical $\CNOT$ circuit can be implemented using permutations of physical qubits. These codes offer a powerful means of reducing the depths of compiled logical circuits, allowing for dramatic improvements in logical qubit fidelity. However, they are not without drawbacks: Ref.~\cite{Guyot2025} shows that a CSS phantom code of distance $d\geq 3$ cannot have constant rate $R=\Theta(1)$, and more generally all CSS phantom codes are limited by the distance-dependent bound $k\leq d\log_2 n +\text{const.}$~\cite{koh2026entanglinglogicalqubitsphysical}. While these results alone do not rule out the existence of phantom codes with high (but sub-constant) encoding rates, empirical evidence from exhaustive searches and infinite code families suggests a stronger constraint, namely that the logical dimension $k$ of phantom codes appears to be at most logarithmic in the block length for any distance~\cite{koh2026entanglinglogicalqubitsphysical}. Until now, it has been unclear whether this behaviour is an artefact of known constructions, or a fundamental obstruction.

Here, we resolve this open question by proving that a (not necessarily stabiliser) phantom code of length $n$ can encode at most logarithmically many qubits, $k\leq \log_2(n+1)$, whenever $k\neq 4$. This bound is tight, as evidenced by the existence of a family of $[\![2^k-1, k, 2]\!]$ codes, known as punctured hypercube codes, that we show are unique among CSS codes with distance $d>1$. On the other hand, for $k=4$ we explicitly construct an $(\!(8, 2^4, 2)\!)$ nonstabiliser phantom code~\cite{Note50} that violates the bound and additionally has a transversal non-Clifford gate. Furthermore, we show that the limitation imposed by the bound cannot be evaded by permitting local unitary (LU) gates in addition to physical permutations: any subspace or subsystem code admitting an implementation of every $\CNOT$ circuit via $\SWAP$-transversal operations must obey the same bound.

These results are based on a general structural theorem relating the length of a code to the constitution of its automorphism group. This bound accompanies a variety of other broad no-go theorems providing constraints on the set of shallow logical gates implementable in a given QEC code~\cite{zeng2007transversalityversusuniversalityadditive,Eastin_2009,Bravyi_2010,Bravyi_2011, bravyi_classification_2013,Pastawski_2015,Flammia_2017,chakraborty2026nogotheoremfaulttolerant, tansuwannont2026constructionlogicalcliffordgroup}. In particular, the Eastin--Knill theorem states that the group of logical operations implemented by transversal gates must be discrete, hence cannot be universal~\cite{Eastin_2009}, and it was recently shown that to implement the whole logical Clifford group in an $[\![n,k]\!]$ code one needs at least $k$-fold transversal Clifford gates~\cite{chakraborty2026nogotheoremfaulttolerant, tansuwannont2026constructionlogicalcliffordgroup}. Due to the wealth of existing codes~\cite{QEC-zoo} and the rate at which new ones are discovered, such results are of key importance for establishing viable research directions in quantum coding theory. 

We first present our theorem on automorphism groups before discussing its application to phantom codes.

\section{Phantom codes}\label{Sec:PhantomCodes}We now define binary phantom codes and introduce their LU and subsystem variants. We begin by recalling the definition of a logical operator of a QEC code. Let $\mathcal{H}\cong (\mathbb{C}^2)^{\otimes n}$ be the Hilbert space of $n$ physical qubits. A QEC code of length $n$ encoding $k$ qubits is a subspace $Q=\Im V\leq \mathcal{H}$ defined by an isometry $V:\mathcal{A}\to\mathcal{H}$, where $\mathcal{A}\cong (\mathbb{C}^2)^{\otimes k}$ is the logical space. A logical operator on $Q$ is a unitary $L\in\mathsf{U}(\mathcal{H})$ that preserves the code space, meaning it satisfies $(\mathbbm{1}-P)LP = 0$, where $P = V V^\dagger$ is the projector onto $Q$. Equivalently, $L$ is a logical operator if and only if $V^\dagger L V \in\mathsf{U}(\mathcal{A})$. The logical operators of a code form a compact Lie group denoted $\mathcal{L}(Q)$~\cite{Eastin_2009}.

\begin{definition}[Phantom codes~\cite{koh2026entanglinglogicalqubitsphysical}]\label{Def:Phantom}
    A QEC code encoding $k$ logical qubits into $n$ physical qubits is \emph{phantom} if, for some logical basis, the logical $\overline{\CNOT}_{ab}$ gate can be implemented by physical qubit permutations for every ordered pair of logical qubits $(a, b)\in [k]^2, a\neq b$.
\end{definition}

Note that this is more general than Definition 1 of Ref.~\cite{koh2026entanglinglogicalqubitsphysical}, as we do not require the codes to be CSS.

A natural way to generalise phantom codes while ensuring small overhead is to allow single-qubit gates in addition to $\SWAP$s. Such operations, known as $\SWAP$-transversal gates, also do not spread errors between qubits. The group of $\SWAP$-transversal gates, denoted $M_n$, is a semidirect product of the group of single-qubit unitaries and the symmetric group, $M_n = \mathsf{U}(2)^n\rtimes S_n$~\cite{Note30}. $M_n$ is a compact Lie group~\cite{knapp1996}.

\begin{definition}[Phantom-LU codes]\label{Def:TransversalPhantom}
    A QEC code encoding $k$ logical qubits into $n$ physical qubits is \emph{phantom-LU} if, for some logical basis, the logical $\overline{\CNOT}_{ab}$ gate can be implemented via a combination of qubit permutations and single-qubit gates for every ordered pair of logical qubits $(a, b)\in[k]^2$, $a\neq b$.
\end{definition}

\footnotetext[30]{$M_n$ is defined as the set of unitary operators that can be written as a product $m=U \sigma$ of a permutation $\sigma\in S_n$ and a collection of single-qubit gates $U = \bigotimes_{i=1}^n u_i\in \mathsf{U}(2)^n$ (this representation is unique). The subgroup of single-qubit gates is normal in $M_n$, since conjugation by a permutation simply reorders the tensor factors: $\sigma U \sigma^{-1} = \bigotimes_{i=1}^n u_{\sigma^{-1}(i)} \in \mathsf{U}(2)^n$. By contrast, $S_n$ is not normal in $M_n$ since, for example, $X_1\sigma_{12}X_1 = X_1 X_2 \sigma_{12}\notin S_n$. Hence, $M_n$ is a semidirect product $M_n = \mathsf{U}(2)^n\rtimes S_n$.}

We can further extend the notion of phantom codes to subsystem codes. A subsystem QEC code is a subspace $Q = \Im V\leq \mathcal{H}$ defined by an isometry $V:\mathcal{A}\otimes\mathcal{B}\to\mathcal{H}$, where  $\mathcal{A}\cong (\mathbb{C}^2)^{\otimes k}$ is the logical subsystem and $\mathcal{B}\cong (\mathbb{C}^2)^{\otimes r}$ the gauge subsystem~\cite{kribs2006operatorquantumerrorcorrection, PhysRevA.73.012340, aly2006subsystemcodes, PhysRevLett.94.180501, PhysRevLett.95.230504}. A $k$-qubit logical state $\rho$ is represented in $\mathcal{A}\otimes \mathcal{B}$ as $\rho\otimes\sigma$, where $\sigma$ is any $r$-qubit state in $\mathcal{B}$, and encoded into the physical space via $V$. When $r=0$, a subsystem code reduces to an ordinary QEC code, also called a subspace code. A logical gate for a subsystem code is a unitary $L\in\mathsf{U}(\mathcal{H})$ satisfying $V^\dagger L V = U\otimes w$ for some $U\in\mathsf{U}(\mathcal{A})$ and $w\in\mathsf{U}(\mathcal{B})$. In other words, $L$ is a logical operator if it preserves both $Q$ and the tensor product structure $\mathcal{A}\otimes\mathcal{B}$. The unitary $w$ is known as a gauge operator, since it changes the logical state representative $\rho\otimes\sigma$ by transforming $\sigma$. The group of logical operators of a subsystem code is likewise denoted $\mathcal{L}(Q)$. Since it is the preimage of the compact group $\mathsf{U}(\mathcal{A})\otimes \mathsf{U}(\mathcal{B})$ under the continuous map $L\mapsto V^\dagger L V$, $\mathcal{L}(Q)$ is closed in $\mathsf{U}(\mathcal{H})$, and hence a compact Lie group.

\begin{definition}[Phantom-LU subsystem (PLUS) codes]\label{Def:STP}
    A subsystem QEC code $Q$ defined by an isometry $V: \mathcal{A}\otimes \mathcal{B}\to\mathcal{H}$ encoding $k$ logical and $r$ gauge qubits into $n$ physical qubits is \emph{phantom-LU} if, for some logical basis in $\mathcal{A}$, there exists a $\SWAP$-transversal operator $m_{ab}\in M_n$ for each ordered pair of logical qubits $(a, b)\in [k]^2, a\neq b$, such that $m_{ab} V = V(\overline{\CNOT}_{ab}\otimes w_{ab})$, where $w_{ab}\in\mathsf{U}(\mathcal{B})$. Similarly, $Q$ is said to be \textit{phantom} if it is possible to choose all $m_{ab}$ to be permutations.
\end{definition}

\cref{Def:Phantom,Def:TransversalPhantom,Def:STP} all refer to qubit QEC codes; in \cref{Sec:Qudit} we discuss an extension to qudit codes.

\section{Automorphisms of quantum codes}The key idea behind phantom codes is to use qubit permutations to implement $\CNOT$ gates. Such permutations preserve the code space and are therefore code automorphisms. More generally, the set of all $\SWAP$-transversal gates that preserve the code space $Q$ forms the automorphism group $\Aut Q$~\cite{Calderbank1996a,Grassl2013,vf7v-cpq9,Note20}.
In these terms, a QEC code is phantom-LU (respectively, phantom) precisely when its automorphism group (respectively, permutation automorphism group) contains a set of elements implementing every logical $\CNOT$ circuit. Thus, by studying the structure of automorphism groups, we can derive necessary conditions that any phantom code must satisfy.
Below, we show that subgroups of $\Aut Q$ can in particular impose non-trivial lower bounds on the length of $Q$.

\footnotetext[20]{This definition differs from the terminology of classical coding theory, where `automorphism' usually refers only to a permutation.}

\begin{definition}[Automorphism group]\label{Def:Aut}
    The automorphism group of a subsystem code $Q = \Im V\cong \mathcal{A}\otimes \mathcal{B}$ of length $n$ is $\Aut Q \coloneqq M_n\cap\mathcal{L}(Q)$, the group  of $\SWAP$-transversal logical operators of $Q$. Equivalently, $\Aut Q$ is the set of elements $m\in M_n$ such that $V^\dagger m V = U\otimes w$ for some $U\in\mathsf{U}(\mathcal{A})$ and $w\in\mathsf{U}(\mathcal{B})$. The permutation automorphism group is $\PAut Q \coloneqq S_n \cap \Aut Q$, while the group of transversal gates is $\mathcal{T}(Q)=\mathsf{U}(2)^n\cap\Aut Q$. 
\end{definition}

For any subsystem or subspace QEC code $Q$, the automorphism group $\Aut Q$ is the intersection of two compact Lie groups, namely $M_n$ and $\mathcal{L}(Q)$, and is therefore itself a compact Lie group. In the case $r=0$, when $Q$ is a subspace code, \cref{Def:Aut} reduces to $\Aut Q = \{m\in M_n\,|\, mPm^\dagger=P\}$. 

Let $p:M_n\to S_n$ be the projection homomorphism defined by $p(U\sigma)=\sigma$, and let $G\leq\Aut Q$ be a compact subgroup. The image $p(G)$ records the permutation part of the elements of $G$, while the kernel $K\coloneqq\ker(p|_G) = G\cap \mathsf{U}(2)^n$ consists of those elements of $G$ whose permutation component is trivial. Since $G/K\cong p(G)\leq S_n$, the quotient $G/K$ is naturally realised via a permutation action on the $n$ physical qubits. Because this action is faithful, the number of qubits must be at least the smallest number of points on which $G/K$ admits a faithful action. For a finite group $H$, the \textit{minimal permutation degree} of $H$, denoted $\mu(H)$, is defined as the minimal size of a set $X$ such that $H$ can act faithfully on $X$, or equivalently as the minimum value of $m$ for which an injective embedding of $H$ into $S_m$ is possible:
\begin{equation}
    \mu(H) \coloneqq \min\,\{m\in\mathbb{N} \,|\,H\hookrightarrow S_m\}.
\end{equation}
By Cayley's theorem, $\mu(H)$ is finite for all finite $H$. In the present setting, this gives $n\geq \mu(G/K)$.

The next step is to relate $\mu(G/K)$ to a more useful quotient of $G$. Suppose $N\triangleleft G$ and $S\coloneqq G/N$ is finite, non-Abelian, and simple. If one can show that $\mu(S)\leq\mu(G/K)$, then it immediately follows that $n\geq\mu(S)$. In order to establish this bound, we require a technical result that enables the direct comparison of $\mu(G/K)$ and $\mu(G/N)$. 
Given a proper normal subgroup $B$ of a finite group $H$, it is possible for $\mu(H/B)$ to be greater than $\mu(H)$, in which case $H$ is called \emph{exceptional}~\cite{easdown-1988}. Nevertheless, this possibility can be ruled out if $H/B$ is simple (has no non-trivial proper normal subgroups) and non-Abelian:

\begin{proposition}[Theorem 1,~\cite{kovacs-2000}]\label{Thm:muGN}
     If $H/B$ has no non-trivial Abelian normal subgroup, then $\mu(H/B)\leq \mu(H)$.
     In particular, if $H/B$ is simple and non-Abelian, it has no non-trivial proper normal subgroups, so $\mu(H/B)\leq \mu(H)$.
\end{proposition}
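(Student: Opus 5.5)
This is the theorem of Kovács and Praeger, and the paper only uses it as a black box. The natural proof modifies a minimal faithful permutation representation of $H$ orbit by orbit, so I would follow that route.

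\textbf{Step 1: set up a minimal faithful action of $H$.} Fix a faithful action of $H$ on a set $X$ with $|X|=\mu(H)$. Decompose $X$ into orbits $X_1,\dots,X_t$ and choose point stabilisers $H_1,\dots,H_t$. Then $|X|=\sum_i|H:H_i|$ and $\bigcap_i \mathrm{core}_H(H_i)=1$.

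\textbf{Step 2: build an action of $H/B$ of no larger degree.} For each $i$, replace $H_i$ by $H_iB$ and let $H$ act on the disjoint union of the coset spaces $H/H_iB$. Since $H_i\le H_iB$, this action has degree $\sum_i|H:H_iB|\le\sum_i|H:H_i|=\mu(H)$. Its kernel is
\begin{equation}
K=\bigcap_i \mathrm{core}_H(H_iB)\supseteq B,
\end{equation}
so it induces an action of $H/B$ with kernel $K/B$. If $K=B$, then $H/B$ acts faithfully on at most $\mu(H)$ points, so $\mu(H/B)\le\mu(H)$ and we are done.

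\textbf{Step 3: show $K=B$, which is the main obstacle.} A priori $K$ can strictly contain $B$; this is exactly how exceptional groups arise in general. Suppose $K>B$, and let $M/B$ be a minimal normal subgroup of $H/B$ contained in $K/B$. By hypothesis $M/B$ is non-Abelian, so $M/B\cong T^m$ for a non-Abelian simple group $T$.

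Because $M\le H_iB$ for every $i$, we get $M=(M\cap H_i)B$ for every $i$. Hence each $M\cap H_i$ covers $M/B$. First, I would use this covering together with $\bigcap_i\mathrm{core}(H_i)=1$ and perfectness of $M/B$, so that $[M,M]B=M$, to show that some suitable subgroup intersects $B$ trivially. Second, I would apply a commutator argument of the form $[M\cap H_i,\,M\cap H_j]$. The aim is a contradiction, or else to conclude that a normal complement-like subgroup of $M$ lies in every $\mathrm{core}(H_i)$, contradicting faithfulness. If this direct argument fails, the fallback is not to replace every $H_i$ by $H_iB$ uniformly. Instead, I would choose the orbits to modify one at a time, picking a minimal-degree faithful action chosen so as to minimise $\sum_i|H_i\cap B|$, and show that the minimal counterexample leads to an Abelian normal subgroup of $H/B$.

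\textbf{Step 4: the ``in particular'' clause.} This is immediate. If $H/B$ is simple and non-Abelian, its only normal subgroups are $1$ and $H/B$ itself, and the latter is non-Abelian. So $H/B$ has no non-trivial Abelian normal subgroup, and the first part applies.
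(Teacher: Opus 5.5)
The paper gives no proof of this proposition. It imports Theorem~1 of Kov\'acs--Praeger as a black box and spells out only the ``in particular'' clause. Your Step~4 reproduces that clause correctly, and your Steps~1 and~2 are also fine.

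\textbf{Step~3 aims at a false statement.} The problem is not only that the argument is unfinished. The claim you want, that $K=B$ for a suitably chosen minimal faithful action of $H$, is false under the hypotheses. Take $H=\mathsf{AGL}_3(\mathbb{F}_2)=\mathbb{F}_2^3\rtimes\mathsf{GL}_3(\mathbb{F}_2)$ and let $B=\mathbb{F}_2^3$ be the translation subgroup. Then $H/B\cong\mathsf{GL}_3(\mathbb{F}_2)$ is simple and non-Abelian.
\begin{itemize}
\item Since $|H|=2^6\cdot 21$ and $2^6\nmid 7!$, $H$ does not embed in $S_7$. The affine action on $\mathbb{F}_2^3$ is faithful of degree $8$, so $\mu(H)=8$.
\item The point stabiliser of the affine action is $H_1=\mathsf{GL}_3(\mathbb{F}_2)$, which is a complement to $B$.
\item Hence $H_1B=H$ and $K=H$. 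Your construction then produces the trivial action of $H/B$ on a single point.
\end{itemize}
Nothing in $M=(M\cap H_1)B=H$ yields a contradiction: $H_1\cap B=1$, and $H_1$ is simply a core-free complement.

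\textbf{The fallback does not help either.} Every minimal faithful action of $H$ has the same defect.
\begin{itemize}
\item An intransitive faithful action of degree $8$ would embed $H$ in some $S_a\times S_{8-a}$, and the group orders rule this out.
\item Let $X$ be a core-free subgroup of index $8$. If $|X\cap B|\in\{2,4\}$, then $XB/B$ would have index $|X\cap B|$ in a simple group of order $168$, which is impossible. If $|X\cap B|=8$, then $B\le X$ and $X$ is not core-free. So $X\cap B=1$ and $XB=H$.
\end{itemize}
So every minimal faithful action has $H_iB=H$, and minimising $\sum_i|H_i\cap B|$ changes nothing.

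The theorem itself still holds here: $\mu(H/B)=7\le 8$. But the witnessing action is on the $7$ cosets of $H_{0,v}B$, where $H_{0,v}=\mathrm{Stab}_{\mathsf{GL}_3(\mathbb{F}_2)}(v)$ is the stabiliser of the two points $0$ and $v\neq 0$. This is the natural action on non-zero vectors, and it is not of the form $H/H_iB$ for any orbit. In particular, $K>B$ is not tied to exceptionality, as your Step~3 remark assumes.

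\textbf{What is missing.} You need a way to replace each orbit whose stabiliser supplements a non-Abelian minimal normal subgroup $M/B$ by a genuinely new $H/B$-set of size at most $|H:H_i|$. On that new set $M/B$ must act non-trivially, and the minimal normal subgroups already detected by the other orbits must not be lost. Supplying this replacement is the real content of the Kov\'acs--Praeger theorem, and your proposal does not provide it.
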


We are now ready to state and prove our first main result: a general constraint on the length of a QEC code in terms of the structure of its automorphism group. We will later utilise \cref{Thm:GeneralCase} to derive bounds on the encoding rates of phantom and phantom-LU codes. 

\begin{theorem}[Permutation-degree lower bound]\label{Thm:GeneralCase}
    Let $Q$ be a binary subspace or subsystem QEC code of length $n$. If there exist compact subgroups $N,G$ of $\Aut Q$, such that (\textit{i}) $N\triangleleft\, G$, (\textit{ii}) $S\coloneqq G/N$ is finite, non-Abelian, and simple, and (\textit{iii}) $S\neq A_5$, then $n\geq \mu(S)$.
\end{theorem}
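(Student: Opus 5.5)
The plan is to compare $S=G/N$ with the finite permutation group $G/K\cong p(G)\leq S_n$, where $K\coloneqq G\cap\mathsf{U}(2)^n=\ker(p|_G)$. Since $p(G)$ acts faithfully on the $n$ physical qubits, $n\geq\mu(G/K)$, so it suffices to show $\mu(S)\leq\mu(G/K)$. Because $K\triangleleft G$, the subgroup $KN/N$ is normal in the simple group $S$. This gives two cases: either $KN/N$ is trivial, or $KN=G$.

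\emph{Case $K\leq N$.} Then $S=G/N\cong (G/K)/(N/K)$ is a quotient of the finite group $H\coloneqq G/K$ by $B\coloneqq N/K$. The quotient $H/B\cong S$ is simple and non-Abelian, so \cref{Thm:muGN} gives $\mu(S)\leq\mu(G/K)\leq n$, as required.

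\emph{Case $KN=G$.} Then $S\cong K/(K\cap N)$, so $S$ is a finite non-Abelian simple quotient of the compact group $K\leq\mathsf{U}(2)^n$. I will show this forces $S\cong A_5$, contradicting (\textit{iii}).

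Let $\pi_i:K\to\mathsf{U}(2)$ be the $i$-th coordinate projection. Set $L_0=K$ and $L_i=K\cap\ker\pi_1\cap\dots\cap\ker\pi_i$. This is a chain of closed normal subgroups of $K$ with $L_n=1$. Its images $\phi(L_i)$ under the quotient map $\phi:K\to S$ are normal in $S$ and decrease from $S$ to $1$. Since $S$ is simple, there is an index $i$ with $\phi(L_{i-1})=S$ and $\phi(L_i)=1$. Then $S$ is a quotient of $L_{i-1}/L_i$. This group embeds via $\pi_i$ as a compact subgroup $C\leq\mathsf{U}(2)$, so $S$ is a finite simple non-Abelian quotient of $C$.

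Next I pass to $\mathsf{PU}(2)\cong\mathsf{SO}(3)$. The kernel of $C\to\mathsf{SO}(3)$ consists of scalars, hence is an Abelian normal subgroup. Its image in $S$ is an Abelian normal subgroup of $S$, so it is trivial. Hence $S$ is a quotient of a closed subgroup $\bar C\leq\mathsf{SO}(3)$.

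Such a $\bar C$ is either finite (cyclic, dihedral, $A_4$, $S_4$, $A_5$) or one of $\mathsf{SO}(2)$, $\mathsf{O}(2)$, $\mathsf{SO}(3)$. Any homomorphism onto a finite group kills the identity component, since the image of a connected group in a finite group is trivial. In the infinite cases this leaves $S$ as a quotient of a trivial group or of $\mathsf{O}(2)/\mathsf{SO}(2)\cong\mathbb{Z}_2$, which is impossible. In the finite cases only $A_5$ has a non-Abelian simple quotient, so $S\cong A_5$, which is excluded. Therefore only the first case can occur, and $n\geq\mu(S)$.

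The main obstacle is this second case. The key points there are the chain argument reducing $S$ to a section of a single $\mathsf{U}(2)$ factor, and the use of the classification of closed subgroups of $\mathsf{SO}(3)$. This is precisely where the exclusion $S\neq A_5$ enters.
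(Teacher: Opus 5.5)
Your proposal is correct and follows the paper's proof: the same $K\leq N$ versus $KN=G$ dichotomy, \cref{Thm:muGN} in the first case, and in the second the reduction of $S$ to a section of a single $\mathsf{U}(2)$ factor followed by the classification of closed subgroups of $\mathsf{SO}(3)$. Your chain $L_i$ is just the unrolled form of the paper's induction via \cref{prop:Product}, and using Abelian-normal rather than central scalars is cosmetic. The one point to state explicitly is that $\phi:K\to S$ is continuous for the discrete topology on $S$, because $K\cap N$ is closed of finite index; this is what justifies killing the identity component of $\bar C$.
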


The exception $S\neq A_5$, which is not relevant in the applications considered here, arises from the following Lemma, the proof of which is given in the End Matter.

\begin{lemma}[$A_5$ exception]\label{Lem:U2A5}
    Let $K\leq \mathsf{U}(2)^n$ be compact, and let $H\trianglelefteq K$ be a closed normal subgroup of $K$ such that $S\coloneqq K/H$ is finite, non-Abelian, and simple. Then $S\cong A_5$.
\end{lemma}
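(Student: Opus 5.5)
The plan is to reduce from $\mathsf{U}(2)^n$ to a single copy of $\mathsf{U}(2)$ using a coordinate filtration. After that, the statement follows from the classification of closed subgroups of $\mathsf{SO}(3)$.

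\emph{Step 1 (reduce to one tensor factor).} For $0\le i\le n$, let $\pi_{\le i}:\mathsf{U}(2)^n\to\mathsf{U}(2)^i$ be the projection onto the first $i$ factors, and set $L_i \coloneqq K\cap\ker\pi_{\le i}$. Each $L_i$ is a closed normal subgroup of $K$, and
\begin{equation}
K=L_0\trianglerighteq L_1\trianglerighteq\cdots\trianglerighteq L_n=\{1\}.
\end{equation}
The $i$-th coordinate projection maps $L_{i-1}$ to $\mathsf{U}(2)$ with kernel $L_i$. So $L_{i-1}/L_i$ is isomorphic to a compact, hence closed, subgroup $C_i\le\mathsf{U}(2)$.

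Now pass to $S=K/H$. The images $L_iH/H$ form a chain of normal subgroups of $S$ that runs from $S$ down to the trivial group. Since $S$ is simple, there is a unique index $i$ with $L_{i-1}H=K$ and $L_iH=H$. Hence
\begin{equation}
S\cong L_{i-1}H/L_iH\cong L_{i-1}/\bigl(L_i(L_{i-1}\cap H)\bigr).
\end{equation}
Therefore $S$ is a quotient of $C_i$ by a closed normal subgroup, and we may assume $n=1$, i.e.\ $K\le\mathsf{U}(2)$.

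\emph{Step 2 (remove the centre).} Consider the quotient map $\mathsf{U}(2)\to\mathsf{PU}(2)\cong\mathsf{SO}(3)$, whose kernel is the scalars. The scalars in $K$ form an abelian normal subgroup, so their image in $S$ is an abelian normal subgroup of a nonabelian simple group and hence trivial. Thus $S$ is a finite quotient of the closed subgroup $K'\le\mathsf{SO}(3)$, the image of $K$.

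\emph{Step 3 (use the classification in $\mathsf{SO}(3)$).} A finite quotient by a closed subgroup kills the identity component, because $H$ is open and so contains $K'^0$. Closed subgroups of $\mathsf{SO}(3)$ fall into three cases.
\begin{itemize}
\item If $K'^0=\mathsf{SO}(3)$, then $K'=\mathsf{SO}(3)$ and its only finite quotient is trivial.
\item If $K'^0\cong\mathsf{SO}(2)$, then $K'$ lies in its normaliser $\mathsf{O}(2)$, so $K'/K'^0$ has order at most $2$ and is abelian.
\item If $K'$ is finite, it is cyclic, dihedral, $A_4$, $S_4$ or $A_5$. All of these except $A_5$ are solvable and so have no nonabelian simple quotient. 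Since $A_5$ is itself simple, its only nonabelian simple quotient is $A_5$.
\end{itemize}
Hence $S\cong A_5$.

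I expect the main care to be needed in Step 1. One must check that the $L_i$ are normal in $K$ (they are kernels of restricted homomorphisms) and that the isomorphism theorems keep all the quotients Hausdorff and closed, so that $S$ really arises as a quotient of a closed subgroup of $\mathsf{U}(2)$ by a closed normal subgroup. The classification used in Step 3 is standard, so the rest of the argument is routine.
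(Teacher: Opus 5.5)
Your proof is correct and follows essentially the same route as the paper. Your filtration $L_i = K\cap\ker\pi_{\le i}$, with its unique jump index, is the paper's induction via \cref{prop:Product} written out non-inductively: the paper splits off one $\mathsf{U}(2)$ factor at a time. Your Steps 2 and 3 reproduce the base case \cref{prop:U2}: the scalars lie in $H$, one passes to $\mathsf{SO}(3)$, the finite-index closed kernel contains the identity component, and the classification of closed subgroups of $\mathsf{SO}(3)$ finishes the argument.
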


\begin{proof}[Proof of \cref{Thm:GeneralCase}.]
    Both $K$ and $N$ are normal subgroups of $G$, so \cref{prop:HK3} implies that $KN/N\trianglelefteq G/N = S$. Because $S$ is simple, it follows that $KN/N$ is either trivial or all of $S$. Equivalently, either (\textit{i}) $KN = G$, or (\textit{ii}) $K\leq N$. Suppose first that (\textit{i}) holds, so that $KN = G$. Then \cref{prop:HK3} gives
    \begin{equation}\label{Test2}
        K/(K\cap N) \cong KN/N \cong G/N = S.
    \end{equation}
    Thus $S$ is isomorphic to a quotient of $K$. Since $K$ is a closed normal subgroup of $G$ it is compact, so $K$ is isomorphic to a compact subgroup of $\mathsf{U}(2)^n$. By \cref{Lem:U2A5}, this is possible only if $S=A_5$, contrary to assumption. We can therefore assume (\textit{ii}) holds and $K\leq N$, which implies $K\trianglelefteq N\triangleleft G$. Then \cref{Prop:3rdIso} yields $(G/K)/(N/K) \cong G/N \cong S$. Hence, $(G/K)/(N/K)$ is simple and non-Abelian, allowing us to apply \cref{Thm:muGN} and conclude that 
    \begin{equation}
        n\geq \mu(G/K) \geq \mu[(G/K)/(N/K)] = \mu(S)
    \end{equation}
    as claimed.
\end{proof}

\section{Encoding rates of phantom codes}We now identify the conditions under which a code has the phantom property, and apply \cref{Thm:GeneralCase} to obtain a bound on the encoding rate of any phantom code.

To begin, observe that the set of logical $\CNOT$ circuits forms a faithful unitary representation $g\mapsto U_g$ of $\GLkF$, the group of invertible linear transformations of $\mathbb{F}_2^k$. This can be deduced, for instance, by employing the symplectic representation of the Clifford group~\cite{bataille2020quantumcircuitscnotgates, koh2026entanglinglogicalqubitsphysical}. Now consider the set of permutations of the physical qubits that act as a logical $\CNOT$ circuit, up to a global phase: 
\begin{equation}\label{eq:G}
    \G \coloneqq\left\{
   \sigma\in S_n \;\middle|\; \begin{aligned}& \sigma V = \ee^{\ii\theta}VU_g \text{ for some} \\
   & g\in\GLkF, \theta\in\mathbb{R}
  \end{aligned}
\right\}.
\end{equation}
$\G$ forms a group under the composition of permutations, and moreover carries a homomorphism $\pi:\G\to\GLkF$ defined by sending $\sigma\in \G$ to the unique element $g\in\GLkF$ satisfying $\sigma V = \ee^{\ii\theta}V U_g$ for some $\theta\in\mathbb{R}$. A code is phantom if and only if $\pi$ is surjective, meaning every logical $\CNOT$ circuit $U_g$ has a representative physical permutation $\sigma_g$. 

Importantly, while the logical $\CNOT$ circuits provide a representation of $\GLkF$ and satisfy $U_g U_{g'} = U_{gg'}$, this need not be true of the permutations: in general, $\sigma_g \sigma_{g'}$ is equivalent to $\sigma_{gg'}$ only up to multiplication by a permutation that acts trivially on the logical subspace. The set of such permutations, denoted 
\begin{equation}\label{eq:TrivialLogical}
    \N\coloneqq\ker\pi = \{\nu\in S_n \,|\,\nu V = \ee^{\ii\theta}V \text{ for some }\theta\in\mathbb{R}\},
\end{equation}
is a normal subgroup of $\G$, and the surjectivity of $\pi$ implies that $\G/\N \cong \GLkF$. Equivalently, $\G$ is an extension of $\GLkF$ by $\N$. Since $\GLkF\cong\mathsf{PSL}_k(\mathbb{F}_2)$ is simple and non-Abelian for $k\geq 3$~\cite{ConradSimple,Note3}, we obtain the following result.

\footnotetext[3]{Table 1 in Ref.~\cite{cooperstein-1978} gives the result for $\SLkFq$; note that $\mathsf{SL}_k(\mathbb{F}_2) = \GLkF$ as all invertible binary matrices have unit determinant. Similarly, $\mathsf{SL}_k(\mathbb{F}_2) = \mathsf{PSL}_k(\mathbb{F}_2)$ as the centre of $\mathsf{SL}_k(\mathbb{F}_2)$ is trivial, so the result of Ref.~\cite{ConradSimple} likewise applies to $\GLkF$.}

\begin{theorem}[Bound on phantom codes]\label{Thm:PhantomBound}
    Any phantom code of distance $d\geq 2$ encoding $k\geq 2$ logical qubits, where $k\neq 4$, requires at least $n\geq 2^k-1$ physical qubits.
\end{theorem}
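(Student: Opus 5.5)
Apply \cref{Thm:GeneralCase} with the groups $\G$ and $\N$ defined in \cref{eq:G,eq:TrivialLogical}, then use the known minimal permutation degree of $\GLkF$. Handle $k=2$ separately.

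\textbf{Step 1 (hypotheses of \cref{Thm:GeneralCase}).} For $k\geq 3$, both $\G$ and $\N$ are finite subgroups of $S_n$. Their elements act on the code space as a phase times a logical unitary, so they lie in $\PAut Q\leq\Aut Q$ and are trivially compact. Since $\N=\ker\pi$, we have $\N\triangleleft\G$. Phantomness makes $\pi$ surjective, so $\G/\N\cong\GLkF$, which is finite, non-Abelian and simple for $k\geq3$. It is also never $A_5$: $|\GLkF|=168$ for $k=3$, $20160$ for $k=4$, and larger for $k\geq 5$. \cref{Thm:GeneralCase} therefore gives $n\geq\mu(\GLkF)$.

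\textbf{Step 2 (minimal degree).} This is where the real content lies, and I expect it to be the main obstacle, since it must be imported from group theory rather than derived from the paper. I would cite Cooperstein's classification of minimal permutation degrees of the finite simple classical groups (the reference used for \cref{Thm:PhantomBound}'s footnote). For $\mathsf{PSL}_k(\mathbb{F}_q)$ with $(k,q)\neq(2,5),(2,7),(2,9),(2,11),(4,2)$, it gives $\mu=(q^k-1)/(q-1)$, realised by the action on points (or hyperplanes) of projective space. At $q=2$ this is $2^k-1$.

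The excluded case $(4,2)$ is exactly $k=4$. There $\mathsf{GL}_4(\mathbb{F}_2)\cong A_8$ has $\mu=8<15$, which explains the hypothesis $k\neq4$. For $k=3$, $\mathsf{GL}_3(\mathbb{F}_2)\cong\mathsf{PSL}_2(\mathbb{F}_7)$. One has to check that the exceptional value for $\mathsf{PSL}_2(7)$, namely degree $7$, coincides with $2^3-1=7$; it does, since $\mu(\mathsf{PSL}_2(7))=7$. So $n\geq 2^k-1$ for all $k\geq3$, $k\neq4$.

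\textbf{Step 3 ($k=2$).} Here $\mathsf{GL}_2(\mathbb{F}_2)\cong S_3$ is not simple, so \cref{Thm:GeneralCase} does not apply, and we need $n\geq3$ directly. This is where $d\geq2$ enters. If $n\leq 2$, a $2$-qubit code space inside $(\mathbb{C}^2)^{\otimes n}$ forces $n=2$ and $Q=\mathcal{H}$. That code has distance $1$, since any single-qubit operator is a nontrivial logical error. Hence $n\geq 3=2^2-1$.

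Alternatively, one can argue from the group structure. $\G$ surjects onto $S_3$, but $S_2$ has no quotient isomorphic to $S_3$, so $n\geq3$ without using the distance at all. The distance hypothesis then serves only to exclude trivial codes with $n=k$ in the larger cases. Either route completes the proof.
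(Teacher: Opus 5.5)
Your proof is correct. For $k\geq 3$ it is essentially the paper's argument: apply \cref{Thm:GeneralCase} to $\G\triangleright\N$, then use Cooperstein's value $\mu(\GLkF)=2^k-1$ for $k=3$ and $k\geq 5$.

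The only difference is the case $k=2$. The paper uses the quantum Singleton bound $n-k\geq 2(d-1)$, which gives the stronger $n\geq 4$. Your arguments give exactly $n\geq 3$. Your order-counting argument ($|S_2|=2<6=|\mathsf{GL}_2(\mathbb{F}_2)|$) shows that the stated bound holds even without the hypothesis $d\geq 2$.

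One correction to your closing remark: the distance hypothesis plays no role at all for $k\geq 3$. There the group-theoretic bound $n\geq 2^k-1>k$ already excludes $n=k$.
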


\begin{proof}
    The groups $\G$ and $\N$ defined in \cref{eq:G,eq:TrivialLogical} satisfy $\N\triangleleft \G\leq \Aut Q$ and are compact subgroups of $\Aut Q$. Moreover, for $k\geq3$, $S=\G/\N\cong \GLkF$ is simple, non-Abelian, finite, and not equal to $A_5$, so \cref{Thm:GeneralCase} applies and $n\geq\mu(\GLkF)$. Cooperstein~\cite{cooperstein-1978, Note3} showed that for $k\geq 5$ or $k=3$, the minimal permutation degree of $\GLkF$ is given by $\mu\left(\GLkF\right) = 2^k - 1$, hence $n\geq 2^k-1$ whenever $k\geq 5$ or $k=3$. Finally, if $k=2$ then the quantum singleton bound~\cite{gottesman-2024} $n-k\geq 2(d-1)$ implies that any code with distance $d\geq 2$ must have $n\geq 4 > 2^k-1$, so the bound holds in this case too.
\end{proof}

\cref{Thm:PhantomBound} naturally generalises to subsystem and subspace phantom codes over Galois qudits, which we discuss in \cref{Sec:Qudit}.

The condition $k\neq 4$ is due to the exceptional isomorphism $\GLx\cong A_8$, which implies that $n\geq\mu(\GLx) = \mu(A_8) = 8$ in this case. This isomorphism is associated with an isomorphism of group actions: the action of $\GLx$ on the lines within the finite projective geometry $\PG = \mathbb{F}_2^4\backslash\{0\}$ is identical to the action of $A_8$ on the set $W\coloneqq\{b\in\mathbb{F}_2^8\,|\,|b|=4\}/\{b\sim\bar{b}\}$ of weight-4 binary strings $b$, modulo bitwise complementation. In \cref{Sec:PG} we utilise this correspondence to construct a phantom code with $k=4$ that saturates the bound $n\geq 8$, thereby proving that it is tight.

\begin{theorem}[$\PG$ code]
    There exists an explicit $(\!(8, 2^4, 2)\!)$ phantom code $Q$ with a transversal non-Clifford gate and maximal permutation automorphism group $\PAut Q = S_8$. This code is necessarily nonstabiliser, in the sense that no Pauli stabiliser $[\![8, 4]\!]$ phantom code exists.
\end{theorem}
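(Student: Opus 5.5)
The construction follows the exceptional isomorphism $\GLx\cong A_8$ and its action on $W$. The code space will be spanned by sixteen states indexed by $\mathbb{F}_2^4$, built from the $35$ elements of $W$, so that $S_8$ permutes them in a controlled way.

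\textbf{Construction.} For each $x\in\mathbb{F}_2^4$ we take
\begin{equation}
|\bar x\rangle=\tfrac{1}{\sqrt{|C_x|}}\sum_{b\in C_x}|b\rangle,
\end{equation}
with $C_x\subseteq\mathbb{F}_2^8$ defined as follows.
\begin{itemize}
\item $C_0$ consists of $0^8$, $1^8$ and suitable weight-4 strings.
\item For $x\neq 0$, the point $x$ of $\PG$ lies on $7$ lines. Under the correspondence, these lines map to $7$ complementary pairs in $W$, giving $14$ weight-4 strings; these form $C_x$.
\end{itemize}
The sets $C_x$ must be pairwise disjoint, so the states are orthonormal. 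We will therefore refine the choice rather than use all lines. One natural option is that the $8$ points of an affine $3$-space give the $15$ nonzero points of $\PG$ as the $15$ fixed-point-free involutions' partitions, or the $35$ lines as $W$. We would pick the realisation in which $S_8$ acts on $\{C_x\}$ exactly through $A_8\cong\GLx$ acting linearly on $x$.

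Odd permutations must also act logically. Since $S_8=A_8\times\{\pm\}$ is not a direct product, we plan to let them act as $\GLx$ composed with a diagonal phase or a logical Pauli. Alternatively, we can twist by the outer structure: $S_8\cong \GLx\rtimes C_2$, where the transpose-inverse automorphism swaps points and planes. To get $\PAut Q=S_8$, the code space only needs to be invariant, not each $C_x$. Hence we choose $Q$ as an $S_8$-invariant $16$-dimensional subspace on which $A_8$ acts through the $\mathrm{CNOT}$ representation $U_g$. The natural candidate is found by decomposing $\mathbb{C}[W\cup\{0^8,1^8\}]$ into $S_8$-irreps.

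\textbf{Verification steps.}
\begin{enumerate}
\item Check that $\sigma\in A_8$ maps $|\bar x\rangle\mapsto|\overline{g_\sigma x}\rangle$. This is the permutation representation of $\GLx$ on $\mathbb{F}_2^4$, i.e.\ exactly $U_g$ on computational basis states. In particular every $\overline{\CNOT}_{ab}$ is a permutation, so $Q$ is phantom.
\item Check distance $2$ via the Knill--Laflamme conditions for single-qubit errors. $X_i$ and $Y_i$ change weight parity, so they map the code, which is supported on even weight, to its orthogonal complement. For $Z_i$, we need $\langle\bar x|Z_i|\bar y\rangle=c\,\delta_{xy}$. By $S_8$ transitivity on qubits this reduces to $i=1$. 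It then holds if each $C_x$ is balanced, meaning half its strings have a $1$ in position $1$, which complement-closure guarantees.
\item Show the transversal non-Clifford gate: $T^{\otimes 8}$, or $\mathrm{diag}(1,e^{i\pi/4})^{\otimes 8}$. Weights $0,4,8$ give phases $1,-1,1$, which yields a logical diagonal gate. Depending on the weights, we would instead use $\sqrt{T}$-type gates so that the phase pattern $x\mapsto$ phase is a cubic or quartic function of $x$, hence non-Clifford.
\end{enumerate}

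\textbf{Nonstabiliser claim.} For an $[\![8,4]\!]$ phantom stabiliser code, $\PAut$ would have to surject onto $\GLx\cong A_8$. The proof of \cref{Thm:GeneralCase} gives $n\ge\mu(A_8)=8$, so $G/K$ embeds in $S_8$ with a quotient $A_8$. This forces $p(G)\supseteq A_8$ acting on the $8$ qubits, with $N$ trivial, so $A_8\le\PAut Q$. The stabiliser group (rank $4$) would then be an $A_8$-invariant set of Paulis. Its $X$- and $Z$-type images in $\mathbb{F}_2^8$ are $A_8$-invariant subspaces, and these are only $0$, $\langle 1^8\rangle$, the even-weight space and $\mathbb{F}_2^8$. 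Mixed-type Paulis reduce to this via the symplectic structure. No rank-$4$ combination is isotropic with distance $\ge2$ and a nontrivial $A_8$ logical action, since, for example, $\langle X^{\otimes8},Z^{\otimes8}\rangle$ has rank $2$ and leaves $k=6$.

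\textbf{Main obstacle.} I expect the hardest step to be pinning down the precise $16$-dimensional $S_8$-invariant subspace on which $A_8$ acts exactly as $U_g$ rather than a projective twist. I would first try the $\mathbb{C}$-span of $0^8+1^8$ together with, for each $x\ne0$, the uniform superposition over the weight-4 strings whose complement-class lies on lines through $x$. I would then check orthogonality directly, adjusting signs if needed.
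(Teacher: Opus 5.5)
There is a genuine gap in the construction of the logical basis, which you flag as the main obstacle, and neither of your proposed fixes works. States with non-negative amplitudes are orthogonal only if their supports are disjoint. Equivariance $\sigma_g\ket{\bar x}=\ket{\overline{g\cdot x}}$ forces the support of $\ket{\bar x}$ to be a union of orbits of $\phi(\Stab_{\GLx}(x))$. On the $70$ weight-4 strings these orbits have sizes $14$ (strings from lines through $x$) and $56$ (the rest), so fifteen pairwise disjoint supports cannot fit, and no refinement of the $C_x$ helps.

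Your fallback does give the right space: the paper's code is exactly $\mathbb{C}(\ket{0^8}+\ket{1^8})\oplus\Span\{\ket{a_x}\}$, with $\ket{a_x}=\sum_{\ell\ni x}\ket{\ell}$. However, $\braket{a_x}{a_y}=7$ for $x=y$ and $1$ for $x\neq y$, since two points lie on a unique line. Putting signs on the line states only turns these overlaps into $\pm1$, and generally breaks equivariance. The missing idea is to correct by the invariant vector $\ket{t}=\sum_{\ell\in L}\ket{\ell}=\tfrac13\sum_x\ket{a_x}$. The Gram matrix is $6\mathbbm{1}+J$, and its inverse square root yields
\[
\ket{\bar x}=\tfrac{1}{\sqrt6}\ket{a_x}-\tfrac15\Bigl(\tfrac{1}{\sqrt6}-\tfrac{1}{\sqrt{21}}\Bigr)\ket{t}.
\]
This basis is orthonormal and still satisfies $\sigma_g\ket{\bar x}=\ket{\overline{g\cdot x}}$, because $\ket{t}$ is $A_8$-invariant. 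Without this step (or an explicit unitary-intertwiner argument, which itself needs the Gram matrix to be nonsingular), your step 1 has no basis to act on. The phantom property, which needs an orthonormal logical basis on which $\sigma_g$ acts as $U_g$, is therefore not established.

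Second, you assert $\PAut Q=S_8$ without proving it: you never show that an odd permutation preserves $\Span\{\ket{a_x}\}$. Also, odd permutations do not act as a $\CNOT$ circuit composed with a diagonal phase or Pauli; their logical action $U_{\text{c}}$ is not even monomial. The paper takes the odd permutation $\tau_{\text{c}}$ with $\tau_{\text{c}}\phi(g)\tau_{\text{c}}^{-1}=\phi(g^{-\intercal})$. Then $\sigma_{\text{c}}\ket{\ell}=\ket{\ell^\perp}$ and $\sigma_{\text{c}}\ket{a_x}=\ket{p_{x^\perp}}$, and the incidence count $\sum_{x\in\Pi}\ket{a_x}=\ket{t}+2\ket{p_\Pi}$ puts plane states back in the span.

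Your irrep idea can also close this, but it has to be argued. The complement-symmetric weight-4 space is $S^{(8)}\oplus S^{(6,2)}\oplus S^{(4,4)}$, and its summands restrict to inequivalent $A_8$-irreps of dimensions $1,20,14$. Hence the $15$-dimensional $A_8$-invariant span of the $\ket{a_x}$ must equal $S^{(8)}\oplus S^{(4,4)}$, which is $S_8$-invariant.

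For the transversal gate, $T^{\otimes8}$ already suffices. It acts as $2\ket{\bar 0}\bra{\bar 0}-\mathbbm{1}$, whose phase function is the degree-4 indicator of $x=0$ (up to sign, $C^3Z$ conjugated by $X^{\otimes4}$), so it is non-Clifford. Your $\sqrt{T}$ fallback would instead map $\ket{\bar 0}$ out of the code.

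Your distance argument and nonstabiliser argument are essentially the paper's. Close the latter with the count $4=\dim\Lambda\le\dim\Lambda_X+\dim\Lambda_Z\le 2$: each projection has dimension at most $4$, so it is $\{0\}$ or $C^{\text{rep}}_8$. Distance and logical action play no role.
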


In fact, Theorem 6 of Ref.~\cite{koh2026entanglinglogicalqubitsphysical} shows that the bound from \cref{Thm:PhantomBound} is also tight for all $k\geq 5$ and $k=3$, since it is saturated by the family of $[\![2^k-1, k, 2]\!]$ punctured hypercube codes, described in \cref{Sec:Hypercube}. Up to CSS isomorphism, this family is unique:

\begin{theorem}[Punctured hypercube codes are unique]\label{Thm:Minimal}
    For $k\geq 5$ and $k=3$, the only binary CSS phantom codes of type $[\![2^k-1, k, d]\!]$ with $d>1$ are the punctured hypercube codes (up to CSS isomorphism).
\end{theorem}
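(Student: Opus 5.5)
Since $n=2^k-1=\mu(\GLkF)$ for $k\geq 5$ or $k=3$, a phantom code of this length has no room to spare, and we will exploit this rigidity. Take $\G$ and $\N$ as in \cref{eq:G,eq:TrivialLogical}, with $\N\triangleleft\G\leq S_n$ and $\G/\N\cong\GLkF$. First we show that $\N$ is trivial. Suppose not, and look at the orbits of $\G$ on the $n$ points. Because $\mu(\GLkF)=n$ and, by \cref{Thm:muGN}, $\mu(\G)\geq n$, the faithful $\G$-set has no slack, and a counting argument over orbits should force $\G$ to be transitive on $[n]$. Take a point stabiliser $\G_x$. If $\N\neq 1$, then $\N$ is a nontrivial normal subgroup of a transitive group, so its orbits give a block system. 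Projecting the action onto the set of $\N$-orbits gives an action of $\GLkF$ (or of a quotient of $\G$ that still surjects onto $\GLkF$) on fewer than $n$ blocks. That action is faithful on the simple factor unless it is trivial. A faithful action on fewer than $2^k-1$ points contradicts $\mu(\GLkF)=2^k-1$. If the action is trivial, $\N$ is transitive, and we would argue through Cooperstein's classification that a transitive extension of this kind cannot fit in degree $2^k-1$. Either way $\G\cong\GLkF$ acts faithfully and transitively on $n=2^k-1$ points. By Cooperstein's result on minimal-degree actions, any such action is equivalent to the natural action on $\mathbb{F}_2^k\setminus\{0\}$, or on the dual space of hyperplanes, where the two are related by the inverse-transpose automorphism. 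After relabelling qubits, we may therefore identify the qubits with nonzero vectors of $\mathbb{F}_2^k$ and assume $\GLkF$ acts naturally.

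Next we pin down the CSS structure. The code is CSS with classical codes $C_X^\perp\subseteq C_Z$, say, and the permutations in $\G$ preserve both $C_X$ and $C_Z$. This holds because a permutation that is a logical Clifford, up to phase, normalises the stabiliser group. A permutation $\sigma$ maps $X$-type stabilisers to $X$-type Paulis, so it preserves the stabiliser group precisely when it preserves the two classical codes. So $C_X$ and $C_Z$ are $\GLkF$-invariant subspaces of the permutation module $\mathbb{F}_2^{\,\mathbb{F}_2^k\setminus\{0\}}$. The submodule structure of this module over $\mathbb{F}_2$ is classical: its invariant subspaces are spanned by punctured Reed--Muller codes $\RM(r,k)^*$, their even-weight subcodes, and the all-ones vector. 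We cite or sketch this via the fact that the invariant subcodes of the $\mathsf{AGL}$/$\GLkF$ action are the (punctured) Reed--Muller codes, for instance through Delsarte's or Berger--Charpin's results on affine-invariant codes, adapted to $\GLkF$-invariance of the punctured code. Since the code has $k$ logical qubits, $\dim C_Z-\dim C_X^\perp=k$. Among the nested invariant codes, the dimension gaps are binomial sums $\binom{k}{r}$, together with $\pm1$ corrections from the all-ones and even-weight variants. Requiring a gap of exactly $k$ forces the pair $\{C_X^\perp,C_Z\}$ to be $\{\RM(k-2,k)^{*\perp}\text{-type},\ \RM(1,k)^*\}$ up to swapping $X$ and $Z$, or its even-weight variant. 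We then eliminate the candidates with $d=1$, which leaves exactly the punctured hypercube code, up to swapping $X$ and $Z$ and applying the inverse-transpose relabelling.

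The main obstacle is the first step: showing that $\N=1$ and that $\G$ acts transitively, so that we really have the natural action. The delicate cases are an intransitive $\G$ with orbits whose sizes sum to $2^k-1$, and an $\N$ that acts nontrivially but acts trivially on the logical space. For intransitivity, we would try to show that every orbit on which the simple quotient acts nontrivially already has size at least $2^k-1$. This follows from \cref{Thm:muGN} applied orbitwise, since $\GLkF$ must act faithfully on some orbit, because its kernel on each orbit is normal and it is simple. So a single orbit carries all $n$ points. The second step, the module classification, is standard but needs care with the small case $k=3$, where the $[\![7,3,2]\!]$ verification can be done by hand.
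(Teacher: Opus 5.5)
Your architecture is the paper's: use $n=\mu(\GLkF)$ to force $N=1$, identify the action with the natural one up to the inverse-transpose twist, classify the $\GLkF$-invariant binary codes of length $2^k-1$, then count dimensions and discard $d=1$. The gap is in the step you flag yourself.

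After reducing to the case where $G/N\cong\GLkF$ acts trivially on the $N$-orbits, $N$ is transitive. Here you propose to argue ``through Cooperstein's classification''. That result concerns permutation actions of $\GLkF$ itself and gives no handle on an extension $G$ with a transitive normal subgroup $N$. As written, this case is open, and it is load-bearing: without $N=1$ you cannot place $\GLkF$ in its natural action inside $\Aut C_X$ and $\Aut C_Z$.

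The missing idea is the Frattini argument used in \cref{Lemma:Nis1}. If $N$ is transitive, then $G=NG_\alpha$ for any qubit $\alpha$, so $\GLkF\cong G/N\cong G_\alpha/(G_\alpha\cap N)$. Since $G_\alpha$ acts faithfully on the remaining $n-1$ qubits, \cref{Thm:muGN} gives $2^k-1=\mu(\GLkF)\le\mu(G_\alpha)\le n-1$, a contradiction.

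Your intransitive case also needs tightening. Saying ``$\GLkF$ acts faithfully on some orbit'' has no meaning before $N=1$ is established, since $\GLkF$ is only a quotient of $G$. The correct statement is that some orbit kernel $K_i$ lies in $N$. Then $G/K_i\le S_{|O_i|}$ surjects onto $\GLkF$, and \cref{Thm:muGN} gives $|O_i|\ge n$. If no $K_i$ lay in $N$, then $K_iN=G$ for every $i$. Because $\GLkF$ is perfect and $[K_i,K_j]\le K_i\cap K_j$, this propagates to $(\bigcap_iK_i)N=G$, i.e.\ $N=G$, which is absurd. The paper handles the analogous step with \cref{prop:Product}.

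The closing count is also misstated. $\RM^*(k-2,k)^\perp=\RM_*(1,k)$ has codimension one in $\RM^*(1,k)$, so the pair you name has gap $1$, not $k$. Done correctly, it goes as follows.
\begin{itemize}
\item Every invariant code is some $\RM_*(r,k)$ or $\RM^*(r,k)$, and $\RM^*(r,k)\not\subseteq\RM_*(s,k)$ because of the all-ones vector.
\item Hence $\dim C_Z-\dim C_X^\perp$ is either $\sum_{p=r+1}^{s}\binom{k}{p}$ (same type) or $1+\sum_{p=r+1}^{s}\binom{k}{p}$ (shortened inside punctured).
\item Since $\binom{k}{p}\ge k$ for $1\le p\le k-1$, a gap of exactly $k$ forces a single step at $p\in\{1,k-1\}$ between codes of the same type.
\item This leaves four pairs $(C_X^\perp,C_Z)$: $(\{0\},\RM_*(1,k))$, $(\RM^*(0,k),\RM^*(1,k))$, $(\RM_*(k-2,k),\RM_*(k-1,k))$ and $(\RM^*(k-2,k),\mathbb{F}_2^n)$.
\end{itemize}
In the first and last pairs, one of $C_X,C_Z$ is all of $\mathbb{F}_2^n$, while the stabiliser space it is measured against is $\RM^*(k-2,k)$, of minimum distance $3$. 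So a weight-one logical exists and $d=1$. The middle two pairs are the punctured hypercube code and its $X\leftrightarrow Z$ swap, matching the paper's case analysis.

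Finally, cite the invariant-code classification as Bardoe--Sin's submodule lattice, as the paper does, rather than adapting affine-invariant results. For $r\ge 1$ the parity extension of $\RM_*(r,k)$ is not affine-invariant, so that reduction does not go through directly.
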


The proof of \cref{Thm:Minimal} is given in \cref{Sec:MinimalProof}. Punctured hypercube codes have large check weights scaling as $w=\Theta(n)$ and hence are not qLDPC. \Cref{Thm:Minimal} thus provides a partial answer to one of the questions raised in Ref.~\cite{koh2026entanglinglogicalqubitsphysical}, namely whether phantom codes can be qLDPC. Note, however, that this result does not preclude the existence of $[\![n>2^k-1, k]\!]$ qLDPC CSS phantom codes, or even of non-CSS qLDPC phantom codes of this type, the possibility of which remains an interesting open question.

One might expect that the additional degrees of freedom afforded by the single-qubit gates in phantom-LU codes could reduce the overhead required to implement logical $\CNOT$ circuits, or similarly that the gauge freedom in subsystem phantom codes might help in this regard. In fact, this is not the case, and the same exponential lower bound on $n$ applies to both classes of codes. We show this by proving that the bound holds for PLUS codes. By restricting to $r=0$, or demanding that each $\SWAP$-transversal operator have trivial local unitary part, it follows that the same bound holds for subspace phantom-LU and subsystem phantom codes, respectively.

The analogue of \cref{eq:G} for a PLUS code $Q\cong\mathcal{A}\otimes \mathcal{B}$ is defined as the group of $\SWAP$-transversal operations that act as logical $\CNOT$ circuits, up to a possible gauge operator in the tensor factor $\mathcal{B}$:
\begin{equation}\label{Eq:GSUB}
    \GSUB \coloneqq \left\{m\in M_n\;\middle|\; \begin{aligned}& m V = V(U_g\otimes w)\text{ for some}\\ & g\in\GLkF, w\in\mathsf{U}(\mathcal{B}) \end{aligned}\right\}.
\end{equation}
Defining a map $\Pi:\GSUB\to\GLkF$ by sending $m\in\GSUB$ to the unique $g\in\GLkF$ such that $mV=V(U_g\otimes w)$ for some $w\in\mathsf{U}(\mathcal{B})$, we set
\begin{equation}\label{Eq:NSUB}
    \NSUB\coloneqq \ker\Pi = \left\{m\in M_n\;\middle|\; \begin{aligned}&m V = V(\mathbbm{1}\otimes w)\\ & \text{for some } w\in\mathsf{U}(\mathcal{B}) \end{aligned}\right\}.
\end{equation}
The map $\Pi$ is well defined, for if $U_g\otimes w = U_{g'}\otimes w'$, then $U_g^\dagger U_{g'}\otimes \mathbbm{1} = \mathbbm{1}\otimes w (w')^\dagger.$ The left-hand side lies in $\End(\mathcal{A})\otimes \mathbbm{1}$, and the right-hand side in $\mathbbm{1}\otimes\End(\mathcal{B})$, so both must be scalar. Hence $U_{g'} = \lambda U_g$ for some $\lambda\in\mathsf{U}(1)$, which for $\CNOT$ circuits is only possible if $g=g'$. A code is phantom-LU if and only if $\Pi$ is surjective, in which case $\GSUB/\NSUB\cong \GLkF$.

In order to be able to apply \cref{Thm:GeneralCase} and obtain a bound on PLUS codes, we require a technical lemma concerning the structure of $\GSUB$ and $\NSUB$, the proof of which is given in the End Matter.

\begin{lemma}[Compactness of subgroups]\label{Lemma:GCompact}
    For any PLUS code $Q$, $\GSUB$ is a compact Lie group, and $\NSUB$ is a closed normal subgroup of $\GSUB$.
\end{lemma}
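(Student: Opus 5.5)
The plan is to realise $\GSUB$ as the preimage of a closed set under a continuous map on the compact group $M_n$, and to realise $\NSUB$ as the kernel of a continuous homomorphism to a discrete group. Consider $\Phi: M_n\to \End(\mathcal{A}\otimes\mathcal{B})$, $\Phi(m)=V^\dagger m V$, which is continuous. Note that $\GSUB\subseteq \Aut Q$: if $mV=V(U_g\otimes w)$, then $V^\dagger m V = U_g\otimes w$ because $V^\dagger V=\mathbbm{1}$. Conversely, for $m\in M_n$, the equation $mV = V(U_g\otimes w)$ is equivalent to $V^\dagger m V = U_g\otimes w$. This is because $m$ is unitary, so $\|mV\psi\|=\|\psi\|=\|VV^\dagger m V\psi\|$, which forces $mV\psi\in\Im V$. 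Hence
\begin{equation}
\GSUB = \Phi^{-1}\Bigl(\bigcup_{g\in\GLkF} U_g\otimes \mathsf{U}(\mathcal{B})\Bigr).
\end{equation}
Each set $U_g\otimes\mathsf{U}(\mathcal{B})$ is compact, being the continuous image of the compact group $\mathsf{U}(\mathcal{B})$. The union is finite, so it is compact and hence closed. Therefore $\GSUB$ is closed in the compact Lie group $M_n$, and so it is compact.

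Next I check that $\GSUB$ is a subgroup. If $mV=V(U_g\otimes w)$ and $m'V=V(U_{g'}\otimes w')$, then $mm'V = mV(U_{g'}\otimes w') = V(U_gU_{g'}\otimes ww') = V(U_{gg'}\otimes ww')$. For inverses, apply $m^{-1}$ to get $V = m^{-1}V(U_g\otimes w)$, so $m^{-1}V = V(U_{g^{-1}}\otimes w^{-1})$. The same computations show that $\Pi$ is a homomorphism. By Cartan's closed subgroup theorem, $\GSUB$ is then a compact Lie group.

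For $\NSUB=\ker\Pi$, normality is automatic since it is the kernel of a homomorphism. For closedness, write $\NSUB=\Phi^{-1}(\mathbbm{1}\otimes\mathsf{U}(\mathcal{B}))\cap\GSUB$, an intersection of closed sets. Alternatively, one can argue that $\Pi$ is continuous into the discrete group $\GLkF$. The main point needing care is that the sets $U_g\otimes\mathsf{U}(\mathcal{B})$ for distinct $g$ are disjoint, so that the level sets of $\Pi$ are exactly the preimages $\Phi^{-1}(U_g\otimes\mathsf{U}(\mathcal{B}))$ and hence closed. This disjointness is exactly the well-definedness argument given above: distinct $\CNOT$ circuits are not proportional, since $g\mapsto U_g$ is faithful and permutes computational basis states. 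With this, every step is routine, and I expect no serious obstacle beyond keeping track of the isometry identity $V^\dagger V=\mathbbm{1}$.
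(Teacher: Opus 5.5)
Your proof is correct and follows the same strategy as the paper: exhibit $\GSUB$ as the preimage of a closed set under a continuous map on a compact group, and show that $\NSUB$ is a closed normal subgroup. The paper instead uses the map $\rho:\Aut Q\to\mathsf{PU}(\mathcal{A})$, $m\mapsto[U]$, with $\GSUB=\rho^{-1}(\Gamma)$ for the finite set $\Gamma=\{[U_g]\}$ and $\NSUB=\ker\rho$; your linear map $m\mapsto V^\dagger m V$, paired with the compact sets $U_g\otimes\mathsf{U}(\mathcal{B})$, makes continuity immediate where the paper asserts continuity of $\rho$ without argument, and you also spell out the subgroup property, the equivalence $mV=V(U_g\otimes w)\Leftrightarrow V^\dagger mV=U_g\otimes w$, and the Cartan step, all of which the paper leaves implicit.
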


\cref{Lemma:GCompact} immediately provides the desired bound. 

\begin{theorem}[Bound on PLUS codes]\label{Thm:STPbound}
    Any subspace or subsystem phantom-LU code of distance $d\geq 2$ encoding $k\geq 2$ logical qubits, where $k\neq 4$, requires at least $n\geq 2^k-1$ physical qubits.
\end{theorem}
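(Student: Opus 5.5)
The proof will mirror that of \cref{Thm:PhantomBound}, with $\GSUB$ and $\NSUB$ from \cref{Eq:GSUB,Eq:NSUB} playing the roles of $\G$ and $\N$. We need two facts: $\GSUB$ and $\NSUB$ are compact subgroups of $\Aut Q$ with $\NSUB\triangleleft\GSUB$, and $\GSUB/\NSUB\cong\GLkF$.

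First, every $m\in\GSUB$ satisfies $V^\dagger m V = U_g\otimes w$. This is exactly the condition in \cref{Def:Aut}, so $\GSUB\leq\Aut Q$, and likewise $\NSUB\leq\GSUB\leq\Aut Q$. \cref{Lemma:GCompact} gives that $\GSUB$ is a compact Lie group and that $\NSUB$ is closed and normal in $\GSUB$. Hence $\NSUB$ is also compact, and $\NSUB\triangleleft\GSUB$. Because $Q$ is phantom-LU, the map $\Pi$ is surjective. It is a homomorphism: if $mV=V(U_g\otimes w)$ and $m'V=V(U_{g'}\otimes w')$, then $mm'V=V(U_gU_{g'}\otimes ww')=V(U_{gg'}\otimes ww')$. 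Its well-definedness was checked in the text. The first isomorphism theorem then gives $\GSUB/\NSUB\cong\GLkF$.

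For $k\geq 3$, $\GLkF$ is finite, simple and non-Abelian. It is also not isomorphic to $A_5$, since its order is $168$ for $k=3$ and larger than $60$ for all $k\geq 3$. So \cref{Thm:GeneralCase} applies to the pair $(\GSUB,\NSUB)$ and gives $n\geq\mu(\GLkF)$. For $k=3$ and for $k\geq5$, Cooperstein's result $\mu(\GLkF)=2^k-1$ then yields $n\geq 2^k-1$.

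For $k=2$ we need $n\geq 4>3$, and the only step requiring care is the subsystem case. Here I would use the quantum Singleton bound in its subsystem form, $n-k-r\geq 2(d-1)$. Alternatively, I would observe that a subsystem code with $k$ logical qubits and distance $d$ still satisfies $n\geq k+2(d-1)$, since fixing the gauge qubits yields a subspace code of the same $k$ with distance at least $d$ on dressed operators, and the ordinary Singleton bound applies to it. Either route gives $n\geq 2+2=4$.

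I expect this $k=2$ subsystem step to be the main point to check, since the main argument is routine once \cref{Lemma:GCompact} is available. The intended route is the gauge-fixing reduction: pick any fixed gauge state $\sigma$, view the resulting subspace, and note that any error undetectable there would be a nontrivial dressed logical operator of weight less than $d$.
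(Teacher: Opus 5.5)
Your proposal is correct and follows the paper's proof. Like the paper, you check that $\NSUB\triangleleft\GSUB\leq\Aut Q$ are compact via \cref{Lemma:GCompact}, identify $\GSUB/\NSUB\cong\GLkF$, apply \cref{Thm:GeneralCase} with Cooperstein's value $\mu(\GLkF)=2^k-1$, and settle $k=2$ with the Singleton bound; your gauge-fixing reduction for the subsystem $k=2$ case makes rigorous a step the paper leaves implicit, and you should prefer it to the subsystem Singleton bound $n-k-r\geq 2(d-1)$, whose cited form is proved only for stabiliser subsystem codes.
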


The proof of \cref{Thm:STPbound} is identical to the proof of \cref{Thm:PhantomBound}, only with $\G$ and $\N$ replaced by $\GSUB$ and $\NSUB$.

\section{Discussion}We have shown that the principal advantage of phantom codes, namely the ability to execute every logical $\CNOT$ circuit with code automorphisms, comes with a strong and unavoidable limitation in the binary setting: the full logical $\CNOT$ group can be implemented by permutations only at the cost of an exponentially large physical code block, which makes phantomness fundamentally incompatible with the high-rate regime targeted by modern qLDPC constructions. That this restriction also applies to phantom-LU codes and their subsystem variants indicates that this constraint is not an artefact of requiring pure permutations. Our work thus establishes a general trade-off between the achievability of phantom entangling gates and a high encoding rate, makes progress on the characterisation of phantom codes, and provides a new perspective on the extent to which large automorphism groups can reduce the spacetime overhead of fault-tolerant logical computation.

Within the class of binary CSS codes of distance $d>1$, the exponential obstruction is even more rigid. Punctured hypercube codes, the unique family of such codes saturating the bound, have check weights scaling as $\Theta(n)$, and therefore represent an extremal endpoint of the phantom-code paradigm: they maximise the number of qubits compatible with permutation-realised logical $\CNOT$ gates, but do so only by giving up sparse stabiliser structure. The important question of whether a qLDPC code could display the phantom property, or if the high-weight checks observed in all known phantom codes are necessary, is yet unresolved.

In addition to this issue, several directions remain open.
Firstly, given that both phantom and phantom-LU codes are subject to an identical bound on the encoding rate, it is natural to ask whether these two classes of codes are genuinely distinct. In \cref{Sec:Equivalence} we formulate a necessary and sufficient condition for a phantom-LU code to be locally unitary equivalent to a phantom code. Determining whether this condition holds for every phantom-LU code, or if there exist obstructions to its satisfiability, would clarify the precise relation between these two classes.
Secondly, it would be interesting to explore to what degree the trade-off between the size of the automorphism group of a code, and its encoding rate, extends to automorphisms implementing logical operations other than $\CNOT$ gates. This tension arises both in phantom codes, and codes in which only the $\CNOT$ circuits on $l<k$ qubits can be implemented with permutations; in the latter case, the number of physical qubits must be at least $n\geq 2^l-1$ whenever $l\neq 4$.
Finally, it remains to understand whether, despite their limitations, phantom codes can be practically useful in concrete specialised applications. For example, by code switching between a high-rate code and a low-rate phantom code it may be possible to utilise the unique advantages of both codes simultaneously. 

\section{Acknowledgments}We are grateful for fruitful discussions with Michael Gullans and Shayan Majidy, who suggested considering the extension to phantom-LU codes.
DM acknowledges financial support from VILLUM FONDEN via the QMATH Centre of Excellence (Grant No.10059) and from the Novo Nordisk Foundation under grant numbers NNF22OC0071934 and NNF20OC0059939. ASM acknowledges funding from a Postdoctoral fellowship of the Danish Quantum Algorithm Academy (DQAA) under the Danish e-infrastructure Consortium (DeiC), with grant number 5260-00005B.

\bibliography{apssamp}

\onecolumngrid
\vspace{.5em}
\begin{center}
    \textbf{End Matter}
\end{center}
\vspace{.3em}
\twocolumngrid

\begin{appendices}

\section{Group theory}
We use this appendix to recall some results from group theory and present technical lemmas that are utilised in the main text.

\begin{definition}
    For two subgroups $A, B\leq H$, the subset product is defined as $AB = \{a b \,|\, a\in A, b\in B\}$. 
\end{definition}

\begin{proposition}\label{prop:HK2}
    If $A$ is a normal subgroup of $H$ and $B\leq H$ then $A\trianglelefteq AB\leq H$.
\end{proposition}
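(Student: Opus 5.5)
We verify the two claims separately: first that $AB$ is a subgroup of $H$, then that $A$ is normal in it. Both checks reduce to the identity $bA = Ab$ for every $b\in H$, which holds because $A\trianglelefteq H$. In other words, for each $a\in A$ and $b\in H$ the element $b a b^{-1}$ lies in $A$, and elements of $A$ may be moved past elements of $B$ at the cost of replacing them by conjugates in $A$.

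\emph{$AB$ is a subgroup.} It contains $e = e\cdot e$.

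For closure under products, take $a_1b_1, a_2b_2\in AB$ and write
\begin{equation}
a_1b_1a_2b_2 = a_1(b_1a_2b_1^{-1})\,b_1b_2 .
\end{equation}
Here $b_1a_2b_1^{-1}\in A$ by normality, so the product lies in $AB$.

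For inverses, $(ab)^{-1} = b^{-1}a^{-1} = (b^{-1}a^{-1}b)\,b^{-1}$. The factor $b^{-1}a^{-1}b$ lies in $A$, again by normality, so the inverse lies in $AB$. Hence $AB\leq H$.

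\emph{$A$ is normal in $AB$.} Taking $b=e$ shows $A\subseteq AB$, and $A$ is already a subgroup of $H$, so $A\leq AB$. Since $AB\subseteq H$ and $A$ is invariant under conjugation by every element of $H$, it is in particular invariant under conjugation by every element of $AB$. Therefore $A\trianglelefteq AB$.

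There is no real obstacle in this argument. The only step needing care is the rewriting used for closure and inverses, where $B$ need not be normal; this is why we conjugate elements of $A$ by elements of $B$, and never the other way round.
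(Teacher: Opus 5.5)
Your proof is correct and is essentially the paper's argument: in both, normality of $A$ is used to move $A$ past $B$, and normality of $A$ in $AB$ then follows directly from $A\trianglelefteq H$. The only difference is that the paper works with sets, deducing $AB=BA$ and then $(AB)(AB)=A(BA)B=AB$ and $(AB)^{-1}=BA=AB$, whereas you do the same manipulation element by element via the conjugates $b_1a_2b_1^{-1}$ and $b^{-1}a^{-1}b$.
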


\begin{proof}
    If $A\trianglelefteq H$ then $hA = Ah$ for all $h\in H$. For any subgroup $B\leq H$, this implies $bA = Ab$ for all $b\in B$, meaning $AB = BA$. Hence $(AB)(AB)=A(AB)B = AB$ and $(AB)^{-1}=B^{-1}A^{-1}=BA=AB$, so $AB\leq H$. Since $A\leq AB$ and $A$ is normal in $H$, it follows automatically that $A\trianglelefteq AB$.
\end{proof}

\begin{proposition}\label{prop:HK3}
    If both $A$ and $B$ are normal subgroups of $H$, then $AB\trianglelefteq H$ and $AB/B \trianglelefteq H/B$. Moreover, $AB/B\cong A/(A\cap B)$.
\end{proposition}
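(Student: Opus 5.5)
We prove \cref{prop:HK3} in three steps, using \cref{prop:HK2} and the first isomorphism theorem.

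\emph{Step 1: $AB\trianglelefteq H$.} By \cref{prop:HK2}, $AB$ is a subgroup of $H$. For normality, take $h\in H$ and use that both factors are normal:
\begin{equation}
    h(AB)h^{-1} = (hAh^{-1})(hBh^{-1}) = AB .
\end{equation}

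\emph{Step 2: $AB/B\trianglelefteq H/B$.} Since $B\trianglelefteq H$ and $B\leq AB$, we have $B\trianglelefteq AB$, so the quotient $AB/B=\{xB \,|\, x\in AB\}$ is defined and is a subgroup of $H/B$. For a coset $hB\in H/B$, compute
\begin{equation}
    (hB)(xB)(hB)^{-1} = hxh^{-1}B .
\end{equation}
By Step 1, $hxh^{-1}\in AB$, so this coset lies in $AB/B$. Hence $AB/B$ is normal in $H/B$.

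\emph{Step 3: $AB/B\cong A/(A\cap B)$.} Define $\phi:A\to AB/B$ by $\phi(a)=aB$. It is a homomorphism, being the restriction of the quotient map $H\to H/B$.
\begin{enumerate}[(a)]
    \item $\phi$ is surjective: any $x\in AB$ has the form $x=ab$, and then $abB=aB=\phi(a)$.
    \item $\ker\phi=A\cap B$: we have $aB=B$ if and only if $a\in B$.
\end{enumerate}
The first isomorphism theorem now gives $A/(A\cap B)\cong AB/B$. This step needs only $B\trianglelefteq H$; normality of $A$ is used in Steps 1 and 2.

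The argument is routine, and no step is a genuine obstacle. The only point needing care is to justify that $B\trianglelefteq AB$, so that the quotient $AB/B$ makes sense, before invoking the isomorphism theorem.
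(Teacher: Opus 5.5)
Your proof is correct and follows essentially the same route as the paper: normality of $AB$ by conjugation (the paper writes $hAB = AhB = ABh$), normality of $AB/B$ by the same coset computation $(hB)(xB)(h^{-1}B) = hxh^{-1}B$, and the isomorphism obtained by restricting the quotient map $H\to H/B$ to $A$, with kernel $A\cap B$ and image $AB/B$. Your explicit check that $B\trianglelefteq AB$, so that the quotient is defined, and your remark that Step 3 needs only $B\trianglelefteq H$, are small clarifications the paper leaves implicit.
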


\begin{proof}
    Suppose both $A, B\trianglelefteq H$. Then for all  $h\in H$ we have $hAB = AhB = ABh$ so $AB$ is normal in $H$. Let $q:H\to H/B$ be the quotient map. Then $AB/B = q(AB) = q(A)q(B) = q(A)$ as $q(B)=1$. But $q(A)\leq H/B$ as $q$ is a homomorphism, so $AB/B\leq H/B$. To see that $AB/B$ is normal in $H/B$, note that for any $hB\in H/B$ and any $xB\in AB/B = \{x B\,|\, x\in AB\}$ we have $(hB)(xB)(h^{-1}B) = (hxh^{-1})B = x'B$. But $x'=hxh^{-1}\in AB$ as $AB\trianglelefteq H$, so $x'B\in AB/B$, meaning $AB/B\trianglelefteq H/B$. Finally, the restriction of the quotient map to $A$ has kernel $A\cap B$ and image $q(A) = AB/B$, so $A/(A\cap B)\cong AB/B$.
\end{proof}

\begin{proposition}\label{Prop:3rdIso}
    Let $G$ be a group and suppose $K\trianglelefteq N\trianglelefteq G$. Then $(G/K)/(N/K)\cong G/N$.
\end{proposition}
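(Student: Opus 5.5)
The plan is to construct an explicit isomorphism from $G/K$ onto $G/N$ via the first isomorphism theorem, rather than juggling cosets by hand. Let $q_N: G\to G/N$ denote the quotient map. Because $K\leq N$, every element of $K$ lies in $\ker q_N = N$. Hence $q_N$ factors through $G/K$, and I would define
\begin{equation*}
    \phi: G/K\to G/N,\qquad \phi(gK) = gN .
\end{equation*}

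The first step is to verify that $\phi$ is well defined. Suppose $gK = g'K$. Then $g^{-1}g'\in K\subseteq N$, so $gN = g'N$. The homomorphism property follows directly from the multiplication in the quotients: $\phi(gK\,g'K) = \phi(gg'K) = gg'N = (gN)(g'N)$. Surjectivity is immediate, since every coset $gN$ equals $\phi(gK)$.

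The second step is to identify the kernel. We have $\phi(gK) = N$ if and only if $g\in N$, so $\ker\phi = \{gK \,|\, g\in N\} = N/K$. This set makes sense as a subgroup of $G/K$ because $K\trianglelefteq N$ (indeed $K$ is normal in $G$ as well, which is needed for $G/K$ to be a group; in the intended application $K=\ker(p|_G)$ is normal in $G$, and I would state this hypothesis explicitly, reading $K\trianglelefteq N\trianglelefteq G$ as $K\trianglelefteq G$ with $K\leq N$). Being a kernel, $N/K$ is automatically normal in $G/K$. Applying the first isomorphism theorem then gives $(G/K)/(N/K)\cong G/N$.

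The only subtle point is the one just flagged. Normality is not transitive, so $K\trianglelefteq N\trianglelefteq G$ alone does not make $G/K$ a group. I would therefore either assume $K\trianglelefteq G$ outright or remark that this holds in the use made of the proposition in the proof of \cref{Thm:GeneralCase}, where $K$ is the kernel of a homomorphism defined on $G$. Beyond that, the argument is routine. In the topological setting, closedness is inherited because $N$ and $K$ are closed, although only the abstract isomorphism is used afterwards.
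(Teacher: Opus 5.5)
Your proof is correct and matches the paper's argument step for step: the same map $gK\mapsto gN$, well-definedness from $K\leq N$, kernel $N/K$, surjectivity, then the first isomorphism theorem. Your remark that $K\trianglelefteq G$ is needed for $G/K$ to be a group, since $K\trianglelefteq N$ alone does not give this, is a fair sharpening of the statement as written. It does hold where the paper applies the proposition, because there $K=\ker(p|_G)$ is normal in $G$.
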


\begin{proof}
    Define $\phi:G/K\to G/N$ by $gK\mapsto gN$. This map is well-defined, as if $gK=g'K$ then $g^{-1}g'\in K\trianglelefteq N$, so $gN=g'N$. The kernel of $\phi$ consists of those cosets $gK\in G/K$ for which $gN = N$, which occurs if and only if $g\in N$. Hence $\ker\phi = \{nK\,|\,n\in N\}\cong N/K$. Moreover, $\phi$ is surjective as $gK\subseteq gN$ for all $gN\subseteq G$. 
\end{proof}

\begin{proposition}\label{prop:Central}
    Let $K$ be a group, let $H$ be a normal subgroup of $K$, and let $Z$ be a subgroup of the centre $\mathcal{Z}(K)$. Then $ZH/H\leq \mathcal{Z}(K/H)$, i.e. the centrality of $Z$ is preserved by taking quotients.
\end{proposition}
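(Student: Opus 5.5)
We must show that if $Z\leq\mathcal{Z}(K)$ and $H\trianglelefteq K$, then every element of $ZH/H$ commutes with every element of $K/H$. The argument works directly through the quotient map $q:K\to K/H$, $q(x)=xH$, which is a group homomorphism because $H$ is normal.

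First we identify $ZH/H$ with $q(Z)$. Although $Z$ need not be normal, $ZH$ is a subgroup containing $H$ as a normal subgroup by \cref{prop:HK2} (with $A=H$, $B=Z$), so the quotient $ZH/H$ makes sense. Its elements are the cosets $zhH=zH$ with $z\in Z$, $h\in H$. Hence $ZH/H=\{zH\mid z\in Z\}=q(Z)$.

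Next we check centrality. Take any $zH\in q(Z)$ and any $kH\in K/H$. Since $z$ lies in the centre of $K$, we have $zk=kz$. Because $q$ is a homomorphism,
\begin{equation}
(zH)(kH)=(zk)H=(kz)H=(kH)(zH).
\end{equation}
So $zH$ commutes with every element of $K/H$, that is, $q(Z)\subseteq\mathcal{Z}(K/H)$. Since $q(Z)$ is the image of a subgroup under a homomorphism, it is itself a subgroup, which gives $ZH/H\leq\mathcal{Z}(K/H)$.

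The proof has no genuine obstacle. The only point needing care is that $ZH/H$ is well defined even though $Z$ is not assumed normal. This is settled by applying \cref{prop:HK2} with the normal subgroup $H$ playing the role of $A$. Equivalently, $Z$ is central and therefore automatically normal in $K$, so \cref{prop:HK3} also applies.
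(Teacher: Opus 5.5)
Your proposal is correct and follows essentially the same route as the paper: both compute $q(z)q(k)=q(zk)=q(kz)=q(k)q(z)$ for the quotient map $q:K\to K/H$ and identify $ZH/H$ with $q(Z)$. Your extra remark that $ZH/H$ is well defined (via \cref{prop:HK2}, or because central subgroups are normal) is a harmless addition that the paper leaves implicit.
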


\begin{proof}
    Let $z\in \mathcal{Z}(K)$ be a central element, take $k\in K$ and let $q:K\to K/H$ be the quotient map. Then $q(z)q(k)=q(zk)=q(kz)=q(k)q(z)$, so $q(z)\in \mathcal{Z}(K/H)$ and $q(Z)\leq \mathcal{Z}(K/H)$ whenever $Z\leq \mathcal{Z}(K)$. Moreover, as $q(H) = 1$ we have $ZH/H = q(ZH) = q(Z)q(H) = q(Z) \leq \mathcal{Z}(K/H)$, meaning $ZH/H$ is central in $K/H$. 
\end{proof}

\begin{proposition}\label{prop:Product}
    Let $A$ and $B$ be topological Hausdorff groups, let $K\leq A\times B$ be compact, and let $H\trianglelefteq K$ be closed and such that $K/H\cong S$ is simple. Then $S$ is isomorphic either to a quotient $N/M$ where $N\leq A$ is compact and $M\triangleleft N$ is closed, or to such a quotient with $N\leq B$.
\end{proposition}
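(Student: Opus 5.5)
We would project $K$ onto the two factors of $A\times B$ and use simplicity of $S$ to show that one of the two projections already "sees" all of $S$. Let $p_A:A\times B\to A$ and $p_B:A\times B\to B$ be the coordinate projections, and let $q:K\to K/H\cong S$ be the quotient map. Set $K_A\coloneqq K\cap(A\times\{1\})$ and $K_B\coloneqq K\cap(\{1\}\times B)$. These are the kernels of $p_B|_K$ and $p_A|_K$ respectively. Each is closed in $K$, because it is the preimage of a point in a Hausdorff space, and each is normal in $K$. Hence $q(K_A)$ and $q(K_B)$ are normal subgroups of the simple group $S$, and each is either trivial or all of $S$.

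Consider first the case $q(K_A)=S$, that is, $K_AH=K$. By the isomorphism theorem of \cref{prop:HK3}, we have $S\cong K_AH/H\cong K_A/(K_A\cap H)$. Write $N\coloneqq p_A(K_A)\leq A$. Since $p_A$ restricted to $A\times\{1\}$ is a topological isomorphism onto $A$, $N$ is compact, being the continuous image of the compact group $K_A$, which is closed in the compact group $K$. Set $M\coloneqq p_A(K_A\cap H)$. Then $M$ is closed in $N$ because $K_A\cap H$ is compact, and $M$ is normal in $N$ because it is the image of a normal subgroup under an isomorphism. This gives $S\cong N/M$ with $N\leq A$. The case $q(K_B)=S$ is symmetric and yields a quotient with $N\leq B$.

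The remaining case, where $q(K_A)$ and $q(K_B)$ are both trivial, is the main obstacle. In that case $K_A,K_B\leq H$, so $q$ factors through $K/K_A\cong p_B(K)$, and also through $K/K_B\cong p_A(K)$. We then aim to show directly that $S$ is a quotient of $p_A(K)$. We have $p_A(K)\cong K/K_B$, and since $K_B\leq H$, \cref{Prop:3rdIso} gives
\[
(K/K_B)/(H/K_B)\cong K/H\cong S.
\]
Thus $S\cong N/M$ with $N\coloneqq p_A(K)$ and $M\coloneqq p_A(H)$. Here $N$ is compact as a continuous image of $K$. $M$ is closed because $H$ is closed in the compact group $K$, hence compact, and its image is therefore compact and closed in the Hausdorff group $A$. $M$ is normal in $N$ because it is the image of a normal subgroup under the surjection $p_A|_K$. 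We still need to check that $p_A|_K$ induces an isomorphism $K/K_B\to p_A(K)$ of groups; it does, since its kernel is exactly $K_B$. We also need $p_A(H)\cong H/K_B$ inside this identification, which holds since $K_B\leq H$.

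This last argument does not actually use the hypothesis that $q(K_A)$ is trivial. So the proof reduces to: if $K_B\leq H$, take the quotient of $p_A(K)$ by $p_A(H)$; otherwise $q(K_B)=S$ and the $B$-side argument above applies. The only delicate points are the topological ones, namely that images of compact sets are compact and that compact subsets of a Hausdorff group are closed. That is precisely why the Hausdorff hypothesis is included.
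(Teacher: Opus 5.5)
Your argument is correct. Once you observe that the third case never uses $q(K_A)=1$, it becomes exactly the paper's proof: a dichotomy on whether $q(\ker p_A|_K)$ is trivial, giving the quotient $p_A(K)/p_A(H)$ of a compact subgroup of $A$, or all of $S$, giving $K_B/(K_B\cap H)$ realised inside $B$. Your verification that $p_A(H)$ is closed directly in the Hausdorff group $A$ is a slightly cleaner version of the paper's closedness argument carried out in $K/N_A$.
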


\begin{proof}
    Let $\pi_A:A\times B\to A$ and $\pi_B:A\times B\to B$ be the canonical projection maps, and let $N_A\coloneqq\ker(\pi_A|_K)\trianglelefteq K$. Since $\pi_A|_K$ is continuous and $A$ is Hausdorff, $N_A$ is closed in $K$, and hence compact because $K$ is compact. Let $q:K\to K/H \cong S$ be the quotient map.  Then $q(N_A)\trianglelefteq S$, because the image of a normal subgroup under a homomorphism is normal in the image. Since $S$ is simple, it follows that either (\textit{i}) $q(N_A) = 1$ or (\textit{ii}) $q(N_A) = S$. 
    
    In case (\textit{i}) we have $N_A\trianglelefteq H \trianglelefteq K$, so $S\cong K/H\cong (K/N_A)/(H/N_A)$ is a quotient of the subgroup $K/N_A \cong \pi_A(K)\leq A$. Since $K$ is compact and $\pi_A$ is continuous, the image $\pi_A(K)$ is a compact subgroup of $A$. Also, as $H$ is closed and hence compact in $K$, the image $H/N_A$ of $H$ under the quotient map $K\to K/N_A$ is compact in $K/N_A$. But $K/N_A$ is Hausdorff, so compact subsets are closed, hence $H/N_A$ is a closed normal subgroup of $K/N_A$. In this case $S$ is a quotient $N/M$ with $N = \pi_A(K)\leq A$ compact and $M = \pi_A(H) \trianglelefteq N$ closed.    
    
    On the other hand, in case (\textit{ii}) where $q(N_A) = S$, we have $S\cong N_A/(N_A\cap H)$. But $N_A\leq \{1\}\times B$, so $N_A\cong\pi_B(N_A)$ is topologically isomorphic to a subgroup of $B$. Hence $S$ is a quotient of a subgroup of $B$. We know that $N_A$ is compact, and moreover $N_A\cap H$ is a closed normal subgroup of $N_A$, because $H$ is closed and normal in $K$. Hence $S$ is a quotient $N/M$, where $N = \pi_B(N_A)\leq B$ is compact and $M= \pi_B(N_A\cap H) \trianglelefteq N$ is closed.
\end{proof}

\section{Proof of Lemma\texorpdfstring{~\ref{Lem:U2A5}}{ 1}}\label{Sec:Lemma1}
We now prove \cref{Lem:U2A5} by induction on $n$. We begin with the base case $n=1$, which is a slight generalisation of Theorem 26.1 from Ref.~\cite{Dornhoff}.

\begin{lemma}\label{prop:U2}
    Let $K\leq \mathsf{U}(2)$ be compact, and let $H\trianglelefteq K$ be a closed normal subgroup of $K$ such that $S\coloneqq K/H$ is finite, non-Abelian, and simple. Then $S\cong A_5$.
\end{lemma}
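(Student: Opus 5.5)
We will prove the statement by first removing the centre of $\mathsf{U}(2)$ and then invoking the classification of finite subgroups of $\mathsf{SO}(3)$.

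\emph{Step 1: $S$ is finite, so reduce to a finite quotient of $K$.} Since $S=K/H$ is finite, $H$ is open as well as closed in $K$. In particular $H$ contains the identity component $K^0$. Hence $S$ is a quotient of the finite group $K/K^0$.

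\emph{Step 2: pass to $\mathsf{PU}(2)\cong\mathsf{SO}(3)$.} Let $Z=\mathsf{U}(1)\mathbbm{1}$ be the centre of $\mathsf{U}(2)$, and let $\rho:\mathsf{U}(2)\to\mathsf{PU}(2)\cong\mathsf{SO}(3)$ be the quotient map. Then $K\cap Z$ is central in $K$. By \cref{prop:Central}, its image $(K\cap Z)H/H$ is central in $S$. Since $S$ is simple and non-Abelian, its centre is trivial, so $K\cap Z\leq H$. Therefore $S\cong (K/(K\cap Z))/(H/(K\cap Z))$ by \cref{Prop:3rdIso}. Now $K/(K\cap Z)\cong\rho(K)=:\bar K$ is a compact subgroup of $\mathsf{SO}(3)$. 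The image $\bar H=\rho(H)$ is compact, hence closed, and normal in $\bar K$. We conclude that $S\cong \bar K/\bar H$.

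\emph{Step 3: use the structure of closed subgroups of $\mathsf{SO}(3)$.} A closed subgroup $\bar K$ of $\mathsf{SO}(3)$ is one of the following:
\begin{enumerate}[(a)]
\item $\mathsf{SO}(3)$ itself;
\item $\mathsf{SO}(2)$ or $\mathsf{O}(2)$;
\item a finite group: cyclic, dihedral, $A_4$, $S_4$ or $A_5$.
\end{enumerate}
We treat each case in turn.

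In case (a), $\mathsf{SO}(3)$ is a connected simple Lie group. Its only closed normal subgroups are $1$ and itself, and $\bar K/1$ is infinite. Hence $\bar H=\bar K$ and $S$ is trivial, a contradiction.

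In case (b), $\bar K^0=\mathsf{SO}(2)$ and $\bar K/\bar K^0$ has order at most $2$. By Step 1, $\bar H\supseteq \bar K^0$, so $S$ is a quotient of a group of order at most $2$. This contradicts non-Abelianness.

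In case (c), the groups are cyclic or dihedral (both solvable), $A_4$ and $S_4$ (also solvable), or $A_5$. Every quotient of a solvable group is solvable, so a non-Abelian simple quotient can arise only from $A_5$. Since $A_5$ is simple, the quotient must be $A_5$ itself. Therefore $S\cong A_5$.

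\emph{Main obstacle.} The main technical point is justifying the classification of closed subgroups of $\mathsf{SO}(3)$, in particular the infinite ones. I would handle this via the identity component. If $\bar K$ is infinite, then $\bar K^0$ is a nontrivial connected closed subgroup, so it is either $\mathsf{SO}(2)$ or $\mathsf{SO}(3)$, because there are no two-dimensional connected subgroups. In the first case $\bar K$ normalises $\bar K^0$, so $\bar K\leq N(\mathsf{SO}(2))=\mathsf{O}(2)$. The finite subgroups are handled by the classical classification, as in Theorem 26.1 of Ref.~\cite{Dornhoff}. Step 1 then sidesteps the need to understand the continuous part of $\bar K$ any further.
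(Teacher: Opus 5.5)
Your proof is correct and follows the paper's argument essentially step for step. You quotient by the central subgroup $K\cap\mathsf{U}(1)$ via \cref{prop:Central} and \cref{Prop:3rdIso}, pass to a compact subgroup of $\mathsf{SO}(3)$, and force the identity component into $\bar H$ (you by noting a closed finite-index subgroup is open, the paper by the discreteness of the quotient topology on $S$), then use the classification of closed subgroups of $\mathsf{SO}(3)$ and solvability to leave only $A_5$. The only cosmetic differences are that you dispose of $\bar K=\mathsf{SO}(3)$ by its simplicity rather than by connectedness, and that you sketch why the infinite closed subgroups are $\mathsf{SO}(2)$, $\mathsf{O}(2)$ and $\mathsf{SO}(3)$, which the paper simply cites.
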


\begin{proof}
    Define $Z \coloneqq K\cap \mathsf{U}(1)$ to be the intersection of $K$ with the centre $\mathsf{U}(1)=\mathcal{Z}[\mathsf{U}(2)]$ of $\mathsf{U}(2)$. As $H$ is normal in $K$, it follows from \cref{prop:HK2} that $H\trianglelefteq ZH \leq K$. Since $Z$ is a subgroup of the centre of $K$, \cref{prop:Central} implies that $ZH/H\leq\mathcal{Z}(K/H) = \mathcal{Z}(S)$. But a non-Abelian simple group has trivial centre, so $\mathcal{Z}(S)=1$ and $ZH/H = 1$, which implies $Z\leq H$. As $Z$ is normal in $K$, we have $Z\trianglelefteq H\trianglelefteq K$, which implies that the quotient factors as $S = K/H \cong \overbar{K}/\overbar{H}$, where we have defined $\overbar{K} \coloneqq K/Z$ and $\overbar{H} \coloneqq H/Z$. Thus, $S$ is a quotient of $\overbar{K}$.

    Now, let $q:\mathsf{U}(2)\to \mathsf{U}(2)/\mathsf{U}(1) \cong \mathsf{SO}(3)$ be the quotient map. The restriction $\tilde{q}\coloneqq q|_K$ has kernel $Z$, so induces an isomorphism of topological groups $\overbar{K}  \equiv K/Z \cong \tilde{q}(K)$. Moreover, as $q$ is continuous, $\overbar{K}$ is compact. Similarly, $H$ is compact as it is closed in $K$, so the image $\overbar{H} \cong \tilde{q}(H)$ is a compact subgroup of $\overbar{K}$. As $\overbar{H}$ is closed in the compact Hausdorff group $\overbar{K}$, it follows that the quotient topology on the finite group $S\cong\overbar{K}/\overbar{H}$ is discrete.

    At this point we use the classification of the closed subgroups of $\mathsf{SO}(3)$: every closed subgroup is conjugate to one of $\mathbb{Z}_m$, $D_{2m}$, $T\cong A_4$, $O\cong S_4$, $I\cong A_5$, $\mathsf{SO}(2)$, $\mathsf{O}(2)$, or $\mathsf{SO}(3)$ \cite{Olive_2019, golubitsky-1988}. We now check the possibilities one-by-one.

    Firstly, suppose $\overbar{K}$ is one of the infinite closed subgroups of $\mathsf{SO}(3)$, let $\overbar{K}^\circ$ be the identity component of $\overbar{K}$, and let $p:\overbar{K}\to\overbar{K}/\overbar{H}\cong S$ be the quotient map. By definition, the quotient map $p$ is continuous, so $p(\overbar{K}^\circ)$ is a connected subgroup of $S$. However, $S$ is discrete so $p(\overbar{K}^\circ) = 1$, meaning $\overbar{K}^\circ \leq \ker p = \overbar{H}$. It follows that $S\cong \overbar{K}/\overbar{H} \cong (\overbar{K}/\overbar{K}^\circ)/(\overbar{H}/\overbar{K}^\circ)$, i.e. $S$ is a quotient of $\overbar{K}/\overbar{K}^\circ$. If $\overbar{K} \cong \mathsf{SO}(2)$ or $\mathsf{SO}(3)$ is connected, then $\overbar{K}^\circ\cong\overbar{K}$ and $\overbar{K}/\overbar{K}^\circ \cong 1$. On the other hand, if $\overbar{K}\cong \mathsf{O}(2)$ then $\overbar{K}/\overbar{K}^\circ\cong\mathbb{Z}_2$. Neither of these options is possible as $S$ is non-Abelian, therefore $\overbar{K}$ must be finite. 
    
    Of the finite closed subgroups of $\mathsf{SO}(3)$, the groups $\mathbb{Z}_m, D_{2m}, A_4$, and $S_4$ are all solvable, meaning they cannot have a non-Abelian simple quotient. The only remaining option is $A_5$, which is non-Abelian simple, meaning its only non-trivial quotient is $A_5$. Hence $S\cong\overbar{K}/\overbar{H}$ can only be $A_5$.
\end{proof}

From here, \cref{Lem:U2A5} follows straightforwardly from an inductive application of \cref{prop:Product}.

\begin{replemma}{Lem:U2A5}[$A_5$ exception]
    Let $K\leq \mathsf{U}(2)^n$ be compact, and let $H\trianglelefteq K$ be a closed normal subgroup of $K$ such that $S\coloneqq K/H$ is finite, non-Abelian, and simple. Then $S\cong A_5$.
\end{replemma}

\begin{proof}
    For $n>1$, the proof is by induction on $n$. By writing $K\leq \mathsf{U}(2)^n = A\times B$ with $A=\mathsf{U}(2)$, $B=\mathsf{U}(2)^{n-1}$ and applying \cref{prop:Product}, we have that $S\cong N/M$ is a quotient of a compact subgroup $N$ by a closed normal subgroup $M\trianglelefteq N$, where either $N\leq\mathsf{U}(2)$ or $N\leq \mathsf{U}(2)^{n-1}$. If it is the former then we are done; otherwise we have decreased $n$ by 1, so by induction we can reduce to $n=1$. Hence, by \cref{prop:U2} it follows that $S\cong A_5$.
\end{proof}

\section{Proof of \texorpdfstring{\cref{Lemma:GCompact}}{Lemma 2}}Finally, we prove that $\mathcal{G}$ and $\mathcal{N}$ have the topological properties required for the application of \cref{Thm:GeneralCase}.

\begin{replemma}{Lemma:GCompact}
    For any PLUS code $Q$, $\GSUB$ is a compact Lie group, and $\NSUB$ is a closed normal subgroup of $\GSUB$.
\end{replemma}

\begin{proof}
    Let $Q\cong \mathcal{A}\otimes \mathcal{B}$ be a PLUS code, and define a continuous map $\rho:\Aut Q\to \mathsf{PU}(\mathcal{A})$ as $\rho(m) = [U]$, where $U\in\mathsf{U}(\mathcal{A})$ is any unitary satisfying $V^\dagger m V = U\otimes w$ for some $w\in\mathsf{U}(\mathcal{B})$, and $[U]$ denotes the equivalence class of unitary matrices in $\mathsf{U}(\mathcal{A})$ differing from $U$ by a multiplicative phase. This map is well defined, for if $U\otimes w = U'\otimes w'$, then $U^\dagger U'\otimes \mathbbm{1} = \mathbbm{1}\otimes w (w')^\dagger.$ The left-hand side lies in $\End(\mathcal{A})\otimes \mathbbm{1}$, and the right-hand side in $\mathbbm{1}\otimes\End(\mathcal{B})$, so both must be scalar. Thus $U' = \lambda U$ for some $\lambda\in\mathsf{U}(1)$, meaning $[U'] = [U]$ in $\mathsf{PU}(\mathcal{A})$. We can then identify $\NSUB = \ker \rho$, meaning that $\NSUB$ is closed. Moreover, defining $\Gamma \coloneqq \{[U_g]\,|\,g\in\GLkF\}\subset \mathsf{PU}(\mathcal{A})$, we have $\GSUB = \rho^{-1}(\Gamma)$. As $\Gamma$ is finite, it is closed in $\mathsf{PU}(\mathcal{A})$; hence $\GSUB$ is closed in the compact group $\Aut Q$, and is therefore compact.
\end{proof}

\end{appendices}

\beginsupplement

\section{\texorpdfstring{$\GLmF$}{GL(m, 2)}-invariant classical codes}\label{Sec:GLMFInv}
The aim of this section is to explain the following classification, which follows from the results of Ref.~\cite{bardoe}.

\begin{proposition}[Ref.~\cite{bardoe}]\label{Lemma:Uniqueness}
    A binary linear code of length $n=2^m-1$, where $m\geq 5$ or $m=3$, has an automorphism group containing $\GLmF$ as a subgroup if and only if it is a punctured Reed-Muller code $\RM^*(r, m)$ or a shortened Reed-Muller code $\RM_*(r, m)$, where $0\leq r \leq m-1$. These codes satisfy
    \begin{equation}\label{Test}
        \Aut \RM^*(r, m) = \Aut \RM_*(r, m) = \begin{cases}
            \GLmF & 1\leq r\leq m-2, \\
            S_n & r = 0\text{ or } m-1.
        \end{cases}
    \end{equation}
\end{proposition}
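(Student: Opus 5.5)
The plan is to realise $\GLmF$ inside $S_n$ through its natural action on the $n=2^m-1$ points of the projective space $\mathrm{PG}(m-1,2)=\mathbb{F}_2^m\setminus\{0\}$. By Cooperstein's result (already used in the proof of \cref{Thm:PhantomBound}), for $m\geq5$ or $m=3$ every faithful action of $\GLmF$ on $2^m-1$ points is of this kind. More precisely, it is either the action on points or the action on hyperplanes. The two differ by the inverse-transpose automorphism, so they give equivalent permutation representations up to relabelling. Hence, after a coordinate permutation, we may identify coordinates with nonzero vectors $x\in\mathbb{F}_2^m$, with $\GLmF$ acting linearly. A linear code $C\leq\mathbb{F}_2^n$ whose automorphism group contains $\GLmF$ is then exactly a $\GLmF$-submodule of the permutation module $\mathbb{F}_2[\mathrm{PG}(m-1,2)]$.

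The core step is to classify these submodules. First I will adjoin a coordinate for $x=0$ to pass to the affine permutation module $\mathbb{F}_2[\mathbb{F}_2^m]$, viewed as the space of Boolean functions on $\mathbb{F}_2^m$. The projective module is recovered by either deleting the $0$ coordinate (puncturing) or restricting to functions vanishing at $0$ (shortening). The $\GLmF$-submodules of the function space are governed by degree filtrations. This is exactly the content of Bardoe--Sin, which determines the full submodule lattice of $\mathbb{F}_2[\mathrm{PG}(m-1,2)]$ and shows it consists only of the codes $\RM^*(r,m)$ and $\RM_*(r,m)$. I would invoke this directly rather than reprove it. The point to verify is that their submodule list, specialised to $q=2$ and projective points, contains nothing beyond these punctured and shortened Reed--Muller codes, including $\{0\}$ and the full space as the extreme cases. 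This translation of their lattice description is where I expect the main difficulty. Their lattice is indexed by tuples/ideals in a poset, and I must check that for $q=2$ it collapses to a chain matched with degrees $r$. The converse direction is easy: $\RM(r,m)$ is invariant under the affine group, so its puncturing and shortening at $0$ are invariant under the stabiliser $\GLmF$ of $0$.

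For the automorphism groups in \cref{Test}, I first handle the extremes. For $r=0$, $\RM^*(0,m)$ is the repetition code and $\RM_*(0,m)=0$. For $r=m-1$, the codes are the even-weight code and the full space (up to the convention for $\RM_*$). These are all $S_n$-invariant. For $1\leq r\leq m-2$, the automorphism group $A$ contains $\GLmF$, and the goal is to show $A=\GLmF$. I would argue using known maximal-overgroup facts. Since $\GLmF$ acts $2$-transitively on $2^m-1$ points, any overgroup is $2$-transitive. Classification of $2$-transitive groups of degree $2^m-1$ containing $\mathsf{PSL}_m(2)$ leaves, for $m\geq5$ or $m=3$, only $\GLmF$, $A_n$, and $S_n$; the exception $m=4$ is where $A_7$ appears. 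An $A_n$-invariant code is trivial, repetition, even-weight, or full. These are excluded because $\RM^*(r,m)$ for $1\leq r\leq m-2$ has dimension strictly between $1$ and $n-1$. The same holds for $\RM_*$, by duality $\RM_*(r,m)^\perp=\RM^*(m-1-r,m)$. Therefore $A=\GLmF$.
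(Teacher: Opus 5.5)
Your classification step is the paper's. You reduce to $\GLmF$-submodules of $\mathbb{F}_2^P$, using that the faithful degree-$(2^m-1)$ actions are the point and hyperplane actions, which have the same image in $S_n$. The paper gets that dichotomy from the parabolicity of subgroups of index $2^m-1$, not from the value of $\mu$ alone. You then quote Bardoe--Sin. The ``translation'' you leave open is what the paper's corollary supplies, in three pieces:
\begin{enumerate}
\item Bardoe--Sin work over $\mathbb{K}=\overline{\mathbb{F}}_2$, not over $\mathbb{F}_2$. For $q=2$ one has $\mathbb{K}^P=\mathbb{K}\mathsf{1}\oplus Y_P$ with $Y_P$ uniserial, $\mathcal{Y}_0\subset\cdots\subset\mathcal{Y}_{m-1}$.
\item The submodules of $\mathbb{K}^P$ are therefore the $2m$ spaces $\mathcal{Y}_r$ and $\mathbb{K}\mathsf{1}\oplus\mathcal{Y}_r$. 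There are no diagonal submodules, because the simple quotients $\mathcal{Y}_r/\mathcal{Y}_{r-1}$ have dimension $\binom{m}{r}>1$. (The paper asserts this list without stating the reason.)
\item One descends to $\mathbb{F}_2$ via $C=(\mathbb{K}\otimes_{\mathbb{F}_2}C)\cap\mathbb{F}_2^P$, which makes $C\mapsto\mathbb{K}\otimes_{\mathbb{F}_2}C$ injective. Combined with $\mathcal{Y}_r=\mathbb{K}\otimes\RM_*(r,m)$ and $\mathbb{K}\mathsf{1}\oplus\mathcal{Y}_r=\mathbb{K}\otimes\RM^*(r,m)$, this pins $C$ down.
\end{enumerate}

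Where you genuinely differ is in computing the automorphism groups for $1\le r\le m-2$.

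The paper takes the classical theorem $\Aut\RM(r,m)=\mathsf{AGL}_m(\mathbb{F}_2)$ and restricts to the stabiliser of the origin. This implicitly uses that $\RM(r,m)$ is the parity-check extension of $\RM^*(r,m)$, so every automorphism of the punctured code lifts to one of $\RM(r,m)$ fixing $0$. It then transfers the result to $\RM_*(r,m)$ via $\Aut C^\perp=\Aut C$ and $\RM_*(r,m)^\perp=\RM^*(m-1-r,m)$.

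You instead bound the overgroups of $\GLmF$ in $S_n$. $\Aut C$ is $2$-transitive of degree $2^m-1$, hence equal to $\GLmF$, $A_n$ or $S_n$. The classification of $A_n$-invariant codes (\cref{Prop:A8Codes}) plus a dimension count removes the last two.
\begin{itemize}
\item What your route buys: it treats $\RM^*$ and $\RM_*$ uniformly and needs no knowledge of $\Aut\RM(r,m)$. It even shows directly that every $\GLmF$-invariant code other than the four elementary ones has automorphism group exactly $\GLmF$.
\item What it costs: it depends on the CFSG-based classification of $2$-transitive groups. You should also record why nothing else of degree $2^m-1$ survives. Affine groups are impossible, since $2^m-1$ is never a proper prime power, and when it is prime, $p(p-1)<|\GLmF|$. $\mathsf{PSL}_3(\mathbb{F}_5)$ on $31$ points is too small to contain $\mathsf{GL}_5(\mathbb{F}_2)$.
\end{itemize}
The paper's route is elementary modulo a standard coding-theory fact.

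One correction. $A_7$ is $2$-transitive on $15$ points, but it is a subgroup, not an overgroup, of $\mathsf{GL}_4(\mathbb{F}_2)\cong A_8$. The case $m=4$ is excluded for a different reason: $\mu(A_8)=8$ allows non-projective faithful actions on $15$ points (e.g.\ natural on $8$, trivial on $7$). So it is your first reduction step that fails at $m=4$, not the overgroup step.
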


The values of $r$ for which the automorphism groups are the full permutation group $S_n$ correspond to the cases in which $\RM^*(r, m)$ and $\RM_*(r, m)$ reduce to `elementary' codes: $\RM^*(m-1, m) = \mathbb{F}_2^n$ is the full space; $\RM_*(0, m) = \{0\}$ is the trivial code; $\RM^*(0, m)= C^{\text{rep}}_n$ is the repetition code of length $n$; and $\RM_*(m-1, m)=\RM^*(0, m)^\perp = C^{\text{even}}_n$ is the code consisting of all even-weight vectors of length $n$.  

An outline of this section is as follows. In \cref{Sec:GLKFAction} we discuss the faithful permutation representations of $\GLmF$ on a set of size $n=2^m-1$. In \cref{Sec:RMCodes} we review classical Reed-Muller codes and their punctured and shortened variants, in particular showing that these latter codes have length $n$ and automorphism groups containing $\GLmF$. Finally, in \cref{Sec:Uniqueness} we explain how a special case of the results from Ref.~\cite{bardoe} implies that these codes are unique in this respect. Throughout this section, we write $\Omega = \mathbb{F}_2^m$ for the $m$-dimensional vector space over $\mathbb{F}_2$, define $P = \Omega\backslash\{0\}$ to be the set of all non-zero vectors in $\Omega$, and denote the total number of points in $P$ by $n=2^m-1$.

\subsection{\texorpdfstring{$\GLmF$}{GL(m, 2)} actions}\label{Sec:GLKFAction}
Recall that $\GLmF$ is defined as the group of invertible $m\times m$ matrices over $\mathbb{F}_2$. Every matrix $A\in\GLmF$ has a natural action on $P$, defined by sending $x\in P$ to $Ax$. As the zero vector in $\Omega$ is fixed by all $A$, this gives a bijection from $P$ to itself and hence defines a faithful permutation representation $s:\GLmF\to S_n$ of degree $n=|P|=2^m-1$. Since $\mu(\GLmF) = 2^m-1$ for $m\geq 5$ and $m=3$, this action is minimal in either case~\cite{cooperstein-1978}. 

We can obtain a second minimal permutation representation of $\GLmF$ by taking the dual (contragredient) of $s$, and writing $\bar{s}(A) \coloneqq s(A^{-\intercal})$, where $A^{-\intercal} = (A^{-1})^\intercal$. This representation is naturally interpreted as the action of $\GLmF$ on the $(m-1)$-dimensional hyperplanes of $P$, given the natural interpretation of $P$ as a projective space. The following result shows that this classification of minimal faithful permutation representations of $\GLmF$ is exhaustive.

\begin{proposition}[\cite{volta-2000}, Sec. 3.3]
    The only subgroups of $\mathsf{PSL}_m(\mathbb{F}_q)$ of index $(q^m-1)/(q-1)$ are parabolic, i.e. they are the point and hyperplane stabiliser groups. Hence, for $m\geq 5$ and $m=3$ the only minimal faithful permutation representations of $\GLmF\cong\mathsf{PSL}_m(\mathbb{F}_2)$ are $s$ and $\bar{s}$.
\end{proposition}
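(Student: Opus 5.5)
The plan is to reduce the claim to the classification of index-$n$ subgroups of $\GLmF$, where $n=(q^m-1)/(q-1)$ with $q=2$. A faithful permutation representation of degree $n$ need not be transitive, so we first handle orbits. Let $\phi:\GLmF\to S_n$ be faithful with $n=2^m-1$, and decompose $\{1,\dots,n\}$ into $\GLmF$-orbits $O_1,\dots,O_t$. The kernel of the action on $O_i$ is a normal subgroup. Since $\GLmF\cong\mathsf{PSL}_m(\mathbb{F}_2)$ is simple for $m\geq 3$, each orbit action is either trivial or faithful. Faithfulness of $\phi$ forces at least one orbit $O_i$ to carry a faithful action. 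Then $|O_i|\geq\mu(\GLmF)=2^m-1=n$ by Cooperstein's result (valid for $m\geq5$ and $m=3$), so $O_i$ is everything and the action is transitive.

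A transitive action of degree $n$ is equivalent to the coset action on $\GLmF/H$, where $H$ is a point stabiliser of index $n$. By the cited result of \cite{volta-2000}, $H$ is parabolic. Concretely, $H$ is either the stabiliser of a nonzero vector $x\in P$ or the stabiliser of a hyperplane.

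The two cases then match the two known representations. If $H=\Stab(x)$, the map $gH\mapsto gx$ is a $\GLmF$-equivariant bijection onto $P$, so $\phi$ is permutation-isomorphic to $s$. If $H$ stabilises a hyperplane $\ker(y^\intercal)$ with $y\neq0$, then $g$ fixes the hyperplane exactly when $g^{-\intercal}y=y$. This identifies the hyperplane action with $A\mapsto s(A^{-\intercal})=\bar s(A)$, with hyperplanes labelled by their normal vectors in $P$.

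Two small checks complete the argument.

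\emph{Equivalence up to relabelling.} Equivalence of permutation representations means conjugacy in $S_n$ after relabelling. This follows from the standard correspondence between transitive actions and conjugacy classes of stabilisers.

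\emph{Inequivalence of $s$ and $\bar s$.} If one wants "exactly two", note that point stabilisers and hyperplane stabilisers are not conjugate in $\GLmF$ for $m\geq3$; the inverse-transpose automorphism swaps them but is outer. This is why the hyperplane representation is a genuinely second one rather than a relabelling of $s$.

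The main obstacle is the subgroup classification itself, namely that every subgroup of index $(q^m-1)/(q-1)$ is parabolic. Here I would simply invoke \cite{volta-2000}, together with the simplicity of $\GLmF$ needed in the orbit argument. The exceptional case $m=4$ is excluded precisely because $\GLx\cong A_8$ has an additional index-$8$ subgroup, $A_7$; since $8<15$, this also breaks minimality.
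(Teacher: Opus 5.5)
Your proposal is correct and follows the same route as the paper, which gives no argument beyond citing \cite{volta-2000} for the parabolicity of index-$(q^m-1)/(q-1)$ subgroups and Cooperstein's value $\mu(\GLmF)=2^m-1$. You additionally spell out the steps the paper leaves implicit: transitivity of any minimal-degree faithful action via simplicity of $\GLmF$, and the identification of the hyperplane action with $\bar s$ through $g\cdot\ker(y^\intercal)=\ker\big((g^{-\intercal}y)^\intercal\big)$. Both steps are sound.
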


It is straightforward to see that a classical code $C$ is invariant under $\bar{s}$ if and only if it is invariant under $s$. Writing $s(A)\cdot C$ to denote the permutation action of $s(A)$ on the codewords of $C$, invariance under $s$ means that $s(A)\cdot C = C$  for all $A\in\GLmF$. As $A\mapsto A^{-\intercal}$ is a bijection on $\GLmF$, this condition is equivalent to $s(A^{-\intercal})\cdot C = \bar{s}(A)\cdot C = C$ for all $A\in \GLmF$, so $C$ is invariant under $\bar{s}$. It follows that a classical code $C$ of length $n=2^m-1$ has $\GLmF\leq\Aut C$ if and only if it is invariant under the standard action of $\GLmF$. Hence, to classify codes of length $n$ with $\GLmF\leq\Aut C$ and $m\geq 5$ or $m=3$, it is sufficient to classify those invariant under $s$. We now describe the punctured and shortened Reed-Muller codes --- both of which have this property --- before explaining how the result of \cite{bardoe} implies that they are unique.

\subsection{Classical Reed-Muller codes}\label{Sec:RMCodes}
The classical Reed-Muller codes are a family of codes based on the evaluation of polynomials over $\mathbb{F}_2$. They are notable for their large automorphism groups, and efficient encoding and decoding algorithms~\cite{macwilliams_theory_1977}. We now briefly review the construction of these codes and explain how the punctured and shortened variants are built from them.

\subsubsection{Reed-Muller codes}
Let $\mathbb{F}_2^\Omega \coloneqq \{f\,|\, f:\Omega\to \mathbb{F}_2\}$ be the set of all functions from $\Omega$ to $\mathbb{F}_2$. A function $f:\Omega\to \mathbb{F}_2$ is completely defined by its evaluation vector $\Eval(f)$, which lists the values attained by $f$ on each element of $\Omega$; given some ordering $p_1, p_2, \ldots, p_{2^m}$ of the elements of $\Omega$, we write
\begin{equation}
    \Eval(f) \coloneqq \big(f(p_1), f(p_2), \ldots, f(p_{2^m})\big).
\end{equation}
For constants $a, b\in\mathbb{F}_2$ and functions $f, g\in \mathbb{F}_2^\Omega$, we have $\Eval(af+bg)=a\Eval(f)+b\Eval(g)$, so $\mathbb{F}_2^\Omega$ is a $2^m$-dimensional vector space over $\mathbb{F}_2$. A basis of this space is given by the set of squarefree monomials $x_1^{a_1}\cdots x_m^{a_m}$, where $x_i$ denotes the $i^{\text{th}}$ coordinate of $x=(x_1, \ldots, x_m)\in\Omega$, and $a_i\in \{0, 1\}$. (Note that $x^2_i = x_i$ for all $x_i\in \mathbb{F}_2$, so monomials containing higher-order factors play no role.) Hence, any function $f:\Omega\to \mathbb{F}_2$ can be uniquely expressed as a polynomial
\begin{equation}\label{Eq:Poly}
    f(x) = \sum_{a\in \{0, 1\}^m} g(a) x_1^{a_1}\cdots x_m^{a_m}
\end{equation}
for some coefficients $g(a)\in\mathbb{F}_2$. Equivalently, $\mathbb{F}_2^\Omega$ is naturally isomorphic to the Boolean coordinate ring in $m$ variables, denoted $B_m \coloneqq \mathbb{F}_2[x_1, \ldots, x_m]/\langle x_1^2-x_1, \ldots, x_m^2 - x_m\rangle$. As a vector space, $B_m$ decomposes as the direct sum $B_m = \bigoplus_{s=0}^m (B_m)_s$ of the $\dim (B_m)_s = \binom{m}{s}$-dimensional linear subspaces
\begin{equation}
    (B_m)_s = \Span_{\mathbb{F}_2}\{x_{i_1}\cdots x_{i_s}\,|\,1\leq i_1<\cdots < i_s\leq m\} \subseteq \mathbb{F}_2^\Omega
\end{equation}
spanned by the squarefree monomials of degree $s$.

\begin{definition}
    The classical Reed-Muller code of type $(r, m)$, where $0\leq r\leq m$, is the span of the evaluation vectors of all polynomials of degree at most $r$:
    \begin{equation}
        \RM(r, m) \coloneqq \big\{\Eval(f)\,\big|\, f:\Omega\to \mathbb{F}_2, \deg(f)\leq r\big\} = \bigoplus_{s=0}^r (B_m)_s  \subseteq \mathbb{F}_2^\Omega.
    \end{equation}
    The Reed-Muller code for $r=-1$ is defined to be the trivial subspace, $\RM(-1, m)\coloneqq \{\mathsf{0}\}$, where $\mathsf{0}\in\mathbb{F}_2^\Omega$ is the zero function.
\end{definition}

Since $\RM(r, m)$ is the direct sum of all $(B_m)_s$ with $s\leq r$, it has dimension $k(r, m) = \sum_{s=0}^r\binom{m}{s}$. A logical message $\mu\in \mathbb{F}_2^k$ is used to define the coefficients of a polynomial $f_\mu\in\mathbb{F}_2^\Omega$, which is canonically identified with its evaluation vector to yield an encoded message of length $2^m$. The distance $d(r, m)$ of this code is determined by the polynomials of highest degree, and it may be shown that $d(r, m) = 2^{m-r}$ \cite{macwilliams_theory_1977}. In summary, $\RM(r, m)$ has type $[2^m, \sum_{s=0}^r\binom{m}{s}, 2^{m-r}]$. 

From the definition it follows that, for fixed $m$, the Reed-Muller codes $\RM(r, m)$ form a family ordered by strict inclusion 
\begin{equation}
    \RM(-1, m) = \{\mathsf{0}\} \subset \RM(0, m) = C^{\text{rep}}_{2^m} \subset \RM(1, m)\subset \cdots \subset \RM(m-1, m) = C^{\text{even}}_{2^m} \subset \RM(m, m) = \mathbb{F}_2^\Omega,
\end{equation}
where $C^{\text{rep}}_{2^m}$ is the repetition code and $C^{\text{even}}_{2^m}$ the even-weight code. Another important property of Reed-Muller codes is that $\RM(r, m)^\perp = \RM(m-r-1, m)$, meaning that the duality operation acts to reorder the codes in the Reed-Muller family (note that the definition of $\RM(-1, m)$ ensures that this relation holds for all $-1\leq r\leq m$).

For any affine linear transformation $T:\mathbb{F}_2^m\to\mathbb{F}_2^m$ sending $x\mapsto Ax+b$ for some $A\in \mathsf{GL}_m(\mathbb{F}_2)$ and $b\in \mathbb{F}_2^m$, the Boolean function $(Tf)(x) = f(T^{-1}x)$ is a polynomial of the same degree as $f$. It follows that $T$ permutes the codewords of $\RM(r, m)$, that is, it is an automorphism of this code. In fact, it can be shown that, for $1\leq r\leq m-2$, this is the only class of transformations under which $\RM(r, m)$ is invariant, meaning $\Aut \RM(r, m) = \mathsf{AGL}_m(\mathbb{F}_2)$ is the affine general linear group over $\mathbb{F}_2$ \cite{macwilliams_theory_1977,berger-1993}. On the other hand, when $r=-1, 0, m-1$ or $m$, the codes have the maximal automorphism group $\Aut \RM(r, m)= S_{2^m}$.

\subsubsection{Punctured Reed-Muller codes}
An important variant of $\RM(r, m)$ can be obtained by `puncturing' each codeword by removing the coordinate corresponding to the zero vector in $\Omega$, or equivalently by replacing $\Omega$ with the projective space $P$ in the definition of $\RM(r, m)$. Like the elements of $\mathbb{F}_2^\Omega$, functions $f:P\to \mathbb{F}_2$ are completely characterised by their evaluation vectors, which are strings giving the value of $f$ on each of the $n=2^m-1$ elements of $P$. These functions can similarly be expanded in the basis of monomials $x_1^{a_1}\cdots x_m^{a_m}$. However, as $P$ contains only $n$ points whereas there are a total of $2^m$ such monomials, this basis is overcomplete, meaning that one term can be omitted. This can also be seen by noting that the function $\prod_{i=1}^m (1+x_i) = 0$ vanishes identically for all $x\in P$, giving a non-trivial linear relationship between the monomials. By convention, the top-degree monomial $x_1\cdots x_m$ is excluded from the basis set and a function $f:P\to \mathbb{F}_2$ is uniquely expressed as
\begin{equation}
    f(x) = \sum_{\substack{a\in \{0, 1\}^m \\ a\neq (1, \ldots, 1)}} g(a) x_1^{a_1}\cdots x_m^{a_m}.
\end{equation}
This decomposition allows us to identify $\mathbb{F}_2^P \cong B_m^* \coloneqq B_m/\langle \prod_{i=1}^m (1+x_i) \rangle = \bigoplus_{s=0}^{m-1} (B_m^*)_s$, where $(B_m^*)_s$ is the linear span of all degree-$s$ polynomials in $\mathbb{F}_2^P$.

\begin{definition}
    The punctured Reed-Muller code of type $(r, m)$, where $0\leq r\leq m-1$, is obtained from $\RM(r, m)$ by deleting the coordinate corresponding to evaluation at the point $0\in \Omega$, that is,
    \begin{equation}
    \RM^*(r, m) = \mathrm{Punct}_{0}\RM(r, m) = \big\{\Eval(f)\,\big|\, f:P\to \mathbb{F}_2, \deg(f)\leq r\big\} = \bigoplus_{s=0}^r (B_m^*)_s  \subseteq \mathbb{F}_2^P.
\end{equation}
\end{definition}

Punctured Reed-Muller codes have type $[2^m-1, \sum_{s=0}^r \binom{m}{s}, 2^{m-r}-1]$ and satisfy the inclusion relations
\begin{equation}\label{Eq:PuncturedChain}
    C^{\text{rep}}_n = \RM^*(0, m)\subset \RM^*(1, m)\subset \cdots \subset \RM^*(m-2, m) \subset \RM^*(m-1, m) = \mathbb{F}_2^P.
\end{equation}
Since $\RM^*(r, m)$ is obtained by puncturing $\RM(r, m)$ at the coordinate indexed by $0\in\mathbb{F}_2^m$, an affine permutation $T:x\mapsto Ax+b$ descends to a permutation of the punctured coordinates if and only if it fixes the origin, meaning $b=0$. Correspondingly, it may be shown that $\Aut \RM^*(r, m) = \mathsf{GL}_m(\mathbb{F}_2)$ for $1\leq r\leq m-2$, while $\Aut \RM^*(0, m) = \Aut \RM^*(m-1, m) = S_n$.

\subsubsection{Shortened Reed-Muller codes}
Each punctured Reed-Muller code contains the repetition code $\RM^*(0, m) = C^{\text{rep}}_n = \mathbb{F}_2\mathsf{1}$, where $\mathsf{1}\in\mathbb{F}_2^P$ is the constant function. The orthogonal complement of $C^{\text{rep}}_n$ in $\RM^*(r, m)$ is known as the shortened Reed-Muller code $\RM_*(r, m)$,
\begin{equation}
    \RM^*(r, m) = \mathbb{F}_2\mathsf{1}\oplus\RM_*(r, m).
\end{equation}
Shortened Reed-Muller codes can also be obtained directly from the code $\RM(r, m)$ by first restricting to those functions which vanish at zero, then puncturing on this coordinate:
\begin{equation}
    \RM_*(r, m) = \mathrm{Punct}_{0}\big\{\Eval(f)\,\big|\, f:\Omega\to \mathbb{F}_2, \deg(f)\leq r, f(0)=0\big\} = \bigoplus_{s=1}^r (B_m^*)_s 
\end{equation}
From either of these definitions it can be shown that the codes $\RM_*(r, m)$ are linear codes of type $[2^m-1, \sum_{s=1}^r\binom{m}{s}, 2^{m-r}]$, and that they fit into the chain of inclusions
\begin{equation}\label{Eq:ShortenedChain}
    \{\mathsf{0}\} = \RM_*(0, m)\subset \RM_*(1, m)\subset \cdots \subset \RM_*(m-2, m) \subset \RM_*(m-1, m) = C^{\text{even}}_n,
\end{equation}
where $C^{\text{even}}_n = [C^{\text{rep}}_n]^\perp$ is the even-weight code.

As a consequence of a general relationship between the punctured and shortened variants $C^*$ and $C_*$ of any classical code $C$, namely that $(C_*)^\perp = (C^\perp)^*$ and $(C^*)^\perp = (C^\perp)_*$ \cite{Huffman_Pless_2003}, the punctured and shortened Reed-Muller codes are related by taking duals:
\begin{subequations}\label{Eq:RMDuals}
\begin{align}
    \RM^*(r, m)^\perp ={}& \RM_*(m-1-r, m),\\
    \RM_*(r, m)^\perp ={}& \RM^*(m-1-r, m).
\end{align}
\end{subequations}
Since the inner product is invariant under permutations, the automorphism group of a code is unchanged by taking the dual, that is, $\Aut C^\perp = \Aut C$. It follows that $\Aut\RM_*(r, m) = \Aut\RM^*(m-r-1, m)^\perp = \mathsf{GL}_m(\mathbb{F}_2)$ for $1\leq r\leq m-2$, while $\Aut\RM_*(r, m) = S_n$ if $r=0$ or $m-1$.

\subsection{Uniqueness of Reed-Muller codes}\label{Sec:Uniqueness}
Let $\mathbb{F}$ be a field and $G$ be a group. A vector space $\mathbb{F}^n$ is an $\mathbb{F}G$-(permutation) module when it carries a linear permutation action of $G$, and an $\mathbb{F}G$-submodule of $\mathbb{F}^n$ is a linear subspace $C\leq\mathbb{F}^n$ that is invariant with respect to this action. In other words, $C$ is a linear code over $\mathbb{F}$ such that $G\leq \Aut C$. In particular, classifying the $\mathbb{F}G$ submodules of $\mathbb{F}^n$ is exactly equivalent to classifying the $G$-invariant codes of length $n$. Ref.~\cite{bardoe} classifies the $\mathbb{K}\GLmF$-submodules of $\mathbb{K}^n$ under the standard action of $\mathsf{GL}_m(\mathbb{F}_q)$, where $n=(q^m-1)/(q-1)$ for some prime power $q=p^t$ and $\mathbb{K}$ is the algebraic closure of $\mathbb{F}_q$. Specialising to the binary case where $q=2$, the main theorem of Ref.~\cite{bardoe} has the following statement.

\begin{proposition}[\cite{bardoe}, Theorem A]\label{Theorem:Bardoe}
    Let $\mathbb{K}\equiv \overline{\mathbb{F}}_2 = \bigcup_{t=1}^\infty \mathbb{F}_{2^t}$ be the algebraic closure of $\mathbb{F}_2$, let $G=\GLmF$, and consider the natural action $s$ of $G$ on $P \coloneqq \mathbb{F}_2^m\backslash\{0\}$. The space $\mathbb{K}^P = \{f\,|\,f:P\to \mathbb{K}\}$ is a $\mathbb{K}G$-module under this action, and decomposes as $\mathbb{K}^P = \mathbb{K}\mathsf{1}\oplus Y_P$, where $\mathbb{K}\mathsf{1}$ is the space of constant functions on $P$ and $Y_P$ is the kernel of the map $\mathbb{K}^P\to \mathbb{K}$, $f\mapsto n^{-1}\sum_{p\in P}f(p)$. The $\mathbb{K}G$-submodule lattice of $Y_P$ is isomorphic to the set $\mathcal{H} = \{0, 1, 2, \ldots, m-1\}$ with the natural order $0< 1 < \cdots < m-1$. In other words, each submodule $\mathcal{Y}_r$ of $Y_P$ is labelled by an element $r\in\mathcal{H}$, and the $\mathcal{Y}_r$ are ordered by strict inclusion as
    \begin{equation}
        0 = \mathcal{Y}_0 \subset \mathcal{Y}_1 \subset \cdots \subset \mathcal{Y}_{m-1} = Y_P.
    \end{equation}
    Moreover, the $\mathcal{Y}_r$ are given explicitly by $\mathcal{Y}_r = Y_P\cap \mathcal{F}_r$, where 
    \begin{equation}
         \mathcal{F}_r = \bigoplus_{s=0}^r\Span_{\mathbb{K}}\{x_{i_1}\cdots x_{i_s}\,|\,1\leq i_1<\cdots < i_s\leq m\} \subseteq \mathbb{K}^P
    \end{equation}
    and $\mathbb{K}^P$ has been identified with $\mathbb{K}[x_1, \ldots, x_m]/\langle \prod_i(1+x_i), x_1^2-x_1, \ldots, x_m^2-x_m\rangle$.
\end{proposition}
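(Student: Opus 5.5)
The natural route is to diagonalise a cyclic subgroup of $G=\GLmF$ and reduce the submodule classification to a combinatorial closure problem on exponents, following the strategy of Bardoe and Sin. Identify $\mathbb{F}_2^m$ with $\mathbb{F}_{2^m}$, so that $P=\mathbb{F}_{2^m}^\times$. Multiplication by a primitive element $\omega$ is then a \emph{Singer cycle} $c\in G$ acting regularly on $P$. Since $n=2^m-1$ is odd and $\mathbb{K}$ contains all $n$-th roots of unity, $\mathbb{K}\langle c\rangle$ is semisimple. Hence $\mathbb{K}^P$ splits into $n$ one-dimensional eigenspaces spanned by the functions $e_j(x)=x^j$, $j\in\mathbb{Z}/n$, viewed as $\mathbb{K}$-valued functions on $P$ via a fixed embedding $\mathbb{F}_{2^m}\subseteq\mathbb{K}$. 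Here $e_0=\mathsf{1}$, and $Y_P$ is spanned by $e_j$ with $j\neq 0$, because $\sum_{x\in P}x^j=0$ for $j\not\equiv 0$.

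Every $\mathbb{K}G$-submodule $M\subseteq Y_P$ is in particular $c$-stable, so $M=\bigoplus_{j\in J}\mathbb{K}e_j$ for some $J\subseteq \mathbb{Z}/n\setminus\{0\}$. The classification therefore reduces to determining which index sets $J$ give $G$-stable spans.

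Next I relate the eigenfunctions to the grading used in the statement. Writing $x=\sum_i x_i\beta_i$ in an $\mathbb{F}_2$-basis and expanding $x^j=\prod_{\ell}(x^{2^\ell})^{j_\ell}$ over the binary digits $j_\ell$ of $j$ shows that $e_j$ is a polynomial in the coordinate functions $x_i$ of degree exactly $\mathrm{wt}_2(j)$, the binary digit sum. A dimension count, $\#\{j\neq 0:\mathrm{wt}_2(j)\le r\}=\sum_{s=1}^r\binom{m}{s}$, then gives
\begin{equation}
\mathcal{Y}_r \coloneqq Y_P\cap\mathcal{F}_r=\bigoplus_{0<\mathrm{wt}_2(j)\le r}\mathbb{K}e_j .
\end{equation}
These spaces are clearly $G$-stable, since a linear substitution does not raise degree. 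This settles existence and the explicit description.

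The real content is uniqueness: every $G$-stable $J$ has the form $\{j\neq 0:\mathrm{wt}_2(j)\le r\}$. Given the upper bound from degree preservation, it suffices to prove the following claim.

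\textbf{Claim.} For every $j\neq 0$, the $\mathbb{K}G$-module generated by $e_j$ contains every $e_i$ with $0<\mathrm{wt}_2(i)\le\mathrm{wt}_2(j)$.

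Granting the claim, $J$ is downward closed in weight, so $J$ must be the full set of $j$ with $0<\mathrm{wt}_2(j)\le r$, where $r=\max_{j\in J}\mathrm{wt}_2(j)$. Different values of $r$ give different modules, so the submodule lattice is exactly the chain $0=\mathcal{Y}_0\subset\cdots\subset\mathcal{Y}_{m-1}=Y_P$.

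I would prove the claim in two moves.

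\textbf{(a) Same-weight, different digit positions.} I would act by $\mathbb{F}_2$-linear maps that are not $\mathbb{F}_{2^m}$-linear, for instance $g_a:x\mapsto x+a\,\mathrm{Tr}(x)$ or similar transvections. I would then expand $(g_a x)^j=\prod_\ell (x^{2^\ell}+a^{2^\ell}\mathrm{Tr}(x))^{j_\ell}$. Projecting onto $c$-eigenspaces, by averaging over $\langle c\rangle$, lets me read off which $e_i$ appear. By Lucas-type digit arguments, the surviving exponents are obtained from $j$ by moving or merging binary digits.

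\textbf{(b) Weight reduction.} A digit collision $2^\ell+2^\ell=2^{\ell+1}$ lowers the weight by one. I would choose parameters $a$ so that some weight-$(\mathrm{wt}_2(j)-1)$ term has nonzero coefficient, then induct downward on weight.

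I expect this coefficient computation to be the main obstacle. One must show that for suitable $a\in\mathbb{F}_{2^m}$ the relevant coefficient, a polynomial in $a$ and its Frobenius conjugates, is not identically zero. It must also be checked that digit rotations, which arise from the Frobenius-twisted pieces $x^{2^\ell}$, let one reach \emph{every} exponent pattern of a given weight, not just one orbit. The case analysis is also where small $m$, such as $m=4$, could behave differently, but the statement is only needed for $m\ge5$ or $m=3$.
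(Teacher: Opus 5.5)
Your reduction is sound. Because $n$ is odd, the Singer cycle $c$ acts semisimply and multiplicity-freely on $\mathbb{K}^P$, so every submodule of $Y_P$ has the form $\bigoplus_{j\in J}\mathbb{K}e_j$. The dimension count correctly gives $\mathcal{Y}_r=\bigoplus_{0<\mathrm{wt}_2(j)\le r}\mathbb{K}e_j$. Your expansion only shows degree \emph{at most} $\mathrm{wt}_2(j)$, but that is all the count needs. The chain then follows from your Claim.

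The gap is that the Claim is never proved, and it carries all the weight. Existence and $G$-stability of the $\mathcal{Y}_r$ are the easy half of the statement; the Claim is the entire uniqueness assertion. For it you offer only a plan, and you explicitly leave open two things: whether the relevant coefficients survive (different pairs $(S,\ell)$ could a priori cancel every lower-weight term), and whether all same-weight exponents are reachable. For comparison, the paper gives no proof of this statement at all. It imports Theorem A of Bardoe--Sin, whose description of the submodules is in terms of coordinate monomials; your Singer-eigenbasis route is a different one.

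Your plan does close, using the very transvection you suggest, provided $a\neq0$ and $\operatorname{Tr}(a)=0$ (otherwise $x\mapsto x+a\operatorname{Tr}(x)$ sends $a$ to $0$). Then $t_a$ is an involution in $G$. Since $\operatorname{Tr}(x)\in\mathbb{F}_2$,
\[
t_a\cdot e_j=x^j+\operatorname{Tr}(x)\big((x+a)^j+x^j\big)=e_j+\sum_{\emptyset\neq S\subseteq\mathrm{supp}(j)}\,\sum_{\ell=0}^{m-1}a^{j_S}\,e_{j-j_S+2^\ell},\qquad j_S=\sum_{k\in S}2^k,
\]
with indices taken mod $n$. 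Each term has weight at most $\mathrm{wt}_2(j)-|S|+1$, with equality iff no carry occurs.

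\emph{Digit moves.} Take $i=j-2^k+2^{\ell_0}$ with $k\in\mathrm{supp}(j)$ and $\ell_0\notin\mathrm{supp}(j)$. The only contributing pair is $(\{k\},\ell_0)$, so the $e_i$-coefficient is $a^{2^k}\neq0$. The $c$-eigenprojector lies in $\mathbb{K}G$, so $e_i\in\mathbb{K}G\,e_j$. Iterating such moves reaches every exponent of the same weight.

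\emph{Weight drop.} Take $j=2^w-1$ and $i=2^{w+1}-4$ with $2\le w\le m-1$. The only contributing pair is $(\{0,1\},w)$, plus $(\{0\},1)$ when $w=2$. The coefficient is therefore $a^3$, respectively $a+a^3=a(1+a)^2$. This is nonzero for a suitable $a$ in the kernel of the trace, which has $2^{m-1}\ge4$ elements because $w\le m-1$ forces $m\ge3$.

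With these two computations your downward induction goes through uniformly in $m$. In particular nothing special happens at $m=4$: the restriction to $m\ge5$ or $m=3$ in the paper comes from Cooperstein's minimal-degree result, not from the module structure.
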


\cref{Lemma:Uniqueness} follows as a consequence of \cref{Theorem:Bardoe} via the following corollary. 

\begin{corollary}
    The only binary linear codes of length $n=2^m-1$ with automorphism group containing $\GLmF$ are the punctured Reed-Muller codes $\RM^*(r, m)$, and the shortened Reed-Muller codes $\RM_*(r, m)$, where $r=0, 1,\ldots, m-1$.
\end{corollary}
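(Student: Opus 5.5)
We will deduce the corollary from \cref{Theorem:Bardoe} by extension and restriction of scalars. Let $C\leq\mathbb{F}_2^P$ be a binary linear code of length $n=2^m-1$ with $\GLmF\leq\Aut C$. By the discussion in \cref{Sec:GLKFAction} (the classification of minimal faithful permutation representations as $s$ or $\bar{s}$, and the fact that invariance under $\bar{s}$ is equivalent to invariance under $s$), we may assume $C$ is invariant under the standard action $s$. Then $C_{\mathbb{K}}\coloneqq \mathbb{K}\otimes_{\mathbb{F}_2}C\leq\mathbb{K}^P$ is a $\mathbb{K}G$-submodule. Conversely, $C = C_{\mathbb{K}}\cap\mathbb{F}_2^P$, since a subspace defined over $\mathbb{F}_2$ is recovered from its extension by taking $\mathbb{F}_2$-rational points (choose an $\mathbb{F}_2$-basis in reduced row echelon form; rationality of a vector forces rational coefficients). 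So it suffices to list the $\mathbb{K}G$-submodules of $\mathbb{K}^P$ and intersect with $\mathbb{F}_2^P$.

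Since $n$ is odd, $n^{-1}$ exists in characteristic 2 and $\mathbb{K}^P=\mathbb{K}\mathsf{1}\oplus Y_P$ is a direct sum of submodules, where $Y_P$ is the even-weight (sum-zero) space. A submodule $M\leq\mathbb{K}^P$ is determined as follows. Let $\epsilon:f\mapsto n^{-1}\sum_p f(p)\mathsf{1}$ be the $G$-equivariant projection onto $\mathbb{K}\mathsf{1}$; it commutes with $G$, so $M$ is stable under $\epsilon$ and $1-\epsilon$. Hence $M=\epsilon(M)\oplus(1-\epsilon)(M)$ with $\epsilon(M)\in\{0,\mathbb{K}\mathsf{1}\}$ and $(1-\epsilon)(M)=\mathcal{Y}_r$ for some $r$ by \cref{Theorem:Bardoe}. (Stability under $\epsilon$ holds because $\epsilon$ is a polynomial in the averaging operator, or more directly: $\epsilon(f)\in M$ iff the sum condition; I would verify this by noting $\sum_{g\in G} g\cdot f$ is a nonzero multiple of $\epsilon(f)$ when $|G|\cdot$ orbit-counting is odd, and otherwise argue via the direct sum and the fact that $\mathbb{K}\mathsf{1}$ and $Y_P$ share no composition factor \emph{as submodule tops}; this step is the main thing to check carefully, and if it fails I would instead use that $\mathbb{K}\mathsf{1}\not\subseteq Y_P$ and that every submodule of $\mathbb{K}^P$ not contained in $Y_P$ contains $\mathsf{1}$ by applying the trace/sum map.) Thus the submodules are exactly $\mathcal{Y}_r$ and $\mathbb{K}\mathsf{1}\oplus\mathcal{Y}_r$ for $0\leq r\leq m-1$.

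Finally identify these over $\mathbb{F}_2$. Since $\mathcal{F}_r$ is spanned by monomials with $\mathbb{F}_2$ coefficients, $\mathcal{F}_r\cap\mathbb{F}_2^P=\RM^*(r,m)$, and $Y_P\cap\mathbb{F}_2^P=C^{\text{even}}_n$, so $\mathcal{Y}_r\cap\mathbb{F}_2^P$ is the even-weight subcode of $\RM^*(r,m)$, which is the complement of $\mathbb{F}_2\mathsf{1}$ in $\RM^*(r,m)$ and equals $\RM_*(r,m)$ (the codewords vanishing at $0$ before puncturing have even weight, by counting, and dimensions match). Likewise $(\mathbb{K}\mathsf{1}\oplus\mathcal{Y}_r)\cap\mathbb{F}_2^P=\mathbb{F}_2\mathsf{1}\oplus\RM_*(r,m)=\RM^*(r,m)$. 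Since every invariant $C$ is one of these rational points, $C$ is $\RM^*(r,m)$ or $\RM_*(r,m)$; conversely these codes are invariant by \cref{Sec:RMCodes}, completing the corollary.
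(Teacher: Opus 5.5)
Your argument follows the paper's route: extend scalars to $\mathbb{K}$, use $C=(\mathbb{K}\otimes_{\mathbb{F}_2}C)\cap\mathbb{F}_2^P$, invoke \cref{Theorem:Bardoe}, and read off the $\mathbb{F}_2$-rational points. Your identification $\mathcal{Y}_r\cap\mathbb{F}_2^P=\RM^*(r,m)\cap C^{\text{even}}_n=\RM_*(r,m)$ is correct, and it justifies a point the paper only asserts.

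The one step that does not hold up as written is the splitting of an arbitrary submodule $M\leq\mathbb{K}\mathsf{1}\oplus Y_P$. The paper also skips this step, simply stating that the submodules of $\mathbb{K}^P$ are the $\mathcal{Y}_r$ and $\mathbb{K}\mathsf{1}\oplus\mathcal{Y}_r$. Commuting with $G$ does not make $\epsilon$ preserve every submodule: in $\mathbb{K}\oplus\mathbb{K}$ with trivial action, the diagonal is a submodule that neither projection preserves. Your first fallback also fails. Since $\sum_{g\in G}g\cdot\delta_p=|G_p|\,\mathsf{1}$ and $|G_p|=|G|/n$ is even, $\sum_{g\in G}g$ acts as zero on $\mathbb{K}^P$. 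Your last fallback is incomplete: a nonzero sum map on $M$ only gives a surjection $M\to\mathbb{K}$, not the vector $\mathsf{1}\in M$.

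Either of two short repairs works.

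(1) Average over a Singer cycle $\langle c\rangle\leq\GLmF$, i.e.\ multiplication by a generator of $\mathbb{F}_{2^m}^\times$. It has odd order $n$ and acts regularly on $P$, so $\sum_{j=0}^{n-1}c^j f=\big(\sum_{p}f(p)\big)\mathsf{1}=n\,\epsilon(f)$. Hence $\epsilon=n^{-1}\sum_j c^j$ lies in the image of $\mathbb{K}G$, and every submodule is $\epsilon$-stable. This is exactly what your trace-map fallback needs.

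(2) Make your composition-factor remark concrete. Suppose $M\not\subseteq Y_P$ and $\mathsf{1}\notin M$. Then $1-\epsilon$ maps $M$ injectively onto a submodule $\mathcal{Y}_r$ of $Y_P$, and it fixes $M\cap Y_P=\mathcal{Y}_s$ pointwise. This gives $\mathcal{Y}_r/\mathcal{Y}_s\cong M/(M\cap Y_P)\cong\epsilon(M)=\mathbb{K}\mathsf{1}$. That is impossible, because $\dim(\mathcal{Y}_r/\mathcal{Y}_s)=\sum_{i=s+1}^{r}\binom{m}{i}$ is either $0$ or at least $m\geq 2$. Hence every submodule either lies in $Y_P$ or contains $\mathsf{1}$, and in the latter case $M=\mathbb{K}\mathsf{1}\oplus(M\cap Y_P)$ by the modular law.

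With either fix in place, the rest of your proof goes through.
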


\begin{proof}
    Extension of scalars along $\mathbb{F}_2\hookrightarrow\mathbb{K}$ identifies $\mathbb{K}\otimes_{\mathbb{F}_2}\mathbb{F}_2^P$ with $\mathbb{K}^P$, and the map $v\mapsto 1\otimes_{\mathbb{F}_2}v$ injectively embeds $\mathbb{F}_2^P$ into $\mathbb{K}^P$. Moreover, for a submodule $C\leq\mathbb{F}_2^P$, the map $\phi:C\mapsto \mathbb{K}\otimes_{\mathbb{F}_2}C \subseteq \mathbb{K}^P$ is injective, since $C = \phi(C)\cap \mathbb{F}_2^P$. The submodules $\mathcal{Y}_r$ of $Y_P$ are canonically identified with the scalar extensions of the shortened Reed-Muller codes $\RM_*(r, m)$, that is
    \begin{equation}
        \mathcal{Y}_r \cong \mathbb{K}\otimes_{\mathbb{F}_2}\RM_*(r, m),
    \end{equation} 
    and similarly
    \begin{equation}
        \mathbb{K}\mathsf{1}\oplus\mathcal{Y}_r \cong \mathbb{K}\otimes_{\mathbb{F}_2} \RM^*(r, m).
    \end{equation}
    By \cref{Theorem:Bardoe}, the $\mathbb{K}\GLmF$-submodules of $\mathbb{K}^P$ are precisely the spaces $\mathcal{Y}_r$ and $\mathbb{K}\mathsf{1}\oplus\mathcal{Y}_r$ for $0\leq r\leq m-1$. Since the map $\phi$ is injective, this implies that the shortened and punctured Reed-Muller codes are the unique $\mathbb{F}_2\GLmF$-submodules of $\mathbb{F}_2^P$; this is equivalent to the statement of the corollary.
\end{proof}

The results of Ref.~\cite{bardoe} are significantly more general than the special case outlined above: it describes the full submodule lattice decomposition of $\overline{\mathbb{F}}_q^{\raisebox{-.3em}{\scalebox{.7}{$P$}}}$, where $P=(\mathbb{F}_q^k\backslash\{0\})/\mathbb{F}_q^*$, under the natural action of $\mathsf{GL}_k(\mathbb{F}_q)$ for any integer $k$ and any prime power $q=p^t$. In \cref{Sec:Qudit}, we discuss this result in the context of qudit phantom codes.

\section{Proof of Theorem\texorpdfstring{~\ref{Thm:Minimal}}{ 4}}\label{Sec:MinimalProof}
In this section, we prove the uniqueness of the punctured hypercube codes.

\begin{reptheorem}{Thm:Minimal}
    For $k\geq 5$ and $k=3$, the only binary CSS phantom codes of type $[\![2^k-1, k, d]\!]$ with $d>1$ are the punctured hypercube codes (up to CSS isomorphism).
\end{reptheorem}

To prove \cref{Thm:Minimal}, we split up the quantum CSS code $Q$ into its classical constituent codes $C_X$ and $C_Z$, and analyse each component individually. Because the permutation automorphism group of $Q$ contains $G$ as a subgroup, where $G$ is the $\GLkF$ extension defined in \cref{eq:G}, it follows that both $C_X$ and $C_Z$ must have (classical) automorphism groups containing $G$. The following lemma establishes that, when the bound $n\geq 2^k-1$ is saturated, no non-trivial group extension is possible, and the classical automorphism groups must both contain $\GLkF$ as a subgroup.

\begin{lemma}\label{Lemma:Nis1}
    For a phantom code encoding $k\geq 5$ or $k=3$ logical qubits into $n=2^k-1$ physical qubits, the group $N$ defined in \cref{eq:TrivialLogical} is trivial, meaning $G/N=G\cong\GLkF$.
\end{lemma}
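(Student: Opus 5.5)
The plan is to use that $G\leq S_n$ with $n=2^k-1$ and $G/N\cong\GLkF$. Then pass to the permutation action of $G$ on the $n$ physical qubits and show that a nontrivial $N$ would force either too many points or a contradiction with the structure of minimal actions. Recall from the proof of \cref{Thm:GeneralCase} that $K=G\cap\mathsf{U}(2)^n$ is trivial here, since $G\leq S_n$. So $G$ acts faithfully on $[n]$, and $N\triangleleft G$ is a normal subgroup of a permutation group of degree $n=\mu(\GLkF)$.

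The first step is to show that $N$ acts trivially on the logical space. By definition, $\nu V=\ee^{\ii\theta}V$ for each $\nu\in N$, so $N$ consists of permutations that fix every codeword up to phase. Suppose $N\neq 1$.

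\textbf{Case A:} some orbit of $G$ on $[n]$ carries a faithful action of $G/N$-compatible data. More concretely, I would consider the orbits $O_1,\dots,O_t$ of $G$ on $[n]$ and the induced homomorphisms $G\to\mathrm{Sym}(O_i)$. Since $\GLkF$ is simple and $N$ is normal, each image of $G$ in $\mathrm{Sym}(O_i)$ either has $\GLkF$ as a composition factor "above" the image of $N$, or the whole image is a quotient in which the $\GLkF$ factor dies. If no orbit carries the $\GLkF$ factor, then the kernel of each orbit map contains a subgroup mapping onto $\GLkF$. Taking the intersection of these kernels, which is trivial by faithfulness, would then contradict the surjection $G\to\GLkF$, by an argument like that of \cref{prop:Product} applied to $G\leq\prod_i\mathrm{Sym}(O_i)$. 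So some orbit $O$ has $|O|\geq\mu(\GLkF)=2^k-1=n$ by \cref{Thm:muGN}. Hence $G$ is transitive on $[n]$, and its image already has $\GLkF$ as a quotient with degree exactly $n$.

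The next step is to use transitivity and minimality. Since $G$ is transitive of degree $n=2^k-1$ and $N\triangleleft G$, the $N$-orbits form a $G$-invariant block system with blocks of equal size $b$. Then $G/N$ acts on the $n/b$ blocks. If $b>1$, the action of $G$ on the blocks has kernel $N'$, and $G/N'$ is a transitive group of degree $n/b<n$. If $N'\leq N$, then $\GLkF$ is a quotient of $G/N'$, and \cref{Thm:muGN} gives $\mu(\GLkF)\leq n/b<n$, a contradiction. Otherwise $N'N=G$, so $\GLkF$ is a quotient of $N'$, which is a subgroup of $\prod\mathrm{Sym}(\text{block})$. Iterating the product argument bounds the degree by $b<n$, again a contradiction. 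So $b=1$: $N$ acts trivially, i.e. semiregularly with orbits of size one, and since $G$ is faithful, $N=1$.

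The main obstacle is making the "either the factor survives on a piece or it is absorbed" dichotomy rigorous for finite permutation groups. I would do this with a discrete version of \cref{prop:Product}. If $X\leq A\times B$ has simple non-Abelian quotient $S$, then $S$ is a section of $A$ or of $B$. Combining this with \cref{Thm:muGN}, applied to the subgroup of $A$ so obtained, gives the degree bound. Care is needed because \cref{Thm:muGN} concerns quotients, not sections. I would handle that by noting that a section $N/M$ with $N\leq\mathrm{Sym}(m)$ has $\mu(N)\leq m$, so $\mu(S)\leq m$. If this route stalls, I would fall back on the classification of minimal-degree actions of $\GLkF$ (point and hyperplane actions, which are 2-transitive and primitive) to force $b=1$ directly.
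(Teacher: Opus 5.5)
Your reduction to the case where $G$ is transitive on the $n$ qubits is fine, though the inference you state (``intersecting kernels that each surject onto $\GLkF$ contradicts the surjection'') is not valid on its own. What does work is the product lemma: $\GLkF$ is a quotient of $G\le\prod_i\mathrm{Sym}(O_i)$, hence a section of some $\mathrm{Sym}(O_i)$, so $|O_i|\ge n$. The block step also correctly excludes block sizes $1<b<n$.

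\textbf{The gap is the case $b=n$, i.e.\ $N$ transitive, which your argument does not rule out.} Since $N\le N'$ and $G/N$ is simple, $N'\in\{N,G\}$. If $N'=G$, then $G$ fixes every block, and because $G$ is transitive there is only one block, of size $b=n$. So your ``otherwise'' branch is reached only when $b=n$. There the product argument bounds the degree by $b=n$, not by something smaller, and you get no contradiction.

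This is not a marginal case. For $k=3,5,7$ the length $n=2^k-1$ is prime, so $b\in\{1,n\}$. The range $1<b<n$ that you handle is then empty, and the entire content of the lemma lies in the unhandled case. Your fallback does not close the gap either. The classification of minimal-degree actions is about $\GLkF$ itself, whereas the group acting on the qubits is the extension $G$, which is exactly what you are trying to show equals $\GLkF$.

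The missing idea is a Frattini argument, which is the final step of the paper's proof. Suppose $N$ is transitive and fix a qubit $\alpha$. For each $g\in G$, pick $\nu\in N$ with $\nu\cdot\alpha=g\cdot\alpha$; then $\nu^{-1}g\in G_\alpha$, so $G=NG_\alpha$. It follows that $\GLkF\cong G/N\cong G_\alpha/(G_\alpha\cap N)$. Since $G_\alpha$ acts faithfully on the other $n-1$ qubits, Kov\'acs' theorem (\cref{Thm:muGN}) gives $2^k-1=\mu(\GLkF)\le\mu(G_\alpha)\le n-1$, a contradiction.

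With this case added, your proof is complete, and it is a reorganisation of the paper's. The paper works directly with the $N$-orbits. It first shows each $N$-orbit is $G$-invariant, using the kernel of $G$ acting on the $N$-orbits. It then uses the product argument to show there is only one $N$-orbit, and finally excludes the transitive case as above. You instead apply the product argument to the $G$-orbits to get transitivity of $G$, then use the block system of $N$-orbits to dispose of intermediate block sizes.
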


\begin{proof}
    Let $\Omega$ be a set of size $n=2^k-1$, and let the $N$-orbits on $\Omega$ be $\Delta_1, \ldots, \Delta_t\subseteq\Omega$ with $\Omega = \coprod_i\Delta_i$. Because $N\triangleleft G$, the group $G$ permutes the set of $N$-orbits (for $\alpha\in \Omega$, we have $g\cdot(N\cdot\alpha) = N\cdot(g\cdot\alpha)$). Let $K$ be the kernel of this action, that is, 
    \begin{equation}
        K \coloneqq \{g\in G\,|\, g\cdot\Delta_i = \Delta_i\text{ for all }i=1,\ldots, t\}.
    \end{equation}
    Then $N\trianglelefteq K\trianglelefteq G$, so $G/K\cong (G/N)/(K/N)$ is a quotient of $G/N\cong\GLkF$. The simplicity of $\GLkF$ then implies that either (\textit{i}) $G/K = \GLkF$ or (\textit{ii}) $G/K = 1$. Case (\textit{i}) can be immediately ruled out as $G/K\cong\GLkF$ would imply the existence of a faithful permutation representation of degree $t<n$, which is contradicted by the minimality of $n$. Hence $G/K = 1$, meaning every $N$-orbit is actually $G$-invariant.

\footnotetext[5]{Only the group theoretical part of \cref{prop:Product} is relevant here; topological considerations are unimportant in this context.}

    We now wish to show that the action of $N$ is transitive. Suppose for contradiction that $t\geq 2$, and pick an $N$-orbit, say $\Delta\equiv\Delta_1$, with $|\Delta|<n$. Let $G_\Delta$ and $N_\Delta$ be the subgroups of $S_{|\Delta|}$ induced by the actions of $G$ and $N$ on $\Delta$. Then $N_\Delta\trianglelefteq G_\Delta$, and $G_\Delta/N_\Delta$ is again a quotient of $\GLkF$, hence either trivial or isomorphic to $\GLkF$. By \cref{Thm:muGN}, the latter would imply 
    \begin{equation}
        n = \mu(\GLkF) = \mu(G_\Delta/N_\Delta) \leq \mu(G_\Delta) \leq |\Delta| < n,
    \end{equation}
    a contradiction, so $G_\Delta/N_\Delta=1$. Since this holds for all $N$-orbits $\Delta_i$, we get a homomorphism $\rho:G\hookrightarrow G_{\Delta_1}\times\cdots\times G_{\Delta_t}$, $g\mapsto (g|_{\Delta_1}, \ldots, g|_{\Delta_t})$ that is injective since each $\Delta_i$ is invariant under $G$. Moreover, $\rho(N)\triangleleft \rho(G)$ and, by the injectivity of $\rho$, we have $\rho(G)/\rho(N)\cong G/N\cong \GLkF$, so $\GLkF$ is isomorphic to a quotient of the direct product $G_{\Delta_1}\times\cdots\times G_{\Delta_t}$. By induction using \cref{prop:Product}, it follows that $\GLkF$ is isomorphic to a quotient of some $G_{\Delta_i}\leq S_{|\Delta_i|}$ with $|\Delta_i|<n$, contradicting the minimality of $n$ \cite{Note5}. Hence $t=1$ and $N$ acts transitively on $\Omega$.

    Now let $\alpha\in\Omega$, and let $G_\alpha \coloneqq \Stab_G(\alpha)\leq S_{n-1}$ be the point stabiliser of $\alpha$. Since $N$ is transitive, there exists some $n\in N$ such that $g\cdot \alpha = n\cdot\alpha$. This implies $(n^{-1}g)\cdot \alpha = \alpha$, meaning $n^{-1}g\in G_\alpha$. Hence $g = n(n^{-1} g) \in N G_\alpha$ for every $g\in G$, which implies that $G=N G_\alpha$. Therefore, by \cref{prop:HK3} we have
    \begin{equation}
        \GLkF\cong G/N\cong N G_\alpha/N \cong G_\alpha/(G_\alpha\cap N).
    \end{equation}
    By \cref{Thm:muGN}, this implies that
    \begin{equation}
        n = \mu(\GLkF) = \mu[G_\alpha/(G_\alpha\cap N)]\leq\mu(G_\alpha) \leq n-1
    \end{equation}
    a contradiction. Thus $N=1$.
\end{proof}

\begin{proof}[Proof of \cref{Thm:Minimal}]
    A permutation automorphism of a CSS code preserves all $X$- and $Z$-type stabilisers, meaning that both constituent classical codes $C_X$ and $C_Z$ must have automorphism groups that contain $G$ as a subgroup. By \cref{Lemma:Nis1}, when $n=2^k-1$ we must have $G=\GLkF$, so both $C_X$ and $C_Z$ must contain $\GLkF$ as a subgroup. It turns out that for $k\geq 5$ and $k=3$, the only classical linear codes of length $2^k-1$ with automorphism group containing $\GLkF$ are the punctured and shortened Reed-Muller codes $\RM^*(r, k)$ and $\RM_*(r, k)$, where $r=0, 1,\ldots, k-1$. This fact follows from the results of Ref.~\cite{bardoe}, as explained in detail in \cref{Sec:GLMFInv}. Hence, both $C_X$ and $C_Z$ must be one of these codes. We now analyse each possible case.
    
    \textit{Case (i)} $C_X = \RM_*(r, k)$, $C_Z = \RM_*(s, k)$. The CSS condition $C_X^\perp\leq C_Z$ is
    \begin{equation}\label{Eq:CSS1}
        \RM^*(\bar{r}, k) = C^{\text{rep}}_n \oplus \RM_*(\bar{r}, k) \leq \RM_*(s, k)
    \end{equation}
    where $C^{\text{rep}}_n$ is the repetition code of length $n$, and $\bar{r} = k-1-r$. \cref{Eq:CSS1} cannot be satisfied as $C^{\text{rep}}_n$ is not contained in any shortened Reed-Muller code, so this pair does not give a valid CSS code.

    \textit{Case (ii)} $C_X = \RM^*(r, k)$, $C_Z = \RM^*(s, k)$. Without loss of generality we assume $r\geq s$, in which case the CSS condition $\RM^*(r, k)\geq\RM_*(\bar{s}, k)$ requires $r\geq \bar{s}$, where $\bar{s} = k-1-s$. Writing $r=\bar{s}+a$ for some $a\geq 0$, the dimension of a CSS code $Q$ formed from $C_X$ and $C_Z$ is $\dim Q = \dim C_X - \dim C_Z^\perp = 1+\sum_{p=\bar{s}+1}^{\bar{s}+a}\binom{k}{p}$. 
    Since for $k\geq 3$ the right-hand side is either 1 or larger than $k$, there is no solution over $r, s$ that yields $\dim Q = k$, hence no code of this form can have the desired maximal logical dimension.

    \textit{Case (iii)} $C_X = \RM_*(r, k)$, $C_Z = \RM^*(s, k)$. The CSS condition reads $\RM^*(\bar{r}, k)\geq \RM_*(\bar{s}, k)$, which holds whenever $r\geq \bar{s}$. Writing $r=\bar{s}+a$ for some $a\geq 0$, the dimension of the resultant CSS code, which we denote $\QRM^*(r, s)$, is then $\dim \QRM^*(r, s) = \sum_{p=\bar{s}+1}^{\bar{s}+a}\binom{k}{p}$. For $k\geq 3$, this is solved only by $(\bar{s}, a) = (0, 1)$ or $(k-2, 1)$, corresponding to $(r, s) = (1, k-1)$ or $(k-1, 1)$. The code $\QRM^*(1, k-1)$ has trivial $Z$-distance $d_Z=1$, since the fact that $C_X^\perp = \RM^*(k-2, k)$ has distance $3$ implies that $C_Z\backslash C_X^\perp = \mathbb{F}_2^n\backslash\RM^*(k-2, k)$ contains an element of weight $1$. Hence $\QRM^*(k-1, 1)$, which is the punctured hypercube code, is the unique binary CSS code of type $[\![2^k-1, k, d>1]\!]$.
\end{proof}

\section{Punctured hypercube codes}\label{Sec:Hypercube}

\begin{figure*}
\centering
\newcommand{\CubeScale}{0.7} 

\pgfmathsetlengthmacro{\CubeX}{\CubeScale*4.7cm}
\pgfmathsetlengthmacro{\CubeYx}{\CubeScale*1.7cm}
\pgfmathsetlengthmacro{\CubeYy}{\CubeScale*1.45cm}
\pgfmathsetlengthmacro{\CubeZ}{\CubeScale*4.7cm}
\sidesubfloat[]{\hspace{-1.5em}\begin{tikzpicture}[
    x={(\CubeX,0cm)},
    y={(\CubeYx,\CubeYy)},
    z={(0cm,\CubeZ)},
    vertex/.style={circle, fill=black, inner sep=1.6pt},
    cubeedge/.style={black, line width=0.9pt},
    zface/.style={draw=ZTextColor, fill=ZFaceColor, fill opacity=0.24, line width=1.0pt},
    xvolface/.style={draw=XTextColor, fill=XFaceColor, fill opacity=0.14, line width=0.9pt},
    xvolfaceLight/.style={draw=XTextColor, fill=XFaceColor!55!white, fill opacity=0.14, line width=0.9pt},
    xvoledge/.style={draw=XTextColor, line width=0.75pt, opacity=0.85},
    qlabel/.style={font=\scriptsize, fill=white, inner sep=0.8pt, outer sep=0pt},
]

\colorlet{specialgray}{black!25}

% White silhouette of math (good for glyph outlines)
\newcommand{\whitemathsilhouette}[1]{%
  \textpdfrender{
    TextRenderingMode=FillStroke,
    LineWidth=2.2pt,
    LineJoinStyle=Round,
    StrokeColor=white,
    FillColor=white
  }{$#1$}%
}

% Plain operator label with white outline
\newcommand{\plainoplabelcontent}[3]{%
  \begin{tikzpicture}[baseline=(full.base)]
    \node[
      inner sep=0pt,
      outer sep=0pt,
      font=\small\bfseries,
      text opacity=0,
      draw opacity=0
    ] (full) {$#2_{#3}$};

    \node[
      inner sep=0pt,
      outer sep=0pt,
      font=\small\bfseries
    ] at (full.center)
      {\whitemathsilhouette{#2_{#3}}};

    \node[
      inner sep=0pt,
      outer sep=0pt,
      font=\small\bfseries,
      text=#1
    ] at (full.center)
      {$#2_{#3}$};
  \end{tikzpicture}%
}

% --- Outer cube vertices ---
\coordinate (v000) at (0,0,0);
\coordinate (v100) at (1,0,0);
\coordinate (v010) at (0,1,0);
\coordinate (v110) at (1,1,0);
\coordinate (v001) at (0,0,1);
\coordinate (v101) at (1,0,1);
\coordinate (v011) at (0,1,1);
\coordinate (v111) at (1,1,1);

% --- Face centers ---
\coordinate (cx0) at ($(v000)!0.5!(v011)$); % x=0
\coordinate (cx1) at ($(v100)!0.5!(v111)$); % x=1
\coordinate (cy0) at ($(v000)!0.5!(v101)$); % y=0
\coordinate (cy1) at ($(v010)!0.5!(v111)$); % y=1
\coordinate (cz0) at ($(v000)!0.5!(v110)$); % z=0
\coordinate (cz1) at ($(v001)!0.5!(v111)$); % z=1
\coordinate (cc)  at ($(v000)!0.5!(v111)$); % cube center

% --- All six shrunken Z-face stabilizers ---
\def\faceshrink{0.17}

% x = 1
\filldraw[zface]
  ($(v100)!\faceshrink!(cx1)$) --
  ($(v101)!\faceshrink!(cx1)$) --
  ($(v111)!\faceshrink!(cx1)$) --
  ($(v110)!\faceshrink!(cx1)$) -- cycle;

% y = 1
\filldraw[zface]
  ($(v010)!\faceshrink!(cy1)$) --
  ($(v011)!\faceshrink!(cy1)$) --
  ($(v111)!\faceshrink!(cy1)$) --
  ($(v110)!\faceshrink!(cy1)$) -- cycle;

% z = 1
\filldraw[zface]
  ($(v001)!\faceshrink!(cz1)$) --
  ($(v101)!\faceshrink!(cz1)$) --
  ($(v111)!\faceshrink!(cz1)$) --
  ($(v011)!\faceshrink!(cz1)$) -- cycle;

% --- Inner shrunken cube for the global X stabilizer ---
\def\volshrink{0.2}

\coordinate (i000) at ($(v000)!\volshrink!(cc)$);
\coordinate (i100) at ($(v100)!\volshrink!(cc)$);
\coordinate (i010) at ($(v010)!\volshrink!(cc)$);
\coordinate (i110) at ($(v110)!\volshrink!(cc)$);
\coordinate (i001) at ($(v001)!\volshrink!(cc)$);
\coordinate (i101) at ($(v101)!\volshrink!(cc)$);
\coordinate (i011) at ($(v011)!\volshrink!(cc)$);
\coordinate (i111) at ($(v111)!\volshrink!(cc)$);

% Visible faces of the inner cube
% \fill[xvolfaceLight] (i000)--(i100)--(i101)--(i001)--cycle; % y=0
\filldraw[xvolface] (i010)--(i011)--(i111)--(i110)--cycle; % x=1
\filldraw[xvolface] (i100)--(i110)--(i111)--(i101)--cycle; % x=1
\filldraw[xvolface] (i001)--(i101)--(i111)--(i011)--cycle; % z=1

% Full inner-cube wireframe
\draw[xvoledge, dashed] (i000)--(i100);
\draw[xvoledge]         (i100)--(i110)--(i010);
\draw[xvoledge, dashed] (i010)--(i000);
\draw[xvoledge]         (i001)--(i101)--(i111)--(i011)--cycle;
\draw[xvoledge, dashed] (i000)--(i001);
\draw[xvoledge]         (i100)--(i101);
\draw[xvoledge]         (i010)--(i011);
\draw[xvoledge]         (i110)--(i111);

% --- Outer cube edges ---
\draw[specialgray, dashed, line width=0.9pt] (v000)--(v100);
\draw[cubeedge] (v100)--(v110)--(v010);
\draw[specialgray, dashed, line width=0.9pt] (v010)--(v000);
\draw[cubeedge] (v001)--(v101)--(v111)--(v011)--cycle;
\draw[specialgray, dashed, line width=0.9pt] (v000)--(v001);
\draw[cubeedge] (v100)--(v101);
\draw[cubeedge] (v010)--(v011);
\draw[cubeedge] (v110)--(v111);

% --- Vertices ---
\node[circle, fill=specialgray, inner sep=1.6pt] at (v000) {};
\foreach \v in {v100,v010,v110,v001,v101,v011,v111}
  \node[vertex] at (\v) {};

% --- Qubit labels ---
\node[qlabel, text=specialgray, shift={(-10pt,-8pt)}] at (v000) {000};
\node[qlabel, shift={( 10pt,-8pt)}] at (v100) {100};
\node[font=\scriptsize, inner sep=0pt, outer sep=0pt, shift={(-12pt,2pt)}] at (v010)
  {\contour{white}{010}};
\node[qlabel, shift={( 12pt, 2pt)}] at (v110) {110};
\node[qlabel, shift={(-10pt, 0pt)}] at (v001) {001};
\node[font=\scriptsize, inner sep=0pt, outer sep=0pt, shift={(10pt,0pt)}] at (v101)
  {\contour{white}{101}};
\node[qlabel, shift={(-12pt, 8pt)}] at (v011) {011};
\node[qlabel, shift={( 12pt, 8pt)}] at (v111) {111};

% --- New stabiliser labels ---
\begin{scope}[shift={(cx1)}, cm={0.760,0.649,0,1,(0,0)}]
  \node[inner sep=0pt, outer sep=0pt, transform shape] at (0,0)
    {\plainoplabelcontent{ZTextColor}{s}{1}};
\end{scope}

\begin{scope}[shift={(cy1)}, cm={1,0,0,1,(0,0)}]
  \node[inner sep=0pt, outer sep=0pt, transform shape] at (0,0)
    {\plainoplabelcontent{ZTextColor}{s}{2}};
\end{scope}

\begin{scope}[shift={(cz1)}, cm={1,0,0.760,0.649,(0,0)}]
  \node[inner sep=0pt, outer sep=0pt, transform shape] at (0,0)
    {\plainoplabelcontent{ZTextColor}{s}{3}};
\end{scope}

\node[inner sep=0pt, outer sep=0pt] at (cc)
  {\plainoplabelcontent{XTextColor}{s}{0}};

\end{tikzpicture}}
\hfill
\sidesubfloat[]{\hspace{-1.5em}\begin{tikzpicture}[
    x={(\CubeX,0cm)},
    y={(\CubeYx,\CubeYy)},
    z={(0cm,\CubeZ)},
    vertex/.style={circle, fill=black, inner sep=1.6pt},
    cubeedge/.style={black, line width=0.9pt},
    xplane/.style={draw=XTextColor, fill=XFaceColor, fill opacity=0.24, line width=1.0pt},
    zline/.style={draw=ZTextColor, line width=2.2pt},
    qlabel/.style={font=\scriptsize, fill=white, inner sep=0.8pt, outer sep=0pt},
]

\colorlet{specialgray}{black!25}

% White silhouette of math (good for glyph outlines)
\newcommand{\whitemathsilhouette}[1]{%
  \textpdfrender{
    TextRenderingMode=FillStroke,
    LineWidth=2.2pt,
    LineJoinStyle=Round,
    StrokeColor=white,
    FillColor=white
  }{$#1$}%
}

% Colored math on top
\newcommand{\fillmath}[2]{%
  \textcolor{#1}{$#2$}%
}

% Full operator label:
% 1) measure full label and overbarred symbol
% 2) draw a white bar underneath, aligned to the TeX overbar
% 3) draw white silhouette of the whole expression
% 4) draw colored expression on top
\newcommand{\oplabelcontent}[3]{%
  \begin{tikzpicture}[baseline=(full.base)]
    % Invisible measuring node for full expression
    \node[
      inner sep=0pt,
      outer sep=0pt,
      font=\small\bfseries,
      text opacity=0,
      draw opacity=0
    ] (full) {$\overbar{#2}_{#3}$};

    % Invisible measuring node for the overbarred symbol only
    \node[
      inner sep=0pt,
      outer sep=0pt,
      font=\small\bfseries,
      text opacity=0,
      draw opacity=0,
      anchor=base west
    ] (sym) at (full.base west) {$\overbar{#2}$};

    % White halo for the overbar itself, behind everything else
    \draw[line cap=round, white, line width=2pt]
      ($(sym.north west)+(0pt,-0.55pt)$) -- ($(sym.north east)+(0pt,-0.55pt)$);

    % White silhouette for the whole expression
    \node[
      inner sep=0pt,
      outer sep=0pt,
      font=\small\bfseries
    ] at (full.center)
      {\whitemathsilhouette{\overbar{#2}_{#3}}};

    % Colored expression on top
    \node[
      inner sep=0pt,
      outer sep=0pt,
      font=\small\bfseries,
      text=#1
    ] at (full.center)
      {$\overbar{#2}_{#3}$};
  \end{tikzpicture}%
}

% --- Cube vertices ---
\coordinate (v000) at (0,0,0);
\coordinate (v100) at (1,0,0);
\coordinate (v010) at (0,1,0);
\coordinate (v110) at (1,1,0);
\coordinate (v001) at (0,0,1);
\coordinate (v101) at (1,0,1);
\coordinate (v011) at (0,1,1);
\coordinate (v111) at (1,1,1);

% --- Face centers ---
\coordinate (cx) at ($(v100)!0.5!(v111)$);
\coordinate (cy) at ($(v010)!0.5!(v111)$);
\coordinate (cz) at ($(v001)!0.5!(v111)$);

% --- Edge centers ---
\coordinate (lz1) at ($(v011)!0.5!(v111)$);
\coordinate (lz2) at ($(v101)!0.5!(v111)$);
\coordinate (lz3) at ($(v110)!0.5!(v111)$);

% --- Shrunken X-logical faces ---
\def\faceshrink{0.14}

\filldraw[xplane]
  ($(v100)!\faceshrink!(cx)$) --
  ($(v101)!\faceshrink!(cx)$) --
  ($(v111)!\faceshrink!(cx)$) --
  ($(v110)!\faceshrink!(cx)$) -- cycle;

\filldraw[xplane]
  ($(v010)!\faceshrink!(cy)$) --
  ($(v011)!\faceshrink!(cy)$) --
  ($(v111)!\faceshrink!(cy)$) --
  ($(v110)!\faceshrink!(cy)$) -- cycle;

\filldraw[xplane]
  ($(v001)!\faceshrink!(cz)$) --
  ($(v101)!\faceshrink!(cz)$) --
  ($(v111)!\faceshrink!(cz)$) --
  ($(v011)!\faceshrink!(cz)$) -- cycle;

% --- Cube edges ---
\draw[specialgray, dashed, line width=0.9pt] (v000)--(v100);
\draw[cubeedge] (v100)--(v110)--(v010);
\draw[specialgray, dashed, line width=0.9pt] (v010)--(v000);
\draw[cubeedge] (v001)--(v101)--(v111)--(v011)--cycle;
\draw[specialgray, dashed, line width=0.9pt] (v000)--(v001);
\draw[cubeedge] (v100)--(v101);
\draw[cubeedge] (v010)--(v011);
\draw[cubeedge] (v110)--(v111);

% --- Slightly shortened Z-logical edges ---
\def\zedgeshrink{0.07}
\draw[zline] ($(v011)!\zedgeshrink!(v111)$) -- ($(v111)!\zedgeshrink!(v011)$);
\draw[zline] ($(v101)!\zedgeshrink!(v111)$) -- ($(v111)!\zedgeshrink!(v101)$);
\draw[zline] ($(v110)!\zedgeshrink!(v111)$) -- ($(v111)!\zedgeshrink!(v110)$);

% --- Vertices ---
\node[circle, fill=specialgray, inner sep=1.6pt] at (v000) {};
\foreach \v in {v100,v010,v110,v001,v101,v011,v111}
  \node[vertex] at (\v) {};

% --- Qubit labels ---
\node[qlabel, text=specialgray, shift={(-10pt,-8pt)}] at (v000) {000};
\node[qlabel, shift={( 10pt,-8pt)}] at (v100) {100};
\node[qlabel, shift={(-12pt, 2pt)}] at (v010) {010};
\node[qlabel, shift={( 12pt, 2pt)}] at (v110) {110};
\node[qlabel, shift={(-10pt, 0pt)}] at (v001) {001};
\node[font=\scriptsize, inner sep=0pt, outer sep=0pt, shift={(10pt,0pt)}] at (v101)
  {\contour{white}{101}};
\node[qlabel, shift={(-12pt, 8pt)}] at (v011) {011};
\node[qlabel, shift={( 12pt, 8pt)}] at (v111) {111};

% --- X-logical labels, face-aligned/sheared ---
\begin{scope}[shift={(cx)}, cm={0.760,0.649,0,1,(0,0)}]
  \node[inner sep=0pt, outer sep=0pt, transform shape] at (0,0)
    {\oplabelcontent{XTextColor}{X}{1}};
\end{scope}

\begin{scope}[shift={(cy)}, cm={1,0,0,1,(0,0)}]
  \node[inner sep=0pt, outer sep=0pt, transform shape] at (0,0)
    {\oplabelcontent{XTextColor}{X}{2}};
\end{scope}

\begin{scope}[shift={(cz)}, cm={1,0,0.760,0.649,(0,0)}]
  \node[inner sep=0pt, outer sep=0pt, transform shape] at (0,0)
    {\oplabelcontent{XTextColor}{X}{3}};
\end{scope}

% --- Z-logical labels ---
\node[inner sep=0pt, outer sep=0pt] at (lz1)
  {\oplabelcontent{ZTextColor}{Z}{1}};
\node[inner sep=0pt, outer sep=0pt] at (lz2)
  {\oplabelcontent{ZTextColor}{Z}{2}};
\node[inner sep=0pt, outer sep=0pt] at (lz3)
  {\oplabelcontent{ZTextColor}{Z}{3}};

\end{tikzpicture}}
\hfill
\sidesubfloat[]{\hspace{-1.5em}\begin{tikzpicture}[
    x={(\CubeX,0cm)},
    y={(\CubeYx,\CubeYy)},
    z={(0cm,\CubeZ)},
    vertex/.style={circle, fill=black, inner sep=1.6pt},
    cubeedge/.style={black, line width=0.9pt},
    xplane/.style={draw=XTextColor, fill=XFaceColor, fill opacity=0.24, line width=1.0pt},
    zline/.style={draw=ZTextColor, line width=2.2pt},
    qlabel/.style={font=\scriptsize, fill=white, inner sep=0.8pt, outer sep=0pt},
]

\colorlet{specialgray}{black!25}

% White silhouette of math (good for glyph outlines)
\newcommand{\whitemathsilhouette}[1]{%
  \textpdfrender{
    TextRenderingMode=FillStroke,
    LineWidth=2.2pt,
    LineJoinStyle=Round,
    StrokeColor=white,
    FillColor=white
  }{$#1$}%
}

% Colored math on top
\newcommand{\fillmath}[2]{%
  \textcolor{#1}{$#2$}%
}

% Unprimed operator label
\newcommand{\oplabelcontent}[3]{%
  \begin{tikzpicture}[baseline=(full.base)]
    \node[
      inner sep=0pt,
      outer sep=0pt,
      font=\small\bfseries,
      text opacity=0,
      draw opacity=0
    ] (full) {$\overbar{#2}_{#3}$};

    \node[
      inner sep=0pt,
      outer sep=0pt,
      font=\small\bfseries,
      text opacity=0,
      draw opacity=0,
      anchor=base west
    ] (sym) at (full.base west) {$\overbar{#2}$};

    \draw[line cap=round, white, line width=2pt]
      ($(sym.north west)+(0pt,-0.55pt)$) -- ($(sym.north east)+(0pt,-0.55pt)$);

    \node[
      inner sep=0pt,
      outer sep=0pt,
      font=\small\bfseries
    ] at (full.center)
      {\whitemathsilhouette{\overbar{#2}_{#3}}};

    \node[
      inner sep=0pt,
      outer sep=0pt,
      font=\small\bfseries,
      text=#1
    ] at (full.center)
      {$\overbar{#2}_{#3}$};
  \end{tikzpicture}%
}

% Primed operator label: renders as \bar{X}_2' etc.
\newcommand{\oplabelcontentprime}[3]{%
  \begin{tikzpicture}[baseline=(full.base)]
    \node[
      inner sep=0pt,
      outer sep=0pt,
      font=\small\bfseries,
      text opacity=0,
      draw opacity=0
    ] (full) {$\overbar{#2}_{#3}'$};

    \node[
      inner sep=0pt,
      outer sep=0pt,
      font=\small\bfseries,
      text opacity=0,
      draw opacity=0,
      anchor=base west
    ] (sym) at (full.base west) {$\overbar{#2}$};

    \draw[line cap=round, white, line width=2pt]
      ($(sym.north west)+(0pt,-0.55pt)$) -- ($(sym.north east)+(0pt,-0.55pt)$);

    \node[
      inner sep=0pt,
      outer sep=0pt,
      font=\small\bfseries
    ] at (full.center)
      {\whitemathsilhouette{\overbar{#2}_{#3}'}};

    \node[
      inner sep=0pt,
      outer sep=0pt,
      font=\small\bfseries,
      text=#1
    ] at (full.center)
      {$\overbar{#2}_{#3}'$};
  \end{tikzpicture}%
}

% --- Cube vertices ---
\coordinate (v000) at (0,0,0);
\coordinate (v100) at (1,0,0);
\coordinate (v010) at (0,1,0);
\coordinate (v110) at (1,1,0);
\coordinate (v001) at (0,0,1);
\coordinate (v101) at (1,0,1);
\coordinate (v011) at (0,1,1);
\coordinate (v111) at (1,1,1);

% --- Face center for diagonal X_2' face ---
\coordinate (cdiag) at ($(v011)!0.5!(v100)$);

% --- Edge centers ---
\coordinate (lz1) at ($(v011)!0.5!(v101)$);
\coordinate (lz3) at ($(v100)!0.5!(v101)$);

% --- Shrunken X-logical face ---
\def\faceshrink{0.14}

\filldraw[xplane]
  ($(v011)!\faceshrink!(cdiag)$) --
  ($(v101)!\faceshrink!(cdiag)$) --
  ($(v100)!\faceshrink!(cdiag)$) --
  ($(v010)!\faceshrink!(cdiag)$) -- cycle;

% --- Midpoints / center in the 101-111-110-100 plane ---
\coordinate (mrTop)    at ($(v101)!0.5!(v111)$);
\coordinate (mrBottom) at ($(v100)!0.5!(v110)$);
\coordinate (mrCenter) at ($(mrBottom)!0.5!(mrTop)$);

% --- Cube edges ---
\draw[specialgray, dashed, line width=0.9pt] (v000)--(v100);
\draw[cubeedge] (v100)--(v110)--(v010);
\draw[specialgray, dashed, line width=0.9pt] (v010)--(v000);
\draw[cubeedge] (v001)--(v101)--(v111)--(v011)--cycle;
\draw[specialgray, dashed, line width=0.9pt] (v000)--(v001);
\draw[cubeedge] (v100)--(v101);
\draw[cubeedge] (v010)--(v011);
\draw[cubeedge] (v110)--(v111);

% --- Slightly shortened Z-logical lines ---
\def\zedgeshrink{0.07}
\draw[black, dotted, line width=0.9pt] (v011) -- (v101);
\draw[zline]
  ($(v011)!\zedgeshrink!(v101)$) -- ($(v101)!\zedgeshrink!(v011)$);
\draw[zline]
  ($(v100)!\zedgeshrink!(v101)$) -- ($(v101)!\zedgeshrink!(v100)$);

% --- Vertices ---
\node[circle, fill=specialgray, inner sep=1.6pt] at (v000) {};
\foreach \v in {v100,v010,v110,v001,v101,v011,v111}
  \node[vertex] at (\v) {};

% --- Qubit labels ---
\node[qlabel, text=specialgray, shift={(-10pt,-8pt)}] at (v000) {000};
\node[qlabel, shift={( 10pt,-8pt)}] at (v100) {100};
\node[qlabel, shift={(-12pt, 2pt)}] at (v010) {010};
\node[qlabel, shift={( 12pt, 2pt)}] at (v110) {110};
\node[qlabel, shift={(-10pt, 0pt)}] at (v001) {001};
\node[font=\scriptsize, inner sep=0pt, outer sep=0pt, shift={(10pt,0pt)}] at (v101)
  {\contour{white}{101}};
\node[qlabel, shift={(-12pt, 8pt)}] at (v011) {011};
\node[qlabel, shift={( 12pt, 8pt)}] at (v111) {111};

% --- X-logical label, aligned with the 011--101 diagonal ---
\begin{scope}[shift={(cdiag)}, cm={0.900,-0.435,0,1,(0,0)}]
  \node[inner sep=0pt, outer sep=0pt, transform shape] at (0,0)
    {\oplabelcontentprime{XTextColor}{X}{2}};
\end{scope}

% --- Z-logical labels ---
\node[inner sep=0pt, outer sep=0pt] at (lz1)
  {\oplabelcontentprime{ZTextColor}{Z}{1}};
\node[inner sep=0pt, outer sep=0pt] at (lz3)
  {\oplabelcontentprime{ZTextColor}{Z}{3}};

% --- Dashed connector and small in-plane double arrow ---
\draw[black, dashed, line width=0.7pt]
  ($(mrBottom)!-0.08!(mrTop)$) -- ($(mrTop)!-0.08!(mrBottom)$);

% --- Dashed connector and small filled in-plane double arrow ---
\draw[black, dashed, line width=0.7pt]
  ($(mrBottom)!-0.08!(mrTop)$) -- ($(mrTop)!-0.08!(mrBottom)$);

% Filled double-headed arrow, entirely in the 101-111-110-100 plane
\def\arrhalflen{0.18}
\def\arrheadlen{0.060}
\def\arrhalfthick{0.015}
\def\arrheadhalfwidth{0.03}

\filldraw[purple, line width=0.5pt]
  % lower tip and lower head
  ($(mrCenter)+(0,-\arrhalflen,0)$) --
  ($(mrCenter)+(0,-\arrhalflen+\arrheadlen,\arrheadhalfwidth)$) --
  ($(mrCenter)+(0,-\arrhalflen+\arrheadlen,\arrhalfthick)$) --
  % shaft upper side
  ($(mrCenter)+(0,\arrhalflen-\arrheadlen,\arrhalfthick)$) --
  % upper head
  ($(mrCenter)+(0,\arrhalflen-\arrheadlen,\arrheadhalfwidth)$) --
  ($(mrCenter)+(0,\arrhalflen,0)$) --
  ($(mrCenter)+(0,\arrhalflen-\arrheadlen,-\arrheadhalfwidth)$) --
  ($(mrCenter)+(0,\arrhalflen-\arrheadlen,-\arrhalfthick)$) --
  % shaft lower side
  ($(mrCenter)+(0,-\arrhalflen+\arrheadlen,-\arrhalfthick)$) --
  % lower head
  ($(mrCenter)+(0,-\arrhalflen+\arrheadlen,-\arrheadhalfwidth)$) --
  cycle;

\end{tikzpicture}}
\caption{\textbf{Punctured hypercube codes.} \textbf{(a)} Stabilisers and \textbf{(b)} logical operators of the $[\![7, 3, 2]\!]$ punctured (hyper)cube code. \textbf{(c)} The logical $\CNOT_{21}$ gate can be performed by reflecting the square face $100$-$101$-$111$-$110$ about a line of symmetry: this maps $\overbar{Z}_1\mapsto\overbar{Z}_1\overbar{Z}_2$ and $\overbar{X}_2\mapsto \overbar{X}_1\overbar{X}_2$, while preserving all other logical operators up to a stabiliser.}
    \label{fig:Hypercube}
\end{figure*}

The punctured hypercube code of dimension $D$ is a $[\![2^D-1, D, 2]\!]$ CSS code defined by $C_X = \RM_*(D-1, D)$ and $C_Z = \RM^*(1, D)$, where $\RM^*(r, m)$ and $\RM_*(r, m)$ are respectively the punctured and shortened classical Reed-Muller codes described in \cref{Sec:RMCodes}. \cref{Thm:Minimal} shows that these codes are the unique family of phantom CSS codes with non-trivial distance that saturate the bound $n\geq 2^k-1$. We now explain how the polynomial evaluation scheme used to define the constituent classical Reed-Muller codes provides a convenient geometric description of the punctured hypercube codes, and we illustrate this with the example of the $[\![7, 3, 2]\!]$ code.

\subsection{Geometric interpretation of Reed-Muller codes}\label{Sec:RMGeom}
Recall that the Reed-Muller code $\RM(r, m)$ is the set of evaluation vectors of binary polynomials in $m$ variables of degree at most $r$. For a polynomial $f:\mathbb{F}_2^m\to\mathbb{F}_2$, the evaluation vector $\Eval(f)$ lists the values of $f(x)$ for all $x\in\Omega=\mathbb{F}_2^m$. By identifying $\Omega$ as an $m$-dimensional hypercube, we can associate each such polynomial with a geometric region of $\Omega$, namely those points $x\in \Omega$ where $f(x)=1$. For example, the monomial $f(x) = x_i$ corresponds to the half-cube where $x_i=1$, while $f(x) = 1+x_i$ indicates the half-cube $x_i=0$. More generally, for a set $S\subseteq[m]$ of size $|S| = r$ and a binary vector $b\in\mathbb{F}_2^S$, define
\begin{equation}
    \chi_{S, b}(x) \coloneqq \prod_{i\in S} (1+b_i+x_i),
\end{equation}
as the indicator for the region $\left\{x\in\Omega\;\middle|\; x_i = b_i \text{ for all } i\in S\right\}$ defined by $S$ and $b$. Geometrically, this is a coordinate-aligned affine subspace of codimension $r$.

Every polynomial of degree $\leq r$ can be expressed as a linear combination of the indicator functions $\{\chi_{S, b}\,|\,S\subseteq[m], |S| = r, b\in\mathbb{F}_2^S\}$. To see this, note that all degree-$r$ monomials are indicator functions with all $b_i=1$. Moreover, any indicator function for a set $T\subset[m]$ of size $|T| = r-1$ can be expressed as a linear combination of codimension-$r$ indicators by inserting factors of $(1+x_i) +x_i = 1$, for example $x_2(1+x_3) = (1+x_1)x_2(1+x_3) + x_1 x_2(1+x_3)$. Hence, by induction the codimension-$r$ indicator functions form an overcomplete basis of the Reed-Muller code $\RM(r, m)$. This provides a geometric picture of $\RM(r, m)$: it is generated by the coordinate-aligned affine subspaces of codimension $r$ in $\Omega$.

This picture naturally generalises to the punctured and shortened Reed-Muller codes. The space $P=\Omega\backslash\{0\}$ used in the definition of these codes is constructed by removing the zero vertex from the $m$-dimensional hypercube. Codewords of $\RM^*(r, m)$ are obtained simply by removing the entry corresponding to this vertex from all codewords of $\RM(r, m)$, which implies $\RM^*(r, m)$ is generated by the punctured evaluation vectors corresponding to the codimension-$r$ coordinate subspaces of $\Omega$. If such a subspace contains $0$, its punctured image is that subspace with the origin removed. On the other hand, a polynomial $f(x)\in\RM(r, m)$ is a codeword of the shortened Reed-Muller code $\RM_*(r, m)$ only if $f(0)=0$. In particular, the generators $\chi_{S, b}$ satisfy $\chi_{S, b}(0) = 0$ if and only if $b\neq 0$. Geometrically, this means that an affine subspace of $\RM(r, m)$ is, after puncturing, an admissible generator of $\RM_*(r, m)$ if and only if it does not contain the origin. In other words, $\RM_*(r, m)$ is generated by the set of codimension-$r$ affine subspaces that do not contain the origin.

\subsection{Stabilisers and logical operators}
We now employ the geometric description of classical Reed-Muller codes to determine the stabilisers and logical operators of the $[\![2^D-1, D, 2]\!]$ quantum punctured hypercube code family. We set $n=2^D-1$, $C_X = \RM_*(D-1, D)$, and $C_Z = \RM^*(1, D)$ throughout. 

We begin by computing the $X$- and $Z$-type stabiliser spaces, which are given by the duals of $C_X$ and $C_Z$. It follows from \cref{Eq:RMDuals} that the $X$ stabiliser space is $S_X = C_X^\perp = \RM^*(0, D)$, while the $Z$ stabiliser space is $S_Z = C_Z^\perp = \RM_*(D-2, D)$. It is straightforward to identify $S_X = C_n^{\text{rep}}$ as the repetition code of length $n$, meaning that there is a single $X$-type stabiliser equal to the product of $X$ over all sites, $s_0 = \prod_{v\in P} X_v$. By \cref{Sec:RMGeom}, any codeword of $S_Z = \RM_*(D-2, D)$ can be expressed as a binary linear combination of indicator functions of two-dimensional affine subspaces of $P$. In other words, the $Z$-stabilisers are generated by all square faces that do not contain the zero vertex.

We now turn to the logical operators. The space of logical $X$-type Pauli operators is given by the quotient
\begin{equation}
    \mathcal{L}_X = C_Z/C_X^\perp = \RM^*(1, D)/\RM^*(0, D) = \langle 1, x_1, \ldots, x_D\rangle / \langle 1\rangle \cong \langle x_1, \ldots, x_D\rangle.
\end{equation}
Hence, a representative set of independent $X$-type logical operators is given by the set of coordinate half-cubes defined by $x_i=1$ for $i=1, \ldots, D$. The space of $Z$-type logical operators is 
\begin{equation}
    \mathcal{L}_Z = C_X/C_Z^\perp = \RM_*(D-1, D)/\RM_*(D-2, D) \cong \left\langle x_2\cdots x_D,\ldots, x_1\cdots\hat{x}_i\cdots x_D, \ldots, x_1\cdots x_{D-1} \right\rangle, 
\end{equation}
where the circumflex on $\hat{x}_i$ indicates that this factor is omitted. The monomial $x_2\cdots x_D$ is non-zero only when $x_2=\cdots=x_D=1$, and therefore corresponds to an edge of length 1 parallel to the $x_1$-axis and connected to the all-ones vertex. Similar considerations apply to the other monomials in $\mathcal{L}_Z$.

\subsection{The \texorpdfstring{$[\![7, 3, 2]\!]$}{[[7, 3, 2]]} punctured hypercube code}
As a concrete example, we now describe the punctured hypercube code for $D=3$. This is a $[\![7, 3, 2]\!]$ code that may be obtained by removing a single qubit from the $[\![8, 3, 2]\!]$ colour code~\cite{Campbell_smallest_2016,PRXQuantum.6.020338,Wang_2024}.

The eight points of $\Omega = \mathbb{F}_2^3$ are the length-3 binary strings forming the vertices of a cube. By deleting the vertex $000$, which lies at the origin, we obtain the punctured cube $P=\Omega\backslash\{0\}$. A qubit is placed at each vertex of $P$. The code has a unique $X$-type stabiliser $s_0 = \prod_{v\in P} X_v$, along with three $Z$-type stabilisers corresponding to the square faces of $P$ not containing the $000$ vertex:
\begin{equation}
         s_1 = Z_{100}Z_{101}Z_{110}Z_{111}, \qquad s_2 = Z_{010}Z_{011}Z_{111}Z_{110}, \qquad \text{and} \qquad s_3 = Z_{001}Z_{011}Z_{111}Z_{101}.
\end{equation}
These arise from the indicator functions $x_1$, $x_2$, and  $x_3$ respectively: any degree-1 polynomial $f(x)$ with $f(0)=0$ can be expressed as a linear combination of these monomials. The $X$-type logical operators may be taken to be the same faces, only with the $Z$'s replaced with $X$'s, i.e.
\begin{subequations}
\begin{equation}
    \overbar{X}_1 = X_{100}X_{101}X_{110}X_{111}, \qquad \overbar{X}_2 = X_{010}X_{011}X_{111}X_{110}, \qquad \text{and} \qquad \overbar{X}_3 = X_{001}X_{011}X_{111}X_{101}.
\end{equation}
(Note that the $Z$ stabilisers are always squares, while the logical $X$ operators are generally codimension-1 subcubes; these shapes are the same only in dimension $D=3$.) The $Z$-type logical operators consist of three edges emanating from the $111$-vertex,
\begin{equation}
    \overbar{Z}_1 = Z_{011}Z_{111}, \qquad \overbar{Z}_2 = Z_{101}Z_{111}, \qquad \overbar{Z}_3 = Z_{110}Z_{111}.
\end{equation}
\end{subequations}

The logical $\overline{\CNOT}_{21}$ gate can be implemented by simultaneously exchanging qubits $101$ and $111$, and qubits $100$ and $110$. Geometrically, this corresponds to reflecting the square face with vertices $100$, $101$, $111$, and $110$ about its symmetry axis parallel to the line $111$-$110$. To see this, observe that this transformation changes the logical operators as
\begin{equation}
\begin{aligned}
    &\overbar{X}_1 \mapsto \overbar{X}_1, \qquad &&\overbar{X}_2 \mapsto X_{010}X_{011}X_{100}X_{101} = \overbar{X}_1\overbar{X}_2, \qquad  &&\overbar{X}_3 \mapsto \overbar{X}_3, \\
    &\overbar{Z}_1 \mapsto Z_{011}Z_{101} = \overbar{Z}_1\overbar{Z}_2, \qquad  &&\overbar{Z}_2 \mapsto\overbar{Z}_2, \qquad  &&\overbar{Z}_3 \mapsto Z_{100}Z_{101} = s_1\overbar{Z}_3,
\end{aligned}
\end{equation}
which, after accounting for equivalence modulo stabilisers, is exactly the action of the $\overline{\CNOT}_{21}$ gate. By symmetry, all other logical $\CNOT$ gates may be similarly implemented by reflecting square faces across the corresponding symmetry axes.

\section{An \texorpdfstring{$(\!(8, 2^4, 2)\!)$}{((8, 16, 2))} phantom code}\label{Sec:PG}
\footnotetext[50]{A QEC code $Q$ is of type $(\!(n, K, d)\!)$ when it encodes a logical Hilbert space $\mathcal{A}\cong \mathbb{C}^K$ into $n$ physical qubits, such that the Knill-Laflamme conditions hold for all errors of weight at most $d-1$. If $K=2^k$, then $Q$ encodes $k$ logical qubits. The round brackets indicate that $Q$ need not be a Pauli stabiliser code.}
The bound given in \cref{Thm:STPbound} holds for all $k\geq2$ with the exception of $k=4$. This isolated discrepancy stems from the exceptional isomorphism $\GLx\cong A_8$, which implies that $\mu(\GLx) = 8$ in contrast to the usual value $\mu(\GLkF) = 2^k-1$. In this section we discuss this isomorphism and its relation to the finite projective geometry $\PG = \mathbb{F}_2^4\backslash\{0\}$, before using it to construct a nontrivial phantom code of type $(\!(8, 2^4, 2)\!)$~\cite{Note50}.

We now give a brief outline of the organisation of this section. \cref{Sec:PG32} collects the group-theoretic and geometric background underlying the $(\!(8, 2^4, 2)\!)$ phantom code. Readers interested only in the explicit code construction may skip ahead to \cref{Sec:CodeConstruction} and consult \cref{Tab:Lines,Tab:xinL} for the required data. In \cref{Sec:PermutationInv}, we demonstrate that, in addition to the $A_8$-invariance of the code space required by the phantom property, odd permutations also preserve the code, meaning that it is invariant under the full physical permutation group $S_8$. In fact, odd permutations realise the natural duality of $\PG$, and implement a non-Clifford gate $U_{\text{c}}$ on the logical space. In \cref{Sec:NonClifford} we show that, in addition to this operation, the code also possesses a strongly transversal non-Clifford gate implemented by $T^{\otimes 8}$. By Schur-Weyl duality, the invariance of the code under $S_8$ guarantees that it decomposes into permutation modules. In \cref{Sec:PermutationModule} we analyse this decomposition and employ it to find a complete set of (non-Pauli) operators stabilising the code space. Finally, in \cref{Sec:NoPauli}, we prove that no Pauli stabiliser phantom code of type $[\![8, 4]\!]$ exists. This shows that our code is not locally unitary equivalent to any phantom stabiliser code, and explains its absence from the exhaustive enumeration of CSS phantom codes carried out in Ref.~\cite{koh2026entanglinglogicalqubitsphysical}.

\subsection{\texorpdfstring{$\PG$}{PG(3, 2)} and the exceptional isomorphism \texorpdfstring{$\GLx\cong A_8$}{GL(4, 2)=A8}}\label{Sec:PG32}
Rather than reproducing a complete proof of the abstract isomorphism $\GLx\cong A_8$, which may be found, for example, in Refs.~\cite{Conway1999, Murray, Taylor}, we provide here an explicit realisation that is particularly convenient for the constructions that follow. This construction is entirely concrete, although it is necessarily non-canonical, since it depends both on the choice of basis of $\mathbb{F}_2^4$ and the labelling of the eight letters permuted by $A_8$. To construct the isomorphism, we will define a map $\phi:\GLx\to A_8$ by specifying its action on a generating set of $\GLx$, before checking that it is well-defined and confirming that its image is all of $A_8$.

Let $E_{ij}$ be the $4\times 4$ matrix with entries $(E_{ij})_{kl} = \delta_{ik}\delta_{jl}$, and let $g_{ij} = \mathbbm{1}+E_{ij}\in \GLx$ be an elementary transvection. The six adjacent transvections
% \begin{equation}\label{Eq:transvections}
%     g_{12}, g_{23}, g_{34}, g_{21}, g_{32}, \text{ and } g_{43}
% \end{equation}
\begin{equation}\label{Eq:transvections}
    \begin{aligned}
        & g_{12} = \smqty(1 & 1 & 0 & 0 \\ 0 & 1 & 0 & 0 \\ 0 & 0 & 1 & 0 \\ 0 & 0 & 0 & 1), && g_{23} = \smqty(1 & 0 & 0 & 0 \\ 0 & 1 & 1 & 0 \\ 0 & 0 & 1 & 0 \\ 0 & 0 & 0 & 1), && g_{34} = \smqty(1 & 0 & 0 & 0 \\ 0 & 1 & 0 & 0 \\ 0 & 0 & 1 & 1 \\ 0 & 0 & 0 & 1), \\
        & g_{21} = \smqty(1 & 0 & 0 & 0 \\ 1 & 1 & 0 & 0 \\ 0 & 0 & 1 & 0 \\ 0 & 0 & 0 & 1), && g_{32} = \smqty(1 & 0 & 0 & 0 \\ 0 & 1 & 0 & 0 \\ 0 & 1 & 1 & 0 \\ 0 & 0 & 0 & 1), && g_{43} = \smqty(1 & 0 & 0 & 0 \\ 0 & 1 & 0 & 0 \\ 0 & 0 & 1 & 0 \\ 0 & 0 & 1 & 1).
    \end{aligned}
\end{equation}
generate $\GLx$~\cite{ConradSimple}. We specify their images in $A_8$ under $\phi$ to be
\begin{equation}\label{Eq:PhiDef}
\begin{aligned}
    &\phi(g_{12}) \coloneqq (1\; 2)(3\; 4)(5\; 6)(7\; 8), &&\phi(g_{23}) \coloneqq (1\; 5)(2\; 8)(3\; 7)(4\; 6), &&\phi(g_{34}) \coloneqq (1\; 2)(3\; 8)(4\; 7)(5\; 6),\\
    &\phi(g_{21}) \coloneqq (1\; 4)(2\; 7)(3\; 8)(5\; 6), &&\phi(g_{32}) \coloneqq (1\; 6)(2\; 5)(3\; 7)(4\; 8), &&\phi(g_{43}) \coloneqq (1\; 4)(2\; 3)(5\; 6)(7\; 8).
\end{aligned}
\end{equation}
As noted above, this choice is not canonical but simply one convenient labelling for which the ensuing calculations work out cleanly. To prove that this assignment extends to a homomorphism, we make use of the standard presentation of $\GLkF$ in terms of the elementary transvections.

\begin{proposition}[Theorem 2.3.8, Ref.~\cite{Hahn1989}]
    For $k\geq 3$, the group $\GLkF$ has a presentation with generators $e_{ij}$ for $i\neq j$, subject to the relations
    \begin{enumerate}
        \item $e_{ij}^2 = 1$,
        \item $[e_{ij}, e_{kl}] = 1$ if $j\neq k$ and $i\neq l$, and
        \item $[e_{ij}, e_{jk}] = e_{ik}$ if $i$, $j$, and $k$ are distinct,
    \end{enumerate}
    where $[x, y] \coloneqq x^{-1}y^{-1}xy$ is the group commutator.
\end{proposition}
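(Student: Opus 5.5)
The plan is to prove this the classical way, via the Steinberg group. Let $\mathrm{St}_k$ be the abstract group with generators $e_{ij}$ and relations (1)--(3). Sending $e_{ij}\mapsto g_{ij}=\mathbbm{1}+E_{ij}$ respects the relations, since each is a direct matrix check: $E_{ij}E_{kl}=\delta_{jk}E_{il}$. So we get a homomorphism $\psi:\mathrm{St}_k\to\GLkF$. It is surjective because the elementary transvections generate $\GLkF$ (row reduction, as used above via \cite{ConradSimple}). It then remains to show $|\mathrm{St}_k|\le|\GLkF|=\prod_{i=0}^{k-1}(2^k-2^i)$, which forces $\psi$ to be an isomorphism.

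For the order bound I would run a Bruhat-decomposition argument inside $\mathrm{St}_k$.

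\textbf{Step 1: unipotent subgroup.} Let $U=\langle e_{ij}: i<j\rangle$. Using (2) and (3), any word in these generators can be rewritten as an ordered product $\prod_{i<j}e_{ij}^{a_{ij}}$ with $a_{ij}\in\{0,1\}$, taken in a fixed order compatible with the height $j-i$. Commuting two generators costs only generators of strictly larger height, so the rewriting terminates. Hence $|U|\le 2^{k(k-1)/2}$. More generally, for any set $\Phi$ of root positions closed under the relevant sums, the subgroup $U_\Phi$ has order at most $2^{|\Phi|}$.

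\textbf{Step 2: Weyl group.} Set $w_{ij}=e_{ij}e_{ji}e_{ij}$. Using only (1)--(3), verify that conjugation by $w_{ij}$ permutes the generators $e_{kl}$ according to the transposition $(i\,j)$ of indices; over $\mathbb{F}_2$ there are no signs. Also verify that the $s_i=w_{i,i+1}$ satisfy the Coxeter relations of $S_k$. This needs $k\ge3$ to have a third index available when deriving $w_{ij}^2=1$ and the braid relation from (3). Consequently the subgroup $W=\langle s_i\rangle$ is a quotient of $S_k$, so $|W|\le k!$, and $W$ normalises the set of root subgroups.

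\textbf{Step 3: Bruhat closure.} Show that $X=\bigcup_{w\in W}U\,w\,U_{w}$ is closed under right multiplication by each $s_i$ and each $e_{i,i+1}$, where $U_w=\prod_{i<j,\,w(i)>w(j)}\langle e_{ij}\rangle$. Since these elements generate $\mathrm{St}_k$ (every $e_{ij}$ is a $W$-conjugate of some $e_{12}$), closure gives $X=\mathrm{St}_k$. This is the usual $BN$-pair axiom check $s\,U\,w\subseteq UwU\cup UswU$. It reduces to a rank-one computation in $\langle e_{i,i+1},e_{i+1,i}\rangle$, where $e_{i+1,i}=s_ie_{i,i+1}s_i$ and $e_{i+1,i}=e_{i,i+1}\,s_i\,e_{i,i+1}$ hold by the definition of $w_{ij}$ together with (1). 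Counting then gives
\[
|\mathrm{St}_k|\le\sum_{w\in S_k}2^{k(k-1)/2}\,2^{\ell(w)}=|\GLkF|,
\]
where the final equality is the standard Poincaré-polynomial identity $\sum_w q^{\ell(w)}=\prod_{i=1}^k\frac{q^i-1}{q-1}$ at $q=2$.

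I expect Step 3 to be the main obstacle, specifically the bookkeeping that $s_i$ times $Uw$ lands in the right double coset using only relations (1)--(3). My first attempt would be to factor $U=\langle e_{i,i+1}\rangle\cdot U^{(i)}$, where $U^{(i)}$ is normalised by $s_i$ by Step 2. This isolates the rank-one case, which can be checked by hand in the group generated by $e_{i,i+1}$ and $e_{i+1,i}$. If this becomes unwieldy, the fallback is to cite that for finite fields and $k\ge3$ the Steinberg group is centrally closed with $K_2(k,\mathbb{F}_2)=0$, so $\psi$ is injective.
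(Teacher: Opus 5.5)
The paper does not prove this statement: it is quoted from Hahn--O'Meara and used only to justify that checking relations (1)--(3) on the six images $\phi(g_{ij})$ is enough for $\phi$ to extend to a homomorphism. Bijectivity of $\phi$ is then settled by an order computation. Your proposal instead gives a self-contained proof along Steinberg's classical lines, and it works. The relations hold for $g_{ij}=\mathbbm{1}+E_{ij}$, transvections generate, and a Bruhat decomposition of the abstract group $\mathrm{St}_k$ gives $|\mathrm{St}_k|\le 2^{k(k-1)/2}\sum_{w\in S_k}2^{\ell(w)}=|\GLkF|$. Over $\mathbb{F}_2$ this is cleaner than Steinberg's general argument. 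Because $\mathbb{F}_2^\times=1$, the abstract torus is trivial (for instance $w_{ij}^{-1}=e_{ij}e_{ji}e_{ij}=w_{ij}$ by (1) alone), so $B=U$. You can therefore replace ``the kernel of $\mathrm{St}_k\to\GLkF$ lies in the torus'' by a pure order count. The citation buys brevity. Your route proves exactly the injectivity of $\mathrm{St}_4(\mathbb{F}_2)\to\mathsf{GL}_4(\mathbb{F}_2)$ that the paper relies on, and makes visible where $k\ge 3$ enters.

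Three points need tightening.

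First, the third index is not needed for $w_{ij}^2=1$, which is immediate from (1). It is needed for $w_{ij}e_{ij}w_{ij}=e_{ji}$, equivalently $(e_{ij}e_{ji})^3=1$. In particular, $e_{i+1,i}=s_ie_{i,i+1}s_i$ does \emph{not} follow from the definition of $s_i$ together with (1). The group generated by $e_{i,i+1},e_{i+1,i}$ subject only to (1) is infinite dihedral, so your rank-one check cannot be done inside that subgroup in isolation. Get it from Step 2. For $m\notin\{i,j\}$, relations (1)--(3) give $w_{ij}e_{jm}w_{ij}=e_{im}$ and $w_{ij}e_{mi}w_{ij}=e_{mj}$. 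Hence $w_{ij}e_{ij}w_{ij}=w_{ij}[e_{im},e_{mj}]w_{ij}=[e_{jm},e_{mi}]=e_{ji}$.

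Second, the set $\bigcup_w U\,w\,U_w$ presupposes $U\dot wU=U\dot wU_w$. That needs a factorisation of $U$ along the complementary closed sets of non-inversions and inversions of $w$, which your height-ordered normal form from Step 1 does not directly supply. It is simpler to take $X=\bigcup_w U\dot wU$.
\begin{itemize}
\item Write $U=U^{(i)}\langle e_{i,i+1}\rangle$ with $U^{(i)}\trianglelefteq U$ normalised by $s_i$. This gives $s_iU\dot w\subseteq U\dot wU\cup Us_i\dot wU$ by the usual two-case argument on the sign of $w^{-1}\alpha_i$.
\item From $Us_iU=Us_i\langle e_{i,i+1}\rangle$ you get $|Us_iU|\le 2|U|$.
\item Hence $|U\dot wU|\le|Us_{i_1}U\cdots Us_{i_\ell}U|\le 2^{\ell}|U|$ along a reduced word.
\item Use that $\psi(s_i)$ is the transposition matrix, so $W\cong S_k$ and lengths are well defined.
\end{itemize}

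Third, your fallback is misstated. For $k=3,4$ the group $\mathrm{St}_k(\mathbb{F}_2)\cong\GLkF$ is not centrally closed, since $\mathsf{GL}_3(\mathbb{F}_2)$ and $\mathsf{GL}_4(\mathbb{F}_2)\cong A_8$ have Schur multiplier $\mathbb{Z}/2$. The case $k=4$ is exactly the one the paper uses. What is true is Steinberg's $K_2(k,\mathbb{F}_2)=0$, but that is the statement itself, so it is a citation rather than a fallback.
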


These relations allow one to recover all elementary transvections from the six adjacent ones. For example, $g_{13} = [g_{12}, g_{23}]$, $g_{24} = [g_{23}, g_{34}]$, $g_{14} = [g_{13}, g_{34}] = [g_{12}, g_{24}]$, and similarly for the remaining $g_{ij}$. The same formulas therefore determine the images of the remaining transvections under $\phi$. Concretely, one finds
\begin{subequations}\label{Eq:PhiExtra}
\begin{align}
    \phi(g_{13}) ={}& [\phi(g_{12}), \phi(g_{23})] = (1\;3)(2\;4)(5\;7)(6\;8)\\
    \phi(g_{24}) ={}& [\phi(g_{23}), \phi(g_{34})] = (1\;7)(2\;4)(3\;5)(6\;8)\\
    \phi(g_{14}) ={}& [\phi(g_{13}), \phi(g_{34})] = (1\;5)(2\;6)(3\;7)(4\;8)
\end{align}
\end{subequations}
(these particular elements will be needed later when we construct a certain stabiliser subgroup).

\begin{lemma}
    The rule $\phi$ extends to a well-defined isomorphism $\phi:\GLx\overset{\sim}{\to}A_8$.
\end{lemma}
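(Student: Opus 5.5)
The plan is to use the Steinberg-type presentation of $\GLx$ (the proposition from Ref.~\cite{Hahn1989} with $k=4$). The argument has three parts: construct a homomorphism from the presentation, show it is surjective onto $A_8$, and conclude injectivity by comparing orders.

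\textbf{Defining images of all generators.} First I fix images for all twelve generators $e_{ij}$, not just the six adjacent ones. For the three upper-triangular transvections I use the values already computed in \cref{Eq:PhiExtra}. For the lower-triangular ones I proceed the same way, setting
\begin{equation*}
\phi(g_{31}) \coloneqq [\phi(g_{32}),\phi(g_{21})], \qquad \phi(g_{42}) \coloneqq [\phi(g_{43}),\phi(g_{32})], \qquad \phi(g_{41}) \coloneqq [\phi(g_{42}),\phi(g_{21})].
\end{equation*}
This gives twelve explicit involutions in $S_8$. Each is a product of four disjoint transpositions, hence even, so all lie in $A_8$.

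\textbf{Checking the relations.} Next I verify that these twelve permutations satisfy relations 1--3 of the presentation. By von Dyck's theorem, $\phi$ then extends to a well-defined homomorphism $\GLx\to A_8$.
\begin{itemize}
\item Relation 1 holds automatically, since every image is a product of disjoint transpositions.
\item Relation 2 requires the images to commute for each pair $(ij),(kl)$ with $j\neq k$ and $i\neq l$.
\item Relation 3 requires $[\phi(g_{ij}),\phi(g_{jk})]=\phi(g_{ik})$ for all distinct triples $i,j,k$. This includes the triples not used in the definitions, for example $[\phi(g_{21}),\phi(g_{13})]=\phi(g_{23})$.
\end{itemize}
This is a finite computation with permutations of eight letters. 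It is the main obstacle: there are dozens of relations, each verified by composing permutations. To organise it, I would note that all twelve images fix the partition into the blocks $\{1,2,3,4,5,6,7,8\}$ structured as the affine space $\mathbb{F}_2^3$. The plan is to identify the letters with vectors in $\mathbb{F}_2^3$ so that each image becomes a translation or an affine map. Under that identification, commutation checks reduce to linear algebra. Failing a clean identification, I would simply tabulate the permutations and check each relation directly, possibly by computer.

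\textbf{Surjectivity and injectivity.} For surjectivity, I show that the image $H=\phi(\GLx)\le A_8$ is all of $A_8$. I would exhibit that $H$ is transitive on the eight letters (orbits are visible from the generators) and contains an element such as a product of two images of order $7$ or $15$... Since $|\GLx|=20160=|A_8|$, it is cleaner to argue via the kernel instead. $\GLx$ is simple (for $k=4$), and $\phi$ is nontrivial because $\phi(g_{12})\neq 1$. So $\ker\phi$ is trivial and $\phi$ is injective. The image then has order $20160=|A_8|$ inside $A_8$, so $\phi$ is onto. This makes the order comparison do all the work beyond the relation check, which remains the single substantive step.
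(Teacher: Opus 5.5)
Your argument is correct. Its first half matches the paper's: both obtain the homomorphism by checking the relations of the cited presentation on explicit permutations. Writing out all twelve images, with $\phi(g_{31}),\phi(g_{42}),\phi(g_{41})$ defined as commutators, makes explicit what the paper leaves implicit. These images do come out as involutions, e.g.\ $\phi(g_{31})=(1\;7)(2\;4)(3\;6)(5\;8)$, and their evenness is automatic anyway since commutators are even.

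The routes differ in the bijectivity step. The paper uses a computer-algebra calculation showing that the six images generate a subgroup of $S_8$ of order $20160$, so injectivity and surjectivity follow by counting. You instead use the simplicity of $\GLx$ (which the paper already invokes in the main text) together with $\phi(g_{12})\neq 1$ to get $\ker\phi=1$, and then count. Your route trades that second machine computation for a standard theorem, leaving the relation check as the only finite verification. The paper's route does not depend on simplicity.

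One caveat concerns your organising device: it cannot work for all twelve images at once. They generate $A_8$, which is primitive on the eight letters and much larger than $\mathsf{AGL}_3(\mathbb{F}_2)$ (order $1344$), so there is no common block system or common affine structure for them. Your fallback of tabulating the permutations and checking each relation directly is exactly what the paper does, and it is all that is needed.
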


\begin{proof}
    To show that the assignment given in \cref{Eq:PhiDef} defines a homomorphism, it is sufficient to verify that the chosen images of the elementary generators satisfy the defining relations of $\GLx$. This is readily confirmed by direct calculation. Since each $\phi(g_{ij})$ is a product of four transpositions, it is an even permutation, so the image lies in $A_8$. To prove that $\phi$ is an isomorphism, it suffices to compare the orders of the groups $A_8$, $\GLx$, and $\Im\phi$. On the one hand, $|\GLx| = \prod_{l=0}^{4-1}(2^4-2^l) = 20160$, while $|A_8| = 8!/2 = 20160$. Finally, it can be verified - for example, by using a computer algebra system such as \texttt{Mathematica} - that the subgroup of $S_8$ generated by the six permutations $\phi(g_{12}), \phi(g_{23}), \phi(g_{34}), \phi(g_{21}), \phi(g_{32})$, and $\phi(g_{43})$ also has order 20160. Hence, $\phi$ is an injective homomorphism from $\GLx$, and its image is all of $A_8$, so $\phi$ is an isomorphism. 
\end{proof}

\begin{figure}
    \centering
\sidesubfloat[]{\hspace{-1.5em}\includegraphics[width=0.47\linewidth]{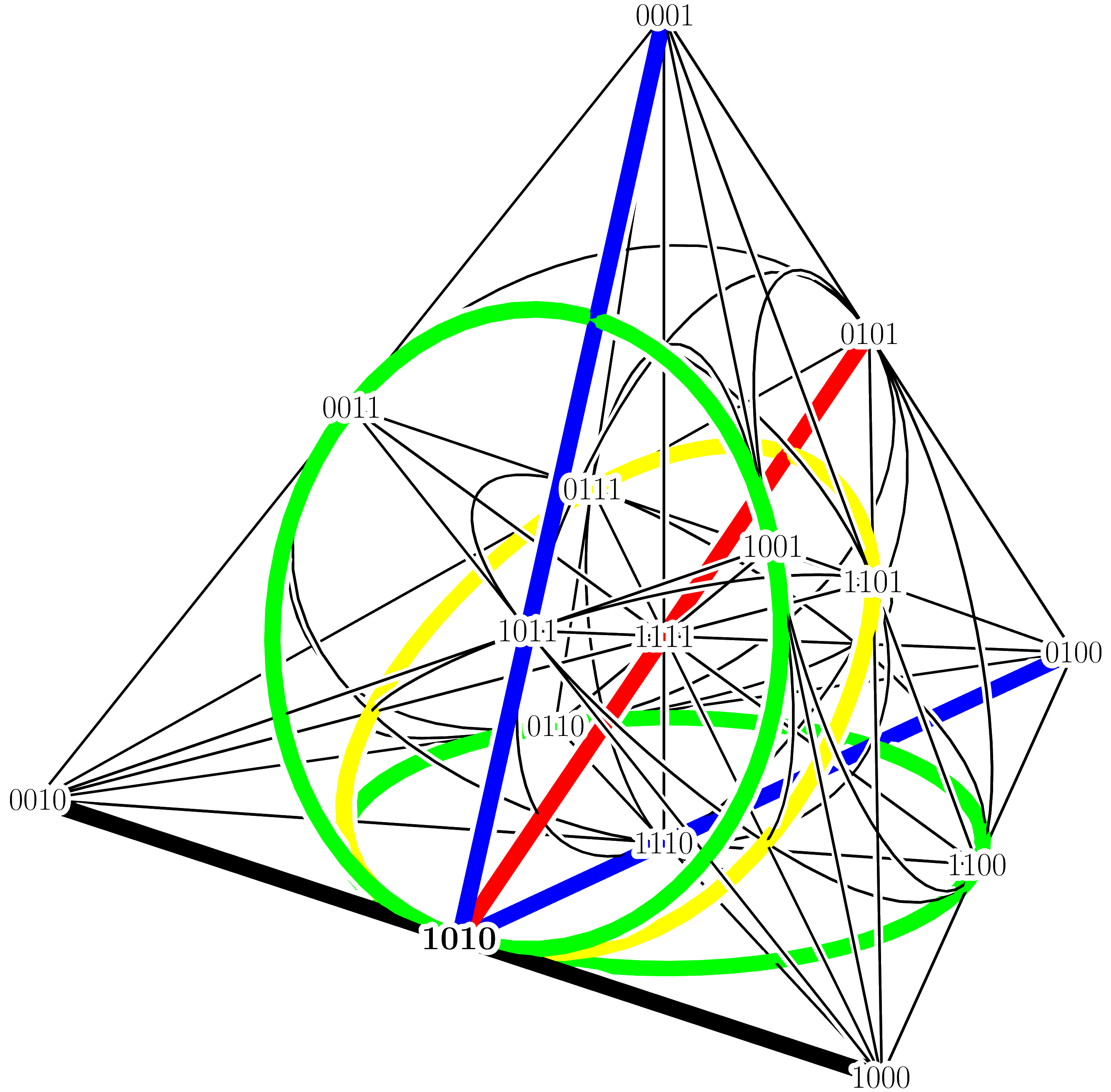}}
\hfill
\sidesubfloat[]{\hspace{-1.5em}\includegraphics[width=0.47\linewidth]{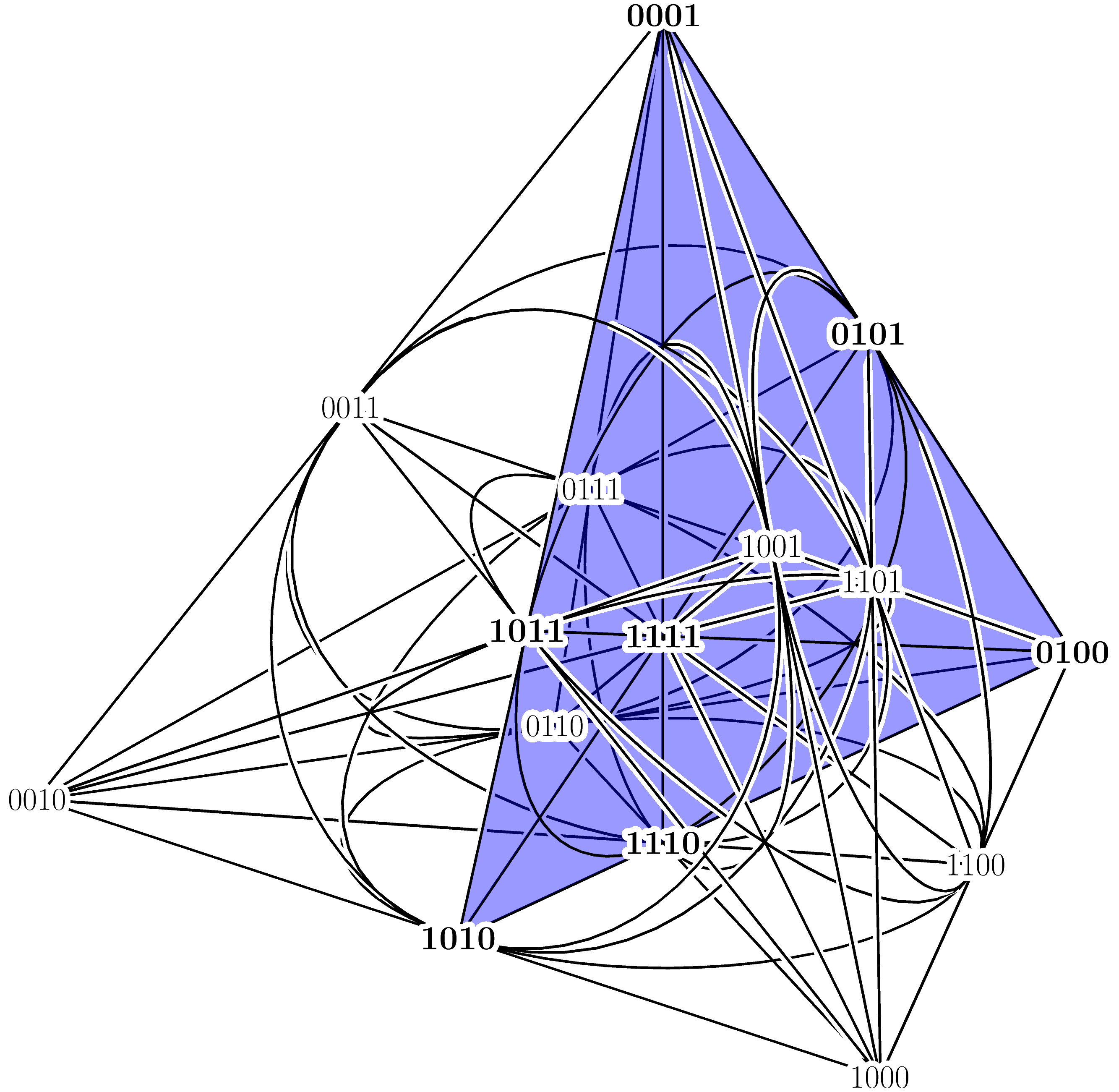}}
    \caption{\textbf{$\boldsymbol{\mathrm{PG}(3, 2)}$.} The finite projective geometry $\PG$ consists of the 15 non-zero elements of $\mathbb{F}_2^4$. \textbf{(a)} The 35 projective lines of $\PG$, shown here both as straight lines and loops within the tetrahedron, correspond to the 2-dimensional linear subspaces of $\mathbb{F}_2^4$. Every line contains 3 points, and every point is contained in 7 lines. (The colours of the highlighted lines were chosen for purely aesthetic reasons, and have no intrinsic meaning.) \textbf{(b)} The 15 projective planes of $\PG$ correspond to the 3-dimensional linear subspaces of $\mathbb{F}_2^4$. Every plane contains 7 lines and 7 points. The particular plane highlighted here is the orthogonal complement of the point $1010$.}
    \label{Fig:PG32}
\end{figure}

We now use $\phi$ to compare two natural, faithful, transitive actions of the isomorphic groups $\GLx$ and $A_8$. 

First, recall that $\GLx$ acts faithfully on the non-zero vectors of $\mathbb{F}_2^4$ via matrix multiplication (cf.~\cref{Sec:GLKFAction}). Since $\mathbb{F}_2^*=\{1\}$, each 1-dimensional linear subspace of $\mathbb{F}_2^4$ contains exactly one non-zero vector, so we can identify the points of the projective space $\PG\coloneqq \mathbb{F}_2^4\backslash\{0\}$ with the $2^4-1=15$ non-zero vectors themselves. 

Before turning to the induced group action of $\GLx$ on subspaces, it is useful to recount some facts about the finite geometry $\PG$. Every pair of distinct points $u, v\in\PG$ spans a unique line, namely $\langle u, v\rangle = \{u, v, u+v\}$, and the set of all such lines is denoted
\begin{equation}
    L \coloneqq \left\{\ell = \{u, v, u+v\}\;\middle|\; u, v\in\PG, u\neq v\right\}.
\end{equation}
These projective lines correspond to the 2-dimensional linear subspaces of $\mathbb{F}_2^4$, and their number is given by the Gaussian binomial coefficient $|L| = \genfrac{[}{]}{0pt}{}{4}{2}_2 = 35$. Dually, the planes of $\PG$ correspond to the 3-dimensional linear subspaces of $\mathbb{F}_2^4$. There are $\genfrac{[}{]}{0pt}{}{4}{3}_2 = \genfrac{[}{]}{0pt}{}{4}{1}_2 = 15$ such planes, and each such plane contains $2^3-1 = 7$ points. The incidence structure is highly regular: every point lies on exactly 7 lines, every line contains exactly 3 points, every plane contains exactly 7 points and 7 lines, and every line lies in exactly 3 planes. This remarkably symmetric structure is illustrated in \cref{Fig:PG32} with the `tetrahedral model' of $\PG$. \cref{Tab:Lines,Tab:xinL} moreover give the incidence data for all lines and points in $\PG$.

The action of $\GLx$ on the vectors of $\mathbb{F}_2^4$ preserves the dimensions of linear subspaces, and therefore induces actions on the points, lines and planes of $\PG$. In particular, the action of $\GLx$ on $L$ given by $g\cdot\{u, v, u+v\} = \{gu, gv, gu+gv\}$ is both faithful and transitive. On the $A_8$ side, there is a standard degree-35 action on the set $W$ of unordered partitions of an 8-element set into two subsets of size 4~\cite[Chapter 6]{Taylor}. It is convenient to encode such partitions by binary strings: writing $\bar{b}:=1\cdots 1+b$ for the bitwise complement of a binary string $b\in\mathbb{F}_2^8$, define
\begin{equation}
    W \coloneqq \{b\in \mathbb{F}_2^8 \;|\; |b| = 4\}/\{b\sim \bar{b}\},
\end{equation}
where $|b|$ denotes the Hamming weight of $b$. Note that $|W| = \binom{8}{4}/2 = 35 = |L|$. Since the natural permutation action of $A_8$ is 4-transitive~\cite{wolfram}, it induces a transitive action on $W$.

The key point for our purposes is that, perhaps surprisingly, the action of $\GLx$ on the 35 projective lines $L$ is isomorphic to the action of $A_8$ on the unordered 4-4 bipartitions $W$. In \cref{Sec:CodeConstruction} we will utilise this correspondence to construct a length-8 phantom code with four logical qubits. To prove this isomorphism, we will use the following elementary lemma, which provides a convenient criterion for recognising when two transitive $G$-sets are equivalent. Recall that if a group $G$ acts on a set $X$, the stabiliser of an element $x\in X$ is $\Stab_G(x) \coloneqq\{g\in G\,|\,g\cdot x=x\}$.

\begin{table}[]
\setlength{\tabcolsep}{.4em}
    \centering
    \begin{tabular}{ccclc||ccclc}
$\ell$ & $\{u, v, u+v\}$ & $b(\ell)$ & Generators & $\ell^\perp$  & $\ell$ & $\{u, v, u+v\}$ & $b(\ell)$ & Generators & $\ell^\perp$ \\
\hline
1 & 1000, 0100, 1100 & 10101010 & $\mathrm{Id}$ & 20 & 19 & 1100, 1011, 0111 & 00100111 & $g_{21}g_{43}g_{32}$ & 35 \\ 
2 & 1000, 0010, 1010 & 00110011 & $g_{23}g_{32}$ & 10 & 20 & 0010, 0001, 0011 & 10100110 & $g_{23}g_{12}g_{34}g_{32}g_{23}g_{21}g_{43}g_{32}$ & 1 \\ 
3 & 1000, 0110, 1110 & 01100110 & $g_{32}$ & 28 & 21 & 0010, 1001, 1011 & 01101010 & $g_{23}g_{43}g_{34}g_{32}g_{21}g_{43}g_{32}$ & 11 \\ 
4 & 1000, 0001, 1001 & 11110000 & $g_{34}g_{23}g_{43}g_{32}$ & 8 & 22 & 0010, 0101, 0111 & 01110100 & $g_{23}g_{12}g_{32}g_{23}g_{21}g_{43}g_{32}$ & 5 \\ 
5 & 1000, 0101, 1101 & 01011010 & $g_{32}g_{43}g_{32}$ & 22 & 23 & 0010, 1101, 1111 & 01000111 & $g_{43}g_{32}g_{23}g_{21}g_{43}g_{32}$ & 17 \\ 
6 & 1000, 0011, 1011 & 11000011 & $g_{23}g_{43}g_{32}$ & 12 & 24 & 1010, 0001, 1011 & 10101001 & $g_{34}g_{23}g_{34}g_{32}g_{21}g_{43}g_{32}$ & 9 \\ 
7 & 1000, 0111, 1111 & 01101001 & $g_{43}g_{32}$ & 30 & 25 & 1010, 1001, 0011 & 01100101 & $g_{23}g_{34}g_{32}g_{21}g_{43}g_{32}$ & 13 \\ 
8 & 0100, 0010, 0110 & 11010010 & $g_{12}g_{23}g_{21}g_{32}$ & 4 & 26 & 1010, 0101, 1111 & 11010001 & $g_{43}g_{32}g_{21}g_{43}g_{32}$ & 26 \\
9 & 0100, 1010, 1110 & 10000111 & $g_{12}g_{21}g_{32}$ & 24 & 27 & 1010, 1101, 0111 & 11100010 & $g_{34}g_{43}g_{32}g_{21}g_{43}g_{32}$ & 34 \\ 
10 & 0100, 0001, 0101 & 01100011 & $g_{12}g_{34}g_{23}g_{21}g_{43}g_{32}$ & 2 & 28 & 0110, 0001, 0111 & 00111010 & $g_{12}g_{34}g_{32}g_{23}g_{21}g_{43}g_{32}$ & 3 \\ 
11 & 0100, 1001, 1101 & 00110110 & $g_{12}g_{34}g_{21}g_{43}g_{32}$ & 21 & 29 & 0110, 1001, 1111 & 10100011 & $g_{43}g_{34}g_{32}g_{21}g_{43}g_{32}$ & 29 \\ 
12 & 0100, 0011, 0111 & 10110001 & $g_{12}g_{23}g_{21}g_{43}g_{32}$ & 6 & 30 & 0110, 0101, 0011 & 00010111 & $g_{12}g_{32}g_{23}g_{21}g_{43}g_{32}$ & 7 \\ 
13 & 0100, 1011, 1111 & 00011011 & $g_{12}g_{21}g_{43}g_{32}$ & 25 & 31 & 0110, 1101, 1011 & 01110001 & $g_{12}g_{32}g_{21}g_{43}g_{32}$ & 33 \\ 
14 & 1100, 0010, 1110 & 11100001 & $g_{23}g_{21}g_{32}$ & 16 & 32 & 1110, 0001, 1111 & 00110101 & $g_{34}g_{32}g_{23}g_{21}g_{43}g_{32}$ & 15 \\ 
15 & 1100, 1010, 0110 & 01001011 & $g_{21}g_{32}$ & 32 & 33 & 1110, 1001, 0111 & 01010011 & $g_{34}g_{32}g_{21}g_{43}g_{32}$ & 31 \\ 
16 & 1100, 0001, 1101 & 10010011 & $g_{34}g_{23}g_{21}g_{43}g_{32}$ & 14 & 34 & 1110, 0101, 1011 & 10110010 & $g_{32}g_{21}g_{43}g_{32}$ & 27 \\ 
17 & 1100, 1001, 0101 & 00111001 & $g_{34}g_{21}g_{43}g_{32}$ & 23 & 35 & 1110, 1101, 0011 & 00101011 & $g_{32}g_{23}g_{21}g_{43}g_{32}$ & 19 \\
18 & 1100, 0011, 1111 & 01110010 & $g_{23}g_{21}g_{43}g_{32}$ & 18 & 
    \end{tabular}
    \caption{The projective space $\PG$ contains exactly 35 projective lines $\ell = \langle u, v\rangle$. Since the action of $\GLx$ on these lines is transitive, every line $\ell$ may be written as $\ell = g_\ell\cdot \ell_0$ for some $g_\ell\in \GLx$, where $\ell_0 = \langle e_1, e_2\rangle$ is the fixed reference line. The column labelled `Generators' records one such element $g_{\ell}$, expressed as a word in the elementary transvections listed in \cref{Eq:transvections}.     Under the action isomorphism between $\GLx\curvearrowright L$ and $A_8\curvearrowright W$, each line corresponds to an unordered complementary pair $w(\ell) = \{b(\ell), \bar{b}(\ell)\}\in W$.     The column `$b(\ell)$' provides a representative string $b(\ell)$, chosen as $b(\ell) \coloneqq \phi(g_\ell)\cdot b_0$, where $b_0 = b(\ell_0) = 10101010$. Finally, the column `$\ell^\perp$' gives the index of the dual lines under the orthogonal complement operation defined in \cref{Eq:Dual}}
    \label{Tab:Lines}
\end{table}

\begin{lemma}\label{Lem:Transitive}
    Let $G$ act transitively on sets $X$ and $Y$, and fix points $x_0\in X$ and $y_0\in Y$. If $\Stab_G(x_0) = \Stab_G(y_0)$ then the rule defined by $F(x_0) = y_0$ and $F(g\cdot x_0) = g\cdot y_0$ is a well-defined $G$-equivariant bijection $F:X\to Y$.
\end{lemma}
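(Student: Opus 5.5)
The plan is to verify three things in turn: that $F$ is well defined, that it is $G$-equivariant, and that it is a bijection.

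\emph{Well-definedness.} By transitivity of $G$ on $X$, every $x\in X$ can be written as $x=g\cdot x_0$ for some $g\in G$. The representation may not be unique, so suppose $g\cdot x_0=h\cdot x_0$. Then $h^{-1}g\in\Stab_G(x_0)$. By hypothesis $\Stab_G(x_0)=\Stab_G(y_0)$, so $h^{-1}g\in\Stab_G(y_0)$, which gives $g\cdot y_0=h\cdot y_0$. Hence the value $F(x)$ does not depend on the choice of $g$. The case $g=e$ recovers $F(x_0)=y_0$, so the two parts of the definition agree.

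\emph{Equivariance.} Take $h\in G$ and $x=g\cdot x_0$. Then
$F(h\cdot x)=F((hg)\cdot x_0)=(hg)\cdot y_0=h\cdot(g\cdot y_0)=h\cdot F(x)$.

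\emph{Bijectivity.} Surjectivity follows from transitivity on $Y$: any $y\in Y$ has the form $y=g\cdot y_0=F(g\cdot x_0)$. For injectivity, suppose $F(g\cdot x_0)=F(h\cdot x_0)$. Then $g\cdot y_0=h\cdot y_0$, so $h^{-1}g\in\Stab_G(y_0)=\Stab_G(x_0)$, and therefore $g\cdot x_0=h\cdot x_0$.

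Both injectivity and well-definedness use the equality of stabilisers, one inclusion in each direction; this is the only point requiring care. Everything else is a routine consequence of the group action axioms. Equivalently, one may view $F$ as the composite of the standard equivariant bijections $X\cong G/\Stab_G(x_0)=G/\Stab_G(y_0)\cong Y$ given by the orbit–stabiliser correspondence.
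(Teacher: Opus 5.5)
Your proof is correct and follows the paper's argument essentially step for step: well-definedness via $\Stab_G(x_0)\subseteq\Stab_G(y_0)$, equivariance from the action axioms, surjectivity from transitivity on $Y$, and injectivity via the reverse inclusion. Your closing remark identifying $F$ with the composite $X\cong G/\Stab_G(x_0)=G/\Stab_G(y_0)\cong Y$ is a useful conceptual summary, though it is not needed for the proof.
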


\begin{proof} 
    Since the action of $G$ on $X$ is transitive, every $x\in X$ can be written in the form $x=g\cdot x_0$ for some $g\in G$. If two such expressions represent the same element, say $x = g\cdot x_0 = g'\cdot x_0$, then $ g^{-1}g'\in \Stab_G(x_0)$. By assumption this implies $g^{-1} g'\in \Stab_G(y_0)$, so $g\cdot y_0 = g'\cdot y_0$ and the formula $F$ is independent of the chosen representative $g$, i.e. it is well-defined. This map is $G$-equivariant by construction,
    \begin{equation}
        F(g\cdot x) = F(g\cdot (g'\cdot x_0)) = (gg')\cdot y_0 = g\cdot F(g'\cdot x_0) = g\cdot F(x),
    \end{equation}
    and it is surjective because the action of $G$ on $Y$ is transitive. Finally, it is injective because if $F(x) = F(x')$ with $x = g\cdot x_0$ and $x'=g'\cdot x_0$, then $g\cdot F(x_0)=g'\cdot F(x_0)$, so $g^{-1} g'\in \Stab_G(y_0) = \Stab_G(x_0)$, which implies $x = g\cdot x_0 = g'\cdot x_0 = x'$.
\end{proof}

We next compute the stabiliser of a convenient reference line. Let $\ell_0\coloneqq\langle e_1, e_2\rangle$, where $e_1, \ldots, e_4$ is the standard basis of $\mathbb{F}_2^4$.

\begin{lemma}\label{Lemma:StabG4}
    The stabiliser group of $\ell_0$ under $\GLx$ has order $576$ and is generated by $g_{12}, g_{21}, g_{34}, g_{43}, g_{13}, g_{14}, g_{23}$ and $g_{24}$.
\end{lemma}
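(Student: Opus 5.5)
The plan is to identify $\Stab_{\GLx}(\ell_0)$ with the parabolic subgroup of block upper-triangular matrices, and then check that the listed transvections generate it. First, I will characterise the stabiliser. A matrix $A\in\GLx$ fixes $\ell_0=\langle e_1,e_2\rangle$ exactly when $Ae_1,Ae_2\in\ell_0$. This means the lower-left $2\times 2$ block of $A$ (rows $3,4$, columns $1,2$) must vanish. Since $A$ is invertible, the diagonal blocks must then be invertible, so
\begin{equation*}
A=\begin{pmatrix} B & C\\ 0 & D\end{pmatrix},\qquad B,D\in\mathsf{GL}_2(\mathbb{F}_2),\ C\in \mathbb{F}_2^{2\times 2}.
\end{equation*}
Counting gives $|\mathsf{GL}_2(\mathbb{F}_2)|^2\cdot 2^4 = 6\cdot 6\cdot 16=576$. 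As a cross-check, the orbit–stabiliser theorem gives $20160/35=576$, using the transitivity of $\GLx$ on the $35$ lines recorded above.

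Next, I will verify that every listed generator lies in the stabiliser. A transvection $g_{ij}=\mathbbm{1}+E_{ij}$ has lower-left block zero unless $i\in\{3,4\}$ and $j\in\{1,2\}$. None of $g_{12},g_{21},g_{34},g_{43},g_{13},g_{14},g_{23},g_{24}$ is of that form, so $H\coloneqq\langle g_{12},g_{21},g_{34},g_{43},g_{13},g_{14},g_{23},g_{24}\rangle$ is contained in the stabiliser.

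For the reverse inclusion, I will use the semidirect decomposition $\Stab(\ell_0)=U\rtimes L$. Here $U=\{\smqty(\mathbbm{1} & C\\ 0&\mathbbm{1})\}\cong\mathbb{F}_2^4$ is the unipotent radical, and $L=\{\mathrm{diag}(B,D)\}\cong\mathsf{GL}_2(\mathbb{F}_2)^2$ is the Levi factor. The transvections $g_{12},g_{21}$ generate $\mathsf{GL}_2(\mathbb{F}_2)\cong S_3$ in the $B$-block, because two distinct involutions generating a group of order $6$ suffice: their product $\smqty(0&1\\1&1)$ has order $3$. In the same way, $g_{34},g_{43}$ generate the $D$-block, so $L\le H$. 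The four transvections $g_{13},g_{14},g_{23},g_{24}$ are $\mathbbm{1}+E_{ij}$ for $i\in\{1,2\}$ and $j\in\{3,4\}$. They commute pairwise, by relation 2 of the presentation, and they form a basis of $U\cong\mathbb{F}_2^4$, so $U\le H$. Every element of the stabiliser factors as $\smqty(\mathbbm{1}&CD^{-1}\\0&\mathbbm{1})\mathrm{diag}(B,D)$, hence $H=UL$ is the whole stabiliser and $|H|=16\cdot 36=576$.

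I do not expect a serious obstacle here. The only step needing care is the factorisation $\Stab(\ell_0)=UL$ together with the claim that $g_{12},g_{21}$ (and likewise $g_{34},g_{43}$) generate the full diagonal block. Both are small explicit checks on $2\times 2$ and $4\times 4$ binary matrices.
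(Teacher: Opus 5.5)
Your proposal is correct and follows essentially the same route as the paper. Both arguments identify the stabiliser with the block upper-triangular matrices and count $6^2\cdot 16=576$. Both then factor each element as a unipotent part (generated by $g_{13},g_{14},g_{23},g_{24}$) times block-diagonal parts generated by $g_{12},g_{21}$ and $g_{34},g_{43}$. Your version also spells out two checks that the paper leaves implicit: that each listed transvection lies in the stabiliser, and why two such transvections generate all of $\mathsf{GL}_2(\mathbb{F}_2)$.
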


\begin{proof}
    A matrix $g\in\GLx$ stabilises $\ell_0$ if and only if it preserves the 2-dimensional subspace spanned by $e_1$ and $e_2$. Relative to the decomposition $\mathbb{F}_2^4 = \langle e_1, e_2\rangle\oplus \langle e_3, e_4\rangle$, this means precisely that $g$ is block upper-triangular, so
    \begin{equation}
        \Stab_{\GLx}(\ell_0) = \left\{\mqty(A & B\\ 0 & D) \;\middle|\; A, D\in \mathsf{GL}_2(\mathbb{F}_2), B\in M_{2\times 2}(\mathbb{F}_2)\right\}.
    \end{equation}
    The order of this group is therefore $|\Stab_{\GLx}(\ell_0)| = |\mathsf{GL}_2(\mathbb{F}_2)|^2 \cdot |M_{2\times 2}(\mathbb{F}_2)| = 6^2\cdot 2^4 = 576$. To determine a set of generators of $\Stab_{\GLx}(\ell_0)$, observe that any matrix in this group factors as 
    \begin{equation}
        \mqty(A & B\\ 0 & D) = \mqty(\mathbbm{1} & BD^{-1}\\ 0 & \mathbbm{1})\mqty(A & 0\\ 0 & \mathbbm{1})\mqty(\mathbbm{1} & 0\\ 0 & D),
    \end{equation}
    so defining
    \begin{align}
        & M =  \left\{\mqty(\mathbbm{1} & B'\\ 0 & \mathbbm{1}) \;\middle|\; B'\in M_{2\times 2}(\mathbb{F}_2)\right\}, && K_1 = \left\{\mqty(A & 0\\ 0 & \mathbbm{1}) \;\middle|\; A\in \mathsf{GL}_2(\mathbb{F}_2)\right\}, && K_2 = \left\{\mqty(\mathbbm{1} & 0\\ 0 & D) \;\middle|\; D\in \mathsf{GL}_2(\mathbb{F}_2)\right\},
    \end{align} 
    we have that $\Stab_{\GLx}(\ell_0) = \langle M, K_1, K_2\rangle$. The subgroup $K_1\cong \mathsf{GL}_2(\mathbb{F}_2)$ is generated by $g_{12}$ and $g_{21}$, while $K_2\cong \mathsf{GL}_2(\mathbb{F}_2)$ is generated by $g_{34}$ and $g_{43}$. Finally, for any $B = \begin{psmallmatrix}b_{11} & b_{12} \\ b_{21} & b_{22}\end{psmallmatrix}\in M_{2\times 2}(\mathbb{F}_2)$ one has
    \begin{equation}
        \mqty(\mathbbm{1} & B\\ 0 & \mathbbm{1}) = g_{13}^{b_{11}} g_{14}^{b_{12}} g_{23}^{b_{21}} g_{24}^{b_{22}},
    \end{equation}
    so $M$ is generated by $g_{13}, g_{14}, g_{23},$  and $g_{24}$. Combining these three generating families proves the claim.
\end{proof}

\begin{table}[]
    \centering
    \setlength{\tabcolsep}{.7em}
    \begin{tabular}{cccc}
$x$ & $\ell\ni x$ & $y \in x^\perp$ & $\ell\subset x^\perp$ \\
\hline
1000 & 1, 2, 3, 4, 5, 6, 7 & 0100, 0010, 0110, 0001, 0101, 0011, 0111 & 8, 10, 12, 20, 22, 28, 30\\ 
0100 & 1, 8, 9, 10, 11, 12, 13 & 1000, 0010, 1010, 0001, 1001, 0011, 1011 & 2, 4, 6, 20, 21, 24, 25\\ 
1100 & 1, 14, 15, 16, 17, 18, 19 & 1100, 0010, 1110, 0001, 1101, 0011, 1111 & 14, 16, 18, 20, 23, 32, 35\\ 
0010 & 2, 8, 14, 20, 21, 22, 23 & 1000, 0100, 1100, 0001, 1001, 0101, 1101 & 1, 4, 5, 10, 11, 16, 17\\
1010 & 2, 9, 15, 24, 25, 26, 27 & 0100, 1010, 1110, 0001, 0101, 1011, 1111 & 9, 10, 13, 24, 26, 32, 34\\ 
0110 & 3, 8, 15, 28, 29, 30, 31 & 1000, 0110, 1110, 0001, 1001, 0111, 1111 & 3, 4, 7, 28, 29, 32, 33\\ 
1110 & 3, 9, 14, 32, 33, 34, 35 & 1100, 1010, 0110, 0001, 1101, 1011, 0111 & 15, 16, 19, 24, 27, 28, 31\\ 
0001 & 4, 10, 16, 20, 24, 28, 32 & 1000, 0100, 1100, 0010, 1010, 0110, 1110 & 1, 2, 3, 8, 9, 14, 15\\ 
1001 & 4, 11, 17, 21, 25, 29, 33 & 0100, 0010, 0110, 1001, 1101, 1011, 1111 & 8, 11, 13, 21, 23, 29, 31\\ 
0101 & 5, 10, 17, 22, 26, 30, 34 & 1000, 0010, 1010, 0101, 1101, 0111, 1111 & 2, 5, 7, 22, 23, 26, 27\\ 
1101 & 5, 11, 16, 23, 27, 31, 35 & 1100, 0010, 1110, 1001, 0101, 1011, 0111 & 14, 17, 19, 21, 22, 33, 34\\ 
0011 & 6, 12, 18, 20, 25, 30, 35 & 1000, 0100, 1100, 0011, 1011, 0111, 1111 & 1, 6, 7, 12, 13, 18, 19\\ 
1011 & 6, 13, 19, 21, 24, 31, 34 & 0100, 1010, 1110, 1001, 1101, 0011, 0111 & 9, 11, 12, 25, 27, 33, 35\\ 
0111 & 7, 12, 19, 22, 27, 28, 33 & 1000, 0110, 1110, 0101, 1101, 0011, 1011 & 3, 5, 6, 30, 31, 34, 35\\
1111 & 7, 13, 18, 23, 26, 29, 32 & 1100, 1010, 0110, 1001, 0101, 0011, 1111 & 15, 17, 18, 25, 26, 29, 30
    \end{tabular}
    \caption{Incidence data for the points, lines, and planes of $\PG$. Each of the 15 points in $\PG$ is contained in 7 lines, listed in the column labelled `$\ell\ni x$'. The lines are recorded according to the indices assigned in \cref{Tab:Lines}. Similarly, each plane contains exactly 7 points. The column `$y\in x^\perp$' records the 7 points $y$ contained in the orthogonal complement $x^\perp$ of the point $x$ under the duality operation defined in \cref{Eq:Dual}, while `$\ell\subset x^\perp$' records the 7 lines contained in each of these planes. Each line is contained in exactly 3 planes.}
    \label{Tab:xinL}
\end{table}

We may now compare the $\GLx$ and $A_8$ actions described above.

\begin{lemma}
    The action of $\GLx$ on $L$ is isomorphic to the action of $A_8$ on $W$.
\end{lemma}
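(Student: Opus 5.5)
The plan is to apply \cref{Lem:Transitive} with $G=\GLx$, acting on $X=L$ directly and on $Y=W$ through the isomorphism $\phi$, i.e.\ $g\cdot w\coloneqq \phi(g)\cdot w$. Both actions are transitive. $\GLx$ is transitive on $L$ because it acts transitively on ordered pairs of linearly independent vectors. $A_8$ is transitive on $W$ because it is $4$-transitive on the eight letters. Since $\phi$ is surjective, the twisted action of $\GLx$ on $W$ is also transitive. As reference points I take $x_0=\ell_0=\langle e_1,e_2\rangle$ and $y_0=w_0=\{b_0,\bar b_0\}$ with $b_0=10101010$, i.e.\ the partition $\{1,3,5,7\}\sqcup\{2,4,6,8\}$. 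It then suffices to show $\Stab_{\GLx}(\ell_0)=\phi^{-1}\bigl(\Stab_{A_8}(w_0)\bigr)$. \cref{Lem:Transitive} then gives an equivariant bijection $F:L\to W$, and together with $\phi$ this is the claimed isomorphism of actions.

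To compare the stabilisers I will first check the inclusion $\phi(\Stab_{\GLx}(\ell_0))\subseteq\Stab_{A_8}(w_0)$ on generators. By \cref{Lemma:StabG4}, $\Stab_{\GLx}(\ell_0)$ is generated by $g_{12},g_{21},g_{34},g_{43},g_{13},g_{14},g_{23},g_{24}$, and their images under $\phi$ are listed in \cref{Eq:PhiDef,Eq:PhiExtra}. For each image I only need to check that it either preserves the sets $\{1,3,5,7\}$ and $\{2,4,6,8\}$ or swaps them. For example, $\phi(g_{12})=(1\,2)(3\,4)(5\,6)(7\,8)$ swaps odd and even letters. $\phi(g_{13})=(1\,3)(2\,4)(5\,7)(6\,8)$ preserves parity. $\phi(g_{23})=(1\,5)(2\,8)(3\,7)(4\,6)$ preserves parity. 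The remaining five generators ($g_{21},g_{34},g_{43},g_{14},g_{24}$) are checked the same way. This step is routine but must be done for all eight generators; it is where the non-canonical choice of $\phi$ actually matters.

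For the reverse inclusion I will count orders. The stabiliser of an unordered $4$--$4$ bipartition in $S_8$ is $(S_4\times S_4)\rtimes\mathbb{Z}_2$, of order $24^2\cdot 2=1152$. It contains odd permutations, for instance a single transposition inside one block, so its intersection with $A_8$ has index $2$ and order $576$. Equivalently, transitivity on the $35$ elements of $W$ gives $|\Stab_{A_8}(w_0)|=20160/35=576$. Since $\phi$ is injective, $\phi(\Stab_{\GLx}(\ell_0))$ has order $576$ by \cref{Lemma:StabG4}. An inclusion of finite groups of equal order is an equality, which completes the proof.

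The main obstacle is only the generator verification in the second paragraph. Everything else is formal, resting on \cref{Lem:Transitive}, \cref{Lemma:StabG4}, and the fact already established that $\phi$ is an isomorphism. As a consistency check, the resulting bijection should send line $\ell=g_\ell\cdot\ell_0$ to $\phi(g_\ell)\cdot w_0$, which is exactly the column $b(\ell)$ in \cref{Tab:Lines}.
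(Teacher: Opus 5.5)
Your proposal is correct and follows essentially the same route as the paper. Both transport the $A_8$ action to $\GLx$ via $\phi$, check that the images of the eight generators of $\Stab_{\GLx}(\ell_0)$ preserve $w_0=\{b_0,\bar b_0\}$, get equality from the order count $576=1152/2$, and then apply the transitive-set lemma. Your generator checks are also right: $\phi(g_{12}),\phi(g_{21}),\phi(g_{34}),\phi(g_{43})$ swap the odd and even letters, while $\phi(g_{13}),\phi(g_{14}),\phi(g_{23}),\phi(g_{24})$ preserve them.
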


\begin{proof}
    Since $\phi:\GLx\to A_8$ is an isomorphism, we may transport the natural action of $A_8$ on $W$ to an action of $\GLx$ on this same set by defining $g\star w \coloneqq \phi(g)\cdot w$. Here $w=\{b, \bar{b}\}\in W$ is an equivalence class of length-8 binary strings, and $\cdot$ denotes the standard action of $A_8$ on $W$. The action $\star$ is transitive because the original $A_8$ action is transitive.

    Having fixed the reference line $\ell_0 = \langle e_1, e_2\rangle$, we now choose a compatible base string in $W$. Defining $b_0 \coloneqq 10101010 \in \mathbb{F}_2^8$, it may be checked directly that for each generator $g\in\{g_{12}, g_{21}, g_{34}, g_{43}, g_{13}, g_{14}, g_{23}, g_{24}\}$ of $\Stab_{\GLx}(\ell_0)$ (cf. \cref{Lemma:StabG4}), the corresponding element $\phi(g)\in A_8$ stabilises the unordered complementary pair $w_0 \coloneqq \{b_0, \bar{b}_0\}\in W$ (the explicit forms of these permutations are given in \cref{Eq:PhiDef,Eq:PhiExtra}). Hence $\phi(\Stab_{\GLx}(\ell_0)) \subseteq \Stab_{A_8}(w_0)$, and because $\phi$ is an isomorphism we have $|\phi(\Stab_{\GLx}(\ell_0))| = 576$. It remains to compute the order of $\Stab_{A_8}(w_0)$. To preserve the unordered pair $w_0$, a permutation may either preserve both $b_0$ and $\bar{b}_0$, or swap them, so inside $S_8$ the stabiliser has order $2\cdot 4!\cdot 4! = 1152$. Exactly half of these permutations are even, so $|\Stab_{A_8}(w_0)| = 1152/2 = 576$. Thus we have an inclusion between two subgroups of the same finite order, which implies that they must be equal: $\phi(\Stab_{\GLx}(\ell_0)) = \Stab_{A_8}(w_0)$.

    We are now in precisely the setting of \cref{Lem:Transitive}. The actions of $\GLx$ on $L$ and $W$ are both transitive, and the chosen basepoints have the same stabiliser. It therefore follows that the function $w:L\to W$ defined as $w(\ell_0) \coloneqq w_0$ and $w(\ell) = w(g\cdot\ell_0) \coloneqq \phi(g)\cdot w(\ell_0)$ is a well-defined $\GLx$-equivariant bijection. This identifies the action of $\GLx$ on projective lines with the action of $A_8$ on unordered 4-4 bipartitions of an 8-element set.
\end{proof}

To record explicit representatives for the classes $w(\ell)\in W$, we fix for each $\ell\in L$ an element $g_\ell\in\GLx$ such that $\ell=g_\ell\cdot\ell_0$, and define an auxiliary representative $b:L\to \mathbb{F}_2^8$ by setting $b(\ell_0) \coloneqq b_0 = 10101010$ and $b(\ell) \coloneqq \phi(g_\ell)\cdot b(\ell_0)$. In terms of $b$, we then have $w(\ell) = \{b(\ell), \bar{b}(\ell)\}$. The values of the function $b$, along with the associated group elements $g_\ell$, are tabulated for each projective line $\ell\in L$ in \cref{Tab:Lines}. Note that, while $w(\ell) = \{b(\ell), \bar{b}(\ell)\}$ is equivariant with respect to $\GLx$, satisfying $w(g\cdot\ell) = \phi(g)\cdot w(\ell)$ for all $g\in\GLx$ and $\ell\in L$, the representative $b(\ell)$ does \textit{not} have this property: in general, either $b(g\cdot\ell) = \phi(g)\cdot b(\ell)$ or $b(g\cdot\ell) = \phi(g)\cdot\bar{b}(\ell)$. For example, $b(g_{12}\cdot\ell_0) = b(\ell_0) = 10101010$ as $\ell_0$ is stabilised by $g_{12}$, but $\phi(g_{12})\cdot b(\ell_0) = 01010101 = \bar{b}_0$ is the conjugate of the representative string.

Finally, we explain how this model interacts with the natural duality on $\PG$. Consider the standard symmetric bilinear form on $\PG$, defined as $\langle x, y\rangle = \sum_i x_i y_i$. Because the field has characteristic 2, this is not an inner product in the usual positive-definite sense, as there exist non-zero vectors $x$ with $\langle x, x\rangle = 0$, for example $x=1100$ (such vectors are called \textit{isotropic}). The form is nevertheless non-degenerate, which is the property required for the duality below.

Given a projective linear subspace $S$ of $\PG$, the orthogonal complement of $S$ is defined as 
\begin{equation}\label{Eq:Dual}
    S^\perp \coloneqq \{x\in \PG\;|\; \langle x, y\rangle = 0 \text{ for all }y\in S\}.
\end{equation}
Since $\langle\cdot, \cdot\rangle$ is non-degenerate, the operation $S\mapsto S^\perp$ interchanges 1-dimensional and 3-dimensional subspaces of $\mathbb{F}_2^4$, and sends 2-dimensional subspaces to 2-dimensional subspaces. Equivalently, within the projective space it exchanges points and planes, while inducing a bijection from the set of lines $L$ to itself. This duality is contragredient with respect to the natural action of $\GLx$: for every $g\in \GLx$ and every projective subspace $S\leq\PG$, we have $(g\cdot S)^\perp = g^{-\intercal}\cdot S^\perp$. Accordingly, the map $g\mapsto g^{-\intercal}$ defines an automorphism of $\GLx$, and under the explicit isomorphism $\phi:\GLx\to A_8$ this yields an automorphism of $A_8$. Since $\Aut A_8 = S_8$, this automorphism is given by conjugation by some permutation in $S_8$, and because it is not an inner automorphism of $\GLx$, the conjugating permutation must be odd. With the present choices, one may take
\begin{equation}\label{Eq:TauC}
    \tau_{\text{c}} \coloneqq (2\;4)(3\;7)(5\;6),
\end{equation}
for which a direct check on the generators shows that $\tau_{\text{c}}\phi(g)\tau_{\text{c}}^{-1} = \phi(g^{-\intercal})$ for all $g\in\GLx$. Consequently, the bijection $w:L\to W$ intertwines orthogonal complementation on $L$ with the action of $\tau_{\text{c}}$ on $W$, i.e.  $w(\ell^\perp) = \tau_{\text{c}}\cdot w(\ell)$ for all $\ell\in L$. In fact, this also holds for the specific representative bit strings listed in \cref{Tab:Lines}, that is
\begin{equation}
    b(\ell^\perp) = \tau_{\text{c}}\cdot b(\ell)
\end{equation}
for all $\ell\in L$.

\subsection{Construction of the \texorpdfstring{$(\!(8, 2^4, 2)\!)$}{((8, 16, 2))} phantom code}\label{Sec:CodeConstruction}
Having discussed the actions of $\GLx\cong A_8$ on the projective lines $L$ and unordered 4-4 bipartitions $W$, we now proceed to explicitly construct an $(\!(8, 2^4, 2)\!)$ phantom code. The idea is to index the states within the logical subspace by the points of $\PG$ (plus an extra point for the logical zero state), and to use the incidence structure of this geometry to translate the geometric action of $\GLx$ on this space into a permutation action of $A_8$ on a set of 8 physical qubits.

\begin{definition}
    For each projective line $\ell\in L$, the corresponding 8-qubit \textit{line state} is given by 
    \begin{equation}
        \ket{\ell} \coloneqq \frac{\ket{b(\ell)}+\big|\overline{b(\ell)}\big\rangle}{\sqrt{2}},
    \end{equation}
    where $b(\ell)\in\mathbb{F}_2^8$ is the representative weight-4 binary string listed in \cref{Tab:Lines}, and $\bar{b}(\ell) = 1\cdots1+b(\ell)$ its bitwise complement.
\end{definition}

For each $g\in\GLx$, define $\sigma_g$ to be the unitary gate acting as the permutation $\phi(g)$ on the physical qubits, that is $\sigma_g\ket{s}\coloneqq \ket{\phi(g)\cdot s}$ for all $s\in\mathbb{F}_2^8$. As $\phi$ is an isomorphism, the map $g\mapsto \sigma_g$ is a faithful unitary representation of $\GLx$, meaning $\sigma_g^\dagger = \sigma_{g^{-1}}$ and $\sigma_{g}\sigma_{g'} = \sigma_{g g'}$ for all $g, g'\in\GLx$. The transitive action of $\GLx$ on the projective lines of $\PG$ translates, via the action isomorphism between $\GLx\curvearrowright L$ and $A_8\curvearrowright W$, into a simple transformation rule for the line states under the unitary operators $\sigma_g$. This is described by the following lemma.

\begin{lemma}\label{Lemma:Lines}
    The line states $\ket{\ell}$, $\ell\in L$ are orthonormal, $\braket{\ell}{\ell'} = \delta_{\ell, \ell'}$, and satisfy $\sigma_g \ket{\ell} = \ket{g\cdot\ell}$ for all $g\in\GLx$.
\end{lemma}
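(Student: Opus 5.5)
The proof splits into two independent parts: orthonormality of the line states, and their transformation law under $\sigma_g$. Both rest on the equivariant bijection $w:L\to W$ constructed in the preceding lemma.

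\emph{Orthonormality.} Each $\ket{\ell}$ is a superposition of the two computational basis states $\ket{b(\ell)}$ and $\ket{\bar b(\ell)}$. These are distinct, since $b\neq\bar b$ for any string, so $\ket{\ell}$ has unit norm. For $\ell\neq\ell'$, injectivity of $w$ gives $\{b(\ell),\bar b(\ell)\}\neq\{b(\ell'),\bar b(\ell')\}$. Two complementary pairs are either equal or disjoint: if $b(\ell')$ equals $b(\ell)$ or $\bar b(\ell)$, the whole pair coincides. Hence the two supports in the computational basis are disjoint and $\braket{\ell}{\ell'}=0$.

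\emph{Transformation law.} Since $\sigma_g$ permutes tensor factors, it acts on basis states by $\sigma_g\ket{s}=\ket{\phi(g)\cdot s}$. It also commutes with bitwise complementation, because $\phi(g)\cdot\bar s=\overline{\phi(g)\cdot s}$. Therefore
\begin{equation}
\sigma_g\ket{\ell}=\frac{\ket{\phi(g)\cdot b(\ell)}+\ket{\overline{\phi(g)\cdot b(\ell)}}}{\sqrt2}.
\end{equation}
This is the symmetric superposition over the unordered pair $\phi(g)\cdot w(\ell)$. By equivariance of $w$, that pair equals $w(g\cdot\ell)=\{b(g\cdot\ell),\bar b(g\cdot\ell)\}$.

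The one point needing care is that the representative $b(g\cdot\ell)$ may equal either $\phi(g)\cdot b(\ell)$ or its complement, as the paper notes with the $g_{12}$ example. This does not matter, because the line state is symmetric under exchanging $b$ and $\bar b$ with a $+$ sign, so it depends only on the unordered pair. This is exactly why no phase ambiguity arises, and it gives $\sigma_g\ket{\ell}=\ket{g\cdot\ell}$ with no extra sign.
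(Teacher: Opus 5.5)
Your proposal is correct and follows essentially the same route as the paper. Orthonormality comes from distinct lines giving disjoint complementary pairs $\{b(\ell),\bar b(\ell)\}$, and the transformation law comes from the equivariance $w(g\cdot\ell)=\phi(g)\cdot w(\ell)$ together with the $b\leftrightarrow\bar b$ symmetry of the line state, which absorbs the representative ambiguity. You also spell out two points the paper leaves implicit: why distinct complementary pairs are disjoint, and why permutations commute with bitwise complementation.
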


\begin{proof}
To see that the line states are orthonormal, note that if $\ell\neq \ell'$ then the equivalence classes $w(\ell) = \{b(\ell), \bar{b}(\ell)\}$ and $w(\ell') = \{b(\ell'), \bar{b}(\ell')\}$ are disjoint, so $\braket{b(\ell)}{b(\ell')} = \braket{b(\ell)}{\bar{b}(\ell')} = \braket{\bar{b}(\ell)}{b(\ell')} = \braket{\bar{b}(\ell)}{\bar{b}(\ell')} = 0$, and $\braket{\ell}{\ell'} = 0$. On the other hand, the line states are normalised by construction since $\braket{\ell}{\ell} = (\braket{b(\ell)}{b(\ell)}+\braket{\bar{b}(\ell)}{\bar{b}(\ell)})/2 = 1$.

We now prove the transformation law for the line states. For all lines $\ell\in L$ and all $g\in\GLx$, the representative string $b(\ell)$ satisfies either $b(g\cdot\ell) = \phi(g)\cdot b(\ell)$ or $b(g\cdot\ell) = \phi(g)\cdot\bar{b}(\ell)$. In either case, the superposition $\ket{b(\ell)}+\ket{\bar{b}(\ell)}$ satisfies 
\begin{equation}
    \ket{b(g\cdot\ell)}+\ket{\bar{b}(g\cdot\ell)} = \ket{\phi(g)\cdot b(\ell)}+\ket{\phi(g)\cdot\bar{b}(\ell)}.
\end{equation}
By the definition of the permutation operators $\sigma_g$, we have $\ket{\phi(g)\cdot b(\ell)} = \sigma_g\ket{b(\ell)}$ and $\ket{\phi(g)\cdot \bar{b}(\ell)} = \sigma_g\ket{\bar{b}(\ell)}$, so 
\begin{equation}\label{Eq:EquivariantLine}
    \sigma_g\ket{\ell} = \frac{\sigma_g\ket{b(\ell)}+\sigma_g\ket{\bar{b}(\ell)}}{\sqrt{2}} = \frac{\ket{\phi(g)\cdot b(\ell)}+\ket{\phi(g)\cdot\bar{b}(\ell)}}{\sqrt{2}} = \frac{\ket{b(g\cdot\ell)}+\ket{\bar{b}(g\cdot\ell)}}{\sqrt{2}} = \ket{g\cdot\ell}
\end{equation}
as claimed.
\end{proof}

As described in \cref{Sec:PG32}, the incidence structure of $\PG$ is highly regular, and in particular, each point $x\in\PG$ lies in exactly 7 distinct projective lines $\ell\in L$. We define the \textit{point-star} state
\begin{equation}\label{Eq:PointStar}
    \ket{a_x} \coloneqq \sum_{\ell\ni x} \ket{\ell}
\end{equation}
as the linear superposition of all lines containing $x$. The data required to write down these states explicitly is provided in \cref{Tab:xinL}, which lists the indices of the lines containing each point $x$. From the definition of the point-star states and the orthonormality relation $\braket{\ell}{\ell'} = \delta_{\ell,\ell'}$, we have that
\begin{equation}\label{Eq:NormAx}
    \braket{a_x}{a_y} = \sum_{\ell\ni x}\sum_{\ell'\ni y}\braket{\ell}{\ell'} = |\{\ell \in L\,|\, x, y\in\ell\}| = \begin{cases}
        7 & x=y, \\ 1 & x\neq y.
    \end{cases}
\end{equation}
That is, $\braket{a_x}{a_y}$ counts the number of lines $\ell\in L$ such that both $x$ and $y$ are contained in $\ell$. Two distinct points are contained in a unique projective line, $\braket{a_x}{a_y} = 1$ if $y\neq x$, while if $y=x$ then the fact that every point is contained in 7 lines means that $\braket{a_x}{a_x} = 7$. Additionally, the transformation of the line states given in \cref{Lemma:Lines} ensures that the $\ket{a_x}$ transform under $\GLx$ as
\begin{equation}\label{Eq:AxTrans}
    \sigma_g\ket{a_x} = \sum_{\ell\ni x}\ket{g\cdot \ell} = \sum_{(g^{-1}\cdot\ell)\ni x}\ket{\ell} = \sum_{\ell \ni g\cdot x}\ket{\ell} = \ket{a_{g\cdot x}}.
\end{equation}
This translation from a unitary permutation action on $(\mathbb{C}^2)^{\otimes 8}$ into a $\GLx$ action on $\PG$ underlies our phantom code construction. 

As well as the point-star states $\ket{a_x}$, we define the uniform superposition of all line states as $\ket{t}\coloneqq \sum_{\ell\in L} \ket{\ell}$. This state has norm $\braket{t}{t} = |L| = 35$ and is invariant under all $g\in\GLx$, since $\GLx$ acts by permuting the lines in $L$. Furthermore, we have 
\begin{equation}\label{Eq:NormAxT}
    \braket{a_x}{t} = \sum_{\ell\ni x}\sum_{\ell'\in L} \braket{\ell}{\ell'} = |\{\ell \in L\,|\, x\in\ell\}| = 7
\end{equation}
since every point is contained in exactly $7$ lines. Finally, note that 
\begin{equation}\label{Eq:TaX}
    \sum_{x\in\PG}\ket{a_x} = 3\ket{t}
\end{equation}
as every line contains 3 points.

With these definitions in hand, we are ready to construct the phantom code.

\begin{theorem}
    The states 
    \begin{equation}
        \ket{\bar{x}} \coloneqq \frac{1}{\sqrt{6}}\ket{a_x}-\frac{1}{5}\left(\frac{1}{\sqrt{6}}-\frac{1}{\sqrt{21}}\right)\ket{t}
    \end{equation}
    where $x\in\PG$, and $\ket{\bar{0}} \coloneqq (\ket{00000000}+\ket{11111111})/\sqrt{2}$, together form an orthonormal basis of logical codewords of an $(\!(8, 2^4, 2)\!)$ phantom code, satisfying $\sigma_g \ket{\bar{x}} = \ket{\overline{g\cdot x}}$ for all $g\in\GLx$ and all $x\in\mathbb{F}_2^4 = \PG\cup\{0\}$. 
\end{theorem}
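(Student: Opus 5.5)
The plan is to verify the claim in four parts: orthonormality of the sixteen states, the transformation law, the phantom property, and the Knill--Laflamme conditions for distance $2$.

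\emph{Orthonormality.} Write $\alpha \coloneqq 1/\sqrt{6}$, $\gamma\coloneqq 1/\sqrt{21}$ and $\beta\coloneqq(\alpha-\gamma)/5$, so that $\ket{\bar x}=\alpha\ket{a_x}-\beta\ket{t}$. By \cref{Eq:NormAx,Eq:NormAxT} and $\braket{t}{t}=35$, we get
\begin{equation*}
\braket{\bar x}{\bar y}=\alpha^2\braket{a_x}{a_y}-14\alpha\beta+35\beta^2 .
\end{equation*}
It therefore suffices to show $35\beta^2-14\alpha\beta=-\alpha^2=-1/6$. Then the case $x\neq y$ gives $1/6-1/6=0$, and the case $x=y$ gives $7/6-1/6=1$. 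Substituting $\beta$, the left side factorises as
\begin{equation*}
\tfrac{7}{5}(\alpha-\gamma)\bigl[(\alpha-\gamma)-2\alpha\bigr]=-\tfrac75(\alpha^2-\gamma^2)=-\tfrac75\cdot\tfrac{5}{42}=-\tfrac16 ,
\end{equation*}
which is the required value. Orthogonality of $\ket{\bar 0}$ to the $\ket{\bar x}$ is immediate: every $\ket{\bar x}$ is a combination of line states, which are supported on weight-$4$ strings, while $\ket{\bar 0}$ is supported on weights $0$ and $8$.

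\emph{Transformation law and phantom property.} The state $\ket{t}$ is $\sigma_g$-invariant, and \cref{Eq:AxTrans} gives $\sigma_g\ket{a_x}=\ket{a_{g\cdot x}}$. Together these yield $\sigma_g\ket{\bar x}=\ket{\overline{g\cdot x}}$ for $x\neq 0$. The state $\ket{\bar 0}$ is invariant under every qubit permutation, consistent with $g\cdot 0=0$. Now let $V$ be the isometry sending the logical computational basis state $\ket{x}$, $x\in\mathbb{F}_2^4$, to $\ket{\bar x}$. The logical CNOT circuit $U_g$ acts on computational basis states as $\ket{x}\mapsto\ket{g x}$. Hence $\sigma_g V=VU_g$ exactly, with no phase. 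In particular, each $\overline{\CNOT}_{ab}$ corresponds to a transvection $g$ and is realised by the physical permutation $\phi(g)\in A_8$. This establishes the phantom property.

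\emph{Distance $2$.} We check the Knill--Laflamme conditions $\bra{\bar u}E\ket{\bar v}=c_E\delta_{uv}$ for every single-qubit Pauli $E$. I expect these to hold with $c_E=0$ for all nontrivial $E$.
\begin{itemize}
\item For $E=X_j$ or $E=Y_j$: $E$ flips one bit, so it maps even-weight strings to odd-weight strings. Every codeword is supported on even weights ($0$, $4$, $8$), so all matrix elements vanish.
\item For $E=Z_j$: the operator is diagonal, so only same-support terms contribute. Distinct line states and $\ket{\bar 0}$ have disjoint supports. Within one line state, $\bra{\ell}Z_j\ket{\ell}=\tfrac12\bigl[(-1)^{b_j}+(-1)^{1+b_j}\bigr]=0$, because $b(\ell)$ and $\bar b(\ell)$ differ at every position. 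The same cancellation gives $\bra{\bar 0}Z_j\ket{\bar 0}=0$.
\end{itemize}
Thus every weight-one error has vanishing projected matrix elements, and the code detects all single-qubit errors, so $d\geq 2$.

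The only step needing care is the scalar identity in the orthonormality computation, i.e.\ that the specific coefficient of $\ket{t}$ cancels the off-diagonal overlap $\alpha^2$ exactly. The remaining steps follow from parity and complement-symmetry arguments together with the already established equivariance \cref{Lemma:Lines}.
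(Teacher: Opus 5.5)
Your proof is correct and follows essentially the same route as the paper: orthonormality from \cref{Eq:NormAx,Eq:NormAxT} and $\braket{t}{t}=35$ (you usefully make the scalar identity explicit), the transformation law from \cref{Eq:AxTrans} and the invariance of $\ket{t}$, and $d\geq 2$ via the parity argument for $X_j,Y_j$ and the complement cancellation for $Z_j$. The only difference is that the paper also shows the distance is exactly $2$, by comparing $\bra{\bar 0}Z_1Z_2\ket{\bar 0}=1$ with $\bra{t}Z_1Z_2\ket{t}/35=-1/7$; you omit this, but it is not needed under the paper's definition of type $(\!(n,K,d)\!)$, which only requires the Knill--Laflamme conditions for errors of weight at most $d-1$.
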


\begin{proof}
    For $x\in\PG$, the orthonormality of the states $\ket{\bar{x}}$ follows from \cref{Eq:NormAx,Eq:NormAxT}, along with $\braket{t}{t}=35$. Moreover, $\braket{\bar{x}}{\bar{0}} = 0$ as all the binary states appearing in $\ket{\bar{x}}$ have weight 4, while the components of $\ket{\bar{0}}$ have weight 0 and 8.

    Next, from the transformation of the point-star states under $\GLx$ given in \cref{Eq:AxTrans}, and the invariance of $\ket{t}$, we have
    \begin{equation}
        \sigma_g\ket{\bar{x}} = \frac{1}{\sqrt{6}}\ket{a_{g\cdot x}}+\frac{1}{5}\left(\frac{1}{\sqrt{21}} - \frac{1}{\sqrt{6}}\right)\ket{t} = \ket{\overbar{g\cdot x}}
    \end{equation} 
    for all $x\in\PG$. Moreover, both $\ket{0\cdots 0}$ and $\ket{1\cdots 1}$ are invariant under all permutations, so $\sigma_g\ket{\bar{0}} = \ket{\bar{0}} = \ket{\overbar{g\cdot 0}}$, meaning the transformation rule holds for all $x\in\mathbb{F}_2^4$ as claimed.

    Finally, we compute the distance of the code. To confirm that the code has distance $d\geq 2$, we evaluate $P \tau_i P$ for all on-site Pauli operators $\tau_i \in \{X_i, Y_i, Z_i\}$, $i=1, \ldots, 8$, where $P$ is the projector onto the code space. Firstly, a single $X_i$ or $Y_i$ operator changes the Hamming weights of the binary strings in the codewords by $\pm 1$, taking them out of the code space, hence $P X_i P = P Y_i P = 0$ for all $i$. For an on-site $Z$ operator, we have $Z_i\ket{\bar{0}} = (\ket{00000000}-\ket{11111111})/\sqrt{2}$, which is orthogonal to the code space. On the other hand, all logical states $\ket{\bar{x}}$ can be expressed as linear combinations of line states $\ket{\ell}$, which satisfy $Z_i\ket{\ell} = \pm(\ket{b(\ell)} - \ket{\bar{b}(\ell)})/\sqrt{2}$. These states are orthogonal to the code space, so we have that $\bra{\bar{x}}Z_i\ket{\bar{y}} = 0$ for all $x, y\in\PG$. Overall, it follows that $P \tau_i P = 0$ for all $i$ and all single-qubit Pauli operators $\tau$.

    To see that the code has distance $\leq 2$, it is sufficient to evaluate the expectation value of $Z_1 Z_2$ in $\ket{\bar{0}}$ and $\frac{1}{\sqrt{15}}\sum_{x\in\PG}\ket{\bar{x}} = \ket{t}/\sqrt{35}$. Clearly $\bra{\bar{0}}Z_1 Z_2\ket{\bar{0}} = 1$ as both $\ket{0\cdots 0}$ and $\ket{1\cdots 1}$ are +1 eigenstates of $Z_1 Z_2$. On the other hand, $\ket{t}/\sqrt{35}$ is the equal-amplitude superposition of all weight-4 bit strings of length 8, so the corresponding expectation value can be computed by counting the number of such strings for which the first and second bits are equal, and for which they are different. We have
    \begin{equation}
        \frac{1}{35}\bra{t}Z_1 Z_2\ket{t} = \frac{|\{\text{bits }1, 2\text{ equal}\}|-|\{\text{bits }1, 2\text{ different}\}|}{|\{\text{Weight-}4 \text{ strings in }\mathbb{F}_2^8\}|} = \frac{\left[\binom{6}{2}+\binom{6}{4}\right]-2\binom{6}{3}}{\binom{8}{4}} = -\frac{1}{7}.
    \end{equation}
    Hence $\bra{\bar{0}}Z_1 Z_2\ket{\bar{0}} \neq \bra{t}Z_1 Z_2\ket{t}/35$, meaning $Z_1 Z_2$ is not a scalar on the code, and the code has distance 2.
\end{proof}

Note that $d=2$ is the highest possible distance for a code of type $(\!(8, 2^4)\!)$. A prospective $(\!(8, 2^4, 3)\!)$ code would saturate the quantum singleton bound $n-k\geq 2(d-1)$, making it a quantum maximum distance separable (QMDS) code. However, in Ref.~\cite{Huber2020quantumcodesof} it was shown that any qudit $(\!(n, K, d)\!)_D$ QMDS code must satisfy $n\leq D^2+d-2$. In the binary case $D=2$, this implies that a QMDS code of distance $d=3$ can have length at most $n\leq 2^2+3-2 = 5$. Hence no $(\!(8, 2^4, 3)\!)$ code exists, even without the phantom condition. This provides an alternative proof that the distance of the code is $\leq 2$.

\subsection{\texorpdfstring{$S_8$}{S8} invariance}\label{Sec:PermutationInv}
So far we have established that the code space is invariant under the group $A_8\cong\GLx$ of even permutations of the 8 physical qubits. In fact, as we now show, it is also invariant under all odd permutations, meaning that the permutation automorphism group of the code is maximal and equal to $S_8$. Odd permutations realise the natural duality of $\PG$ corresponding to the orthogonal complementation operation $\perp$ defined in \cref{Eq:Dual}. 

Recall that, for a point $x\in\PG$, the point-star state $\ket{a_x}$ is defined as the equal-weight superposition of all the line states containing $x$. The analogue of these states for the planes of $\PG$ are the \textit{plane} states, defined as
\begin{equation}\label{Eq:PlaneState}
    \ket{p_\Pi} \coloneqq \sum_{\ell\subset \Pi} \ket{\ell}
\end{equation}
where $\Pi$ is a plane in $\PG$, and a line $\ell$ is contained in $\Pi$ whenever every point of $\ell$ is in $\Pi$. By relating the planes and points of $\PG$ using the duality operation $\perp$, we can express the plane states in terms of the point-star states.

\begin{lemma}\label{Lem:Pspan}
    For a plane $\Pi$ in $\PG$, the corresponding plane state is given by 
    \begin{equation}\label{Eq:PlaneToX}
        \ket{p_\Pi} = \frac{1}{2}\left(\sum_{x\in\Pi}\ket{a_x} - \ket{t}\right).
    \end{equation}
    where $\ket{a_x}$ are the point-star states defined in \cref{Eq:PointStar}. 
\end{lemma}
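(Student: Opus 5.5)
The identity is a statement about multiplicities of line states, so the plan is to compare coefficients. Every state in \cref{Eq:PlaneToX} is a linear combination of the line states $\ket{\ell}$, and these are orthonormal by \cref{Lemma:Lines}. It therefore suffices to show that, for each line $\ell\in L$, the coefficient of $\ket{\ell}$ is the same on both sides.

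On the left-hand side, the coefficient of $\ket{\ell}$ in $\ket{p_\Pi}$ is $1$ if $\ell\subset\Pi$ and $0$ otherwise, by \cref{Eq:PlaneState}. For the right-hand side, expanding the definition \cref{Eq:PointStar} gives
\begin{equation}
    \sum_{x\in\Pi}\ket{a_x} = \sum_{\ell\in L} |\ell\cap\Pi|\,\ket{\ell},
\end{equation}
and $\ket{t}$ has coefficient $1$ on every line. The claim therefore reduces to the incidence statement
\begin{equation}
    \tfrac{1}{2}\bigl(|\ell\cap\Pi|-1\bigr) = \begin{cases} 1 & \ell\subset\Pi,\\ 0 & \ell\not\subset\Pi,\end{cases}
\end{equation}
which holds if and only if $|\ell\cap\Pi|\in\{1,3\}$, with $|\ell\cap\Pi|=3$ exactly when $\ell\subset\Pi$.

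To verify the incidence statement, I would pass to $\mathbb{F}_2^4$. There $\ell$ corresponds to a 2-dimensional subspace $U$ and $\Pi$ to a 3-dimensional subspace $H$. By the dimension formula,
\begin{equation}
    \dim(U\cap H) = \dim U + \dim H - \dim(U+H) \geq 2+3-4 = 1,
\end{equation}
so $\dim(U\cap H)\in\{1,2\}$. If $\dim(U\cap H)=2$, then $U\subseteq H$, i.e.\ $\ell\subset\Pi$, and $\ell$ meets $\Pi$ in all $3$ of its points. If $\dim(U\cap H)=1$, then $\ell$ meets $\Pi$ in exactly $2^1-1=1$ point. Equivalently, one can use the linear functional $f$ with kernel $H$: restricted to $U$, $f$ is either identically zero or vanishes on exactly one of the three non-zero vectors of $U$.

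The only step needing care is this incidence count, specifically ruling out $|\ell\cap\Pi|\in\{0,2\}$. The dimension argument above handles both cases at once, so I expect no real obstacle. As a consistency check, summing \cref{Eq:PlaneToX} over the inner product with $\ket{t}$ gives $\tfrac12(7\cdot 7-35)=7$ on the right, using \cref{Eq:NormAxT}. This matches the $7$ lines contained in a plane.
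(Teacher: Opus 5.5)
Your proposal is correct and follows essentially the same route as the paper: expand $\sum_{x\in\Pi}\ket{a_x}$ in the line states, note that a line contained in $\Pi$ appears three times and every other line exactly once, and rearrange to $\ket{t}+2\ket{p_\Pi}$. Your dimension-formula argument supplies a justification for the fact that a line not contained in $\Pi$ meets it in exactly one point, which the paper simply asserts.
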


\begin{proof}
    We evaluate the sum $\sum_{x\in\Pi} \ket{a_x} = \sum_{x\in\Pi}\sum_{\ell\ni x}\ket{\ell}$ by counting the number of times each state $\ket{\ell}$ appears on the right-hand side. If $\ell\subset\Pi$, then $\ell$ is counted once for each of its three points, since these all lie in $\Pi$. If $\ell\not\subset\Pi$, then $\ell$ intersects with $\Pi$ exactly once. Hence
    \begin{equation}\label{Eq:AxSum}
        \sum_{x\in\Pi} \ket{a_x} = \sum_{\ell\not\subset\Pi}\ket{\ell}+3\sum_{\ell\subset\Pi}\ket{\ell} = \ket{t} + 2\ket{p_\Pi},
    \end{equation}
    where we have used the definition \cref{Eq:PlaneState} and recognised that, for any plane $\Pi$, the uniform state $\ket{t}$ can be expressed as $\ket{t} = \sum_{\ell\not\subset\Pi}\ket{\ell}+\sum_{\ell\subset\Pi}\ket{\ell}$. The result then follows by rearranging \cref{Eq:AxSum}.
\end{proof}

Recall (cf. \cref{Eq:TauC}) that the duality operation $\perp$ on $\PG$ is realised at the level of bit strings by the action of a fixed odd permutation $\tau_{\text{c}} \coloneqq (2\;4)(3\;7)(5\;6)$. More precisely, for all $\ell\in L$, the representative strings $b(\ell)$ from \cref{Tab:Lines} satisfy $b(\ell^\perp) = \tau_{\text{c}} \cdot b(\ell)$. We define $\sigma_{\text{c}}$ to be the unitary permutation operator implementing $\tau_{\text{c}}$, so that $\sigma_{\text{c}}\ket{s} \coloneqq \ket{\tau_{\text{c}}\cdot s}$ for all $s\in\mathbb{F}_2^8$. Then $\sigma_{\text{c}}\ket{b(\ell)} = \ket{\tau_{\text{c}}\cdot b(\ell)} = \ket{b(\ell^\perp)}$ and similarly $\sigma_{\text{c}} \big|\overline{b(\ell)}\big\rangle = \big|\tau_{\text{c}}\cdot\overline{b(\ell)}\big\rangle = \big|\overline{\tau_{\text{c}}\cdot b(\ell)}\big\rangle = \big|\overline{b(\ell^\perp)}\big\rangle$, hence $\sigma_{\text{c}}\ket{\ell} = \ket{\ell^\perp}$. It follows that the action of $\sigma_{\text{c}}$ on the point-star states is
\begin{equation}\label{Eq:PerpAction}
    \sigma_{\text{c}} \ket{a_x} = \sum_{\ell\ni x} \sigma_{\text{c}}\ket{\ell} = \sum_{\ell\ni x}\ket{\ell^\perp} = \sum_{\ell^\perp\ni x}\ket{\ell} = \sum_{\ell\subset x^\perp}\ket{\ell} = \ket{p_{x^\perp}},
\end{equation}
where $x^\perp$ is the plane dual to the point $x$. Here we have used the fact that $x\in\ell$ if and only if $\ell^\perp\subset x^\perp$ for all $x\in\PG$, $\ell\in L$. Moreover, because the duality permutes the set of lines, the uniform line-sum $\ket{t}$ is invariant under $\sigma_{\text{c}}$, that is, $\sigma_{\text{c}}\ket{t} = \ket{t}$. By combining these results, we arrive at the following theorem.

\begin{theorem}\label{Thm:S8code}
    The duality permutation operator $\sigma_{\text{c}}$ acts as the non-Clifford unitary operator on the logical space. Defining a $2^4\times 2^4$ dimensional matrix to have entries $(U_{\text{c}})_{xy} = \bra{\bar{x}}\sigma_{\text{c}}\ket{\bar{y}}$, we have
    \begin{equation}\label{Eq:Uc}
        U_{\text{c}}\ket{x} = \frac{1}{3}\sum_{y\in x^\perp}\ket{y} - \frac{1}{6}\sum_{y\not\in x^\perp}\ket{y}.
    \end{equation}
    for all $x\in \PG$, while the action on the logical zero state is given by $U_{\text{c}}\ket{0000}=\ket{0000}$. Here $x^\perp$ is the plane in $\PG$ consisting of the 7 points orthogonal to $x$; the relevant incidence data are recorded in \cref{Tab:xinL}. Every unitary permutation operator acting on the eight physical qubits can be expressed as either $\sigma_g$ or $\sigma_{\text{c}}\sigma_g$ for some $g\in\GLx$, so the code has the maximal permutation automorphism group $\PAut Q = S_8$.
\end{theorem}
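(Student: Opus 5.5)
The plan is a direct computation inside the span of the line states. It has four steps: act with $\sigma_{\text{c}}$ on each logical basis vector, re-express the result in the logical basis, read off $U_{\text{c}}$, and then deduce non-Cliffordness and the $S_8$ statement.

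\emph{Step 1: the zero state and $\ket{t}$.} The state $\ket{\bar 0}$ is a superposition of $\ket{0\cdots0}$ and $\ket{1\cdots1}$, which every permutation fixes, so $\sigma_{\text{c}}\ket{\bar 0}=\ket{\bar 0}$. This gives $U_{\text{c}}\ket{0000}=\ket{0000}$. Next I invert the definition of the logical states. Summing $\ket{\bar x}$ over $x\in\PG$ and using \cref{Eq:TaX} gives
\[
\sum_{x}\ket{\bar x}=\Bigl(\tfrac{3}{\sqrt6}-3\bigl(\tfrac{1}{\sqrt6}-\tfrac{1}{\sqrt{21}}\bigr)\Bigr)\ket t=\tfrac{3}{\sqrt{21}}\ket t .
\]
Hence $\ket t=\tfrac{\sqrt{21}}{3}\sum_y\ket{\bar y}$, and
\[
\ket{a_x}=\sqrt6\,\ket{\bar x}+\tfrac{\sqrt6}{5}\bigl(\tfrac1{\sqrt6}-\tfrac1{\sqrt{21}}\bigr)\ket t .
\]
In particular the states $\ket{a_x}$ and $\ket t$ both lie in the span of $\{\ket{\bar y}\}_{y\in\PG}$.

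\emph{Step 2: the action on $\ket{\bar x}$.} By \cref{Eq:PerpAction} and \cref{Lem:Pspan},
\[
\sigma_{\text{c}}\ket{a_x}=\ket{p_{x^\perp}}=\tfrac12\Bigl(\sum_{y\in x^\perp}\ket{a_y}-\ket t\Bigr),
\]
and $\sigma_{\text{c}}\ket t=\ket t$. Therefore
\[
\sigma_{\text{c}}\ket{\bar x}=\tfrac{1}{2\sqrt6}\sum_{y\in x^\perp}\ket{a_y}-\Bigl[\tfrac{1}{2\sqrt6}+\tfrac15\bigl(\tfrac1{\sqrt6}-\tfrac1{\sqrt{21}}\bigr)\Bigr]\ket t .
\]
I then substitute the expressions from Step 1 for $\ket{a_y}$ and $\ket t$. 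Each $y\in x^\perp$ gets the coefficient $\tfrac12$ from its own $\ket{a_y}$ term. On top of this, every $y\in\PG$ receives one common constant coming from the $\ket t$ terms; this constant uses $|x^\perp|=7$. The remaining work is the arithmetic check that this constant equals $-\tfrac16$, so that the coefficients are $\tfrac13$ for $y\in x^\perp$ and $-\tfrac16$ otherwise. This bookkeeping of surds is the only real obstacle, and it is routine.

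As a sanity check, the resulting column has squared norm $7/9+8/36=1$. Since the code space is preserved, $(U_{\text{c}})_{xy}=\bra{\bar x}\sigma_{\text{c}}\ket{\bar y}$ gives exactly \cref{Eq:Uc}.

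\emph{Step 3: non-Cliffordness.} A Clifford unitary maps the computational basis state $\ket{x}$ to a stabiliser state. Every stabiliser state has all its nonzero amplitudes of equal modulus. The column $U_{\text{c}}\ket{x}$ for $x\neq0$ has nonzero amplitudes of moduli $\tfrac13$ and $\tfrac16$. Hence $U_{\text{c}}$ is not Clifford, even up to phase.

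\emph{Step 4: the full permutation group.} The permutation $\tau_{\text{c}}$ is odd and $\phi$ identifies $\{\sigma_g\}$ with $A_8$, which has index $2$ in $S_8$. So every permutation of the eight qubits is either $\sigma_g$ or $\sigma_{\text{c}}\sigma_g$ for some $g\in\GLx$. Both types preserve $Q$: the first by the construction theorem, the second by Steps 1--2. Therefore $\PAut Q=S_8$.
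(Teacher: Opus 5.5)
Your proposal is correct and follows essentially the same route as the paper. You act with $\sigma_{\text{c}}$ via \cref{Eq:PerpAction} and \cref{Lem:Pspan}, re-express $\ket{a_y}$ and $\ket{t}$ in the logical basis, and read off the coefficients $\tfrac13$ and $-\tfrac16$; your explicit inversion $\ket{t}=\tfrac{\sqrt{21}}{3}\sum_y\ket{\bar y}$ is just a tidier form of the paper's rearrangement using $\ket{t}=\tfrac13\sum_x\ket{a_x}$. Your justifications of non-Cliffordness (stabiliser states have equal-modulus amplitudes) and of $\PAut Q=S_8$ (the coset decomposition $S_8=A_8\sqcup\tau_{\text{c}}A_8$ with $\tau_{\text{c}}$ odd) spell out what the paper only states briefly.
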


\begin{proof}
    From \cref{Eq:PlaneToX,Eq:PerpAction} we can directly compute the action of $\sigma_{\text{c}}$ on $\ket{\bar{x}}$ for any $x\in\PG$:
    \begin{equation}
        \sigma_{\text{c}}\ket{\bar{x}} = \frac{1}{\sqrt{6}}\ket{p_{x^\perp}}+\frac{1}{5}\left(\frac{1}{\sqrt{21}} - \frac{1}{\sqrt{6}}\right)\ket{t} = \frac{1}{2\sqrt{6}}\left(\sum_{y\in x^\perp}\ket{a_y} - \ket{t}\right)+\frac{1}{5}\left(\frac{1}{\sqrt{21}} - \frac{1}{\sqrt{6}}\right)\ket{t}.
    \end{equation}
    \cref{Eq:TaX} tells us that the uniform state $\ket{t}$ can be written as $\ket{t} = \frac{1}{3}\sum_{x\in\PG}\ket{a_x}$, so for any plane $\Pi$ we have $\ket{t} = \frac{1}{3}\big(\sum_{x\in\Pi}\ket{a_x} + \sum_{x\not\in\Pi}\ket{a_x}\big)$. Rearranging, we find
    \begin{equation}
        \sigma_{\text{c}}\ket{\bar{x}} = \frac{1}{3}\sum_{y\in x^\perp}\ket{\bar{y}} - \frac{1}{6}\sum_{y\not\in x^\perp}\ket{\bar{y}}.
    \end{equation}
    for all $x\in\PG$. Finally, $\ket{\bar{0}}$ is invariant under all permutations, so $\sigma_{\text{c}}\ket{\bar{0}} = \ket{\bar{0}}$ and the result holds for all $x\in\mathbb{F}_2^4$. That this operator is unitary follows from the fact that it preserves the logical subspace, while the fact that it is non-Clifford can be seen by noting that the amplitudes of the images of the logical basis states do not have uniform magnitudes.
\end{proof}

Since $\sigma_{\text{c}}$ is a product of transpositions, it squares to the identity, which implies that $U_{\text{c}}^2=\mathbbm{1}$. As a consequence, the unitary matrix $U_{\text{c}}$ can only have eigenvalues $\pm 1$. The structure of the $\pm1$-eigenspaces of $U_{\text{c}}$ within the code space can be understood by inspecting the geometrical meaning of the duality operation. Firstly, since $(S^\perp)^\perp = S$ for all subspaces $S\leq\PG$, the (unnormalised) states $\ket{\psi_x^\pm} \coloneqq \ket{a_x}\pm\ket{p_{x^\perp}}$ satisfy
\begin{equation}
    \sigma_{\text{c}}\ket{\psi_x^\pm} = \ket{p_{x^\perp}}\pm\ket{a_x} = \pm \ket{\psi_x^\pm},
\end{equation}
i.e. they are $\pm 1$ eigenstates of $\sigma_{\text{c}}$. The states $\ket{\psi^\pm_x}$ may be explicitly constructed using the data from \cref{Tab:xinL}. Because $\ket{\psi_x^+}+\ket{\psi_x^-}=2\ket{a_x}$ and $\ket{\psi_x^+}-\ket{\psi_x^-}=2\ket{p_{x^\perp}}$, it follows from \cref{Lem:Pspan}, and the fact that the states $\ket{a_x}$ span the space of non-zero logical codewords $Q_4 = Q\cap\Span\{\ket{s}\,|\,s\in\mathbb{F}_2^8, |s|=4\} = \Span\{\ket{\bar{x}}\,|\,x\in\PG\}$, that this space decomposes as $Q_4 = Q_4^+\oplus Q_4^-$, where $Q_4^\pm \coloneqq \Span\left\{\ket{\psi^\pm_x}\,\middle|\, x\in\PG \right\}$. Hence $\dim Q_4^+ + \dim Q_4^- = \dim Q_4 = 15$, meaning that the states $\ket{\psi_x^\pm}$ are not linearly independent.

To determine the dimensions of the subspaces $Q_4^\pm$, we compute the trace of $U_{\text{c}}$ over $Q_4$. A point $x\in\PG$ is said to be \textit{isotropic} if $x\in x^\perp$, otherwise $x$ is \textit{anisotropic}. $x\in\PG$ is isotropic if and only if its binary representation contains an even number of 1s, so exactly 7 points in $\PG$ are isotropic, and the remaining 8 are anisotropic. Now, from \cref{Eq:Uc} we have
\begin{equation}
    \bra{x}U_{\text{c}}\ket{x} = \begin{cases}1/3 & x\text{ is isotropic}\\ -1/6 & x\text{ is anisotropic}\end{cases}
\end{equation}
so that 
\begin{equation}\label{Eq:UcTrace}
    \dim Q_4^+ - \dim Q_4^- = \tr(U_{\text{c}}|_{Q_4}) = 7\cdot \frac{1}{3} + 8\cdot\left(-\frac{1}{6}\right) = 1,
\end{equation}
which implies $\dim Q_4^+ = 8$ and $\dim Q_4^- =7$. Combined with the fact that $U_{\text{c}}\ket{0000} = \ket{0000}$, we see that the +1 eigenvalue of this operator has multiplicity 9, while $-1$ has multiplicity 7.

\subsection{Transversal non-Clifford gate}\label{Sec:NonClifford}
\cref{Thm:S8code} establishes that the code has a non-Clifford gate implemented by a specific odd permutation $\tau_{\text{c}}$. This is not the only $\SWAP$-transversal logical gate, as the following result shows.

\begin{theorem}\label{Thm:TransversalT8}
    The operator $T^{\otimes 8}$ is a transversal gate for the $(\!(8, 2^4, 2)\!)$ phantom code, and implements the operation $2\ketbra{\bar{0}}-\mathbbm{1}$ on the logical subspace.
\end{theorem}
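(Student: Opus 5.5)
The argument rests on the Hamming-weight structure of the codewords, since $T=\diag(1,\ee^{\ii\pi/4})$ acts diagonally in the computational basis. On any basis state $\ket{s}$ with $s\in\mathbb{F}_2^8$ we have $T^{\otimes 8}\ket{s}=\ee^{\ii\pi|s|/4}\ket{s}$, so the phase depends only on the Hamming weight $|s|$. We already know from the construction that the code space splits as $Q=\Span\{\ket{\bar{0}}\}\oplus Q_4$:
\begin{itemize}
\item $\ket{\bar{0}}$ is supported on weights $0$ and $8$;
\item every $\ket{\bar{x}}$ with $x\in\PG$ is a combination of line states, and these are supported on weight-$4$ strings only.
\end{itemize}
So the whole proof reduces to evaluating the phase $\ee^{\ii\pi w/4}$ on each weight sector.

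First, on $Q_4$ every component has $|s|=4$, so $T^{\otimes 8}\ket{\bar{x}}=\ee^{\ii\pi}\ket{\bar{x}}=-\ket{\bar{x}}$ for all $x\in\PG$. Second, on $\ket{\bar{0}}$ the two components have weights $0$ and $8$, giving phases $1$ and $\ee^{2\pi\ii}=1$. Hence $T^{\otimes 8}\ket{\bar{0}}=\ket{\bar{0}}$.

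It follows that $T^{\otimes 8}$ preserves $Q$, so it is a logical operator. Since it lies in $\mathsf{U}(2)^8$, it belongs to $\mathcal{T}(Q)$. On the logical basis it acts as $+1$ on $\ket{\bar 0}$ and $-1$ on the fifteen states $\ket{\bar x}$, which is exactly $2\ketbra{\bar{0}}-\mathbbm{1}$.

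There is no real obstacle; the proof is immediate from the weight decomposition. The one point needing care is the claim that $Q_4$ lies entirely in the weight-$4$ sector. This holds because $\ket{a_x}$ and $\ket{t}$ are sums of line states $\ket{\ell}$, and each $b(\ell)$ and its complement have weight $4$.

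Although not asked, one could add that the resulting logical gate is the reflection $\mathbbm{1}-2\ketbra{\overline{0000}}$ up to a global sign. This is a multi-controlled-$Z$-type gate on four qubits (with $X$ conjugations), which is non-Clifford because it is a $C^3Z$ up to Paulis.
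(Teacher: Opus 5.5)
Your proof is correct and matches the paper's. Both use that $T^{\otimes 8}$ is diagonal with phase $\ee^{\ii\pi|s|/4}$ depending only on Hamming weight, which gives $+1$ on the weight-$0$/$8$ codeword $\ket{\bar{0}}$ and $-1$ on the weight-$4$ codewords $\ket{\bar{x}}$; the paper merely phrases this for general $R(\theta)^{\otimes 8}$, noting that code preservation forces $8\theta\equiv 0 \bmod 2\pi$ and nontriviality forces $4\theta\equiv\pi \bmod 2\pi$, so every admissible nontrivial choice (including $\theta=\pi/4$) yields the same logical gate.
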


\begin{proof}
    Define the single-qubit phase gate $R(\theta)\coloneqq\exp(\ii\theta[\mathbbm{1}- Z]/2)=\diag(1, \ee^{\ii\theta})$, and consider the transversal action of $R(\theta)^{\otimes 8}$ on the different logical codewords. Firstly, we have
    \begin{equation}
        R(\theta)^{\otimes 8}\ket{\bar{0}} = \frac{\ket{00000000}+\ee^{8\ii\theta}\ket{11111111}}{\sqrt{2}},
    \end{equation}
    which remains in the logical space if and only if $8\theta \equiv 0 \mod 2\pi$. All the other logical codewords are superpositions of weight-4 binary strings, so $R(\theta)^{\otimes 8}\ket{\bar{x}} = \ee^{4\ii\theta}\ket{\bar{x}}$ for all $x\in\PG$. For the allowed values of $\theta$, the logical operation is non-trivial only if $4\theta \equiv \pi \mod 2\pi$, and any such choice yields the same unitary $\diag(1, -1, \ldots, -1) = 2\ketbra{0}-\mathbbm{1}$ on the logical space. In particular, the choice $\theta = \pi/4$ gives $R(\pi/4)^{\otimes 8} = T^{\otimes 8}$.
\end{proof}

\subsection{Stabilisers and permutation representation decomposition}\label{Sec:PermutationModule}
We now analyse the representation-theoretic structure of the code space under the natural action of the full permutation group $S_8$. The weight-4 part of the $(\!(8, 2^4, 2)\!)$ code $Q$ is most naturally viewed in two complementary ways. On the one hand, it is a 15-dimensional permutation module arising from the action of $S_8$ on the points of $\PG$. On the other hand, because $Q$ is $S_8$-invariant, Schur-Weyl duality identifies it as a direct sum of irreducible $S_8$ representations inside $(\mathbb{C}^2)^{\otimes 8}$. 

Let $J_\alpha = \frac{1}{2}\sum_{i=1}^n\sigma^{(i)}_\alpha$, $\alpha\in\{x, y, z\}$, be the collective spin operators, and write $J^2 = J_x^2+J_y^2+J_z^2$. By Schur-Weyl duality, the Hilbert space of physical qubits decomposes as
\begin{equation}
    (\mathbb{C}^2)^{\otimes 8} \cong \bigoplus_{\lambda\vdash 8} S^{(\lambda)}\otimes D^{(\lambda)} = \bigoplus_{j=0}^4 S^j\otimes D_j,
\end{equation}
where $\lambda=(\lambda_1\geq\lambda_2)$ with $\lambda_1+\lambda_2 = 8$ is a partition of $n=8$, and $S^{(\lambda)}$ and $D^{(\lambda)}$ respectively denote the irreducible representations of $S_8$ and $\mathsf{U}(2)$ corresponding to $\lambda$. In the second equality we have expressed this partition as $\lambda = (4+j, 4-j)$ for $j=0, \ldots, 4$, and noted that $D_j\coloneqq D^{(4+j, 4-j)}$ is the spin-$j$ representation of $\mathsf{U}(2)$, upon which the Casimir operator $J^2$ acts as the scalar $j(j+1)$. The dimensions of the corresponding $S_8$ irreps $S^j\coloneqq S^{(4+j, 4-j)}$ may be computed with the hook length formula, and are given by $\dim S^j = \binom{8}{4-j}-\binom{8}{3-j}$; this yields $\dim S^0 = 1$, $\dim S^1=7, \dim S^2=20, \dim S^3=28$, and $\dim S^4 = 14$.

Using this decomposition, we now identify the $S_8$-representation structure of the code $Q$.

\begin{lemma}
    As an $S_8$ representation, the code space decomposes as $Q \cong S^0\oplus S^0\oplus S^4$. Equivalently,     the space of weight-4 logical codewords $Q_4 \coloneqq Q\cap\Span\{\ket{s}\,|\,s\in\mathbb{F}_2^8, |s|=4\} = \Span\{\ket{\bar{x}}\,|\,x\in\PG\}$ is given by $Q_4\cong \mathbb{C}\ket{t}\oplus S^4$.
\end{lemma}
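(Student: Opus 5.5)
We will determine the decomposition weight sector by weight sector, using that permutations preserve Hamming weight. The code space splits as $Q = \mathbb{C}\ket{\bar{0}} \oplus Q_4$. By \cref{Thm:S8code} every permutation maps $Q$ to itself, so each summand is $S_8$-invariant. The state $\ket{\bar{0}}$ is fixed by every permutation, so it spans a copy of the trivial representation $S^0$. It therefore remains to show $Q_4 \cong \mathbb{C}\ket{t}\oplus S^4$. Here $\ket{t}$ is the uniform superposition of all weight-4 strings, so it is $S_8$-invariant and gives the second $S^0$.

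For the weight-4 sector we use the known decomposition of the permutation module $\mathbb{C}\{s : |s|=4\}$, which has dimension $\binom{8}{4}=70$. By Young's rule, or equivalently by restricting Schur--Weyl duality to the $J_z=0$ subspace, this module decomposes as $S^0\oplus S^1\oplus S^2\oplus S^3\oplus S^4$. The dimensions check: $1+7+20+28+14=70$. Since the decomposition is multiplicity-free, every $S_8$-invariant subspace of the weight-4 sector is a direct sum of some of these isotypic pieces.

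Next we find which pieces lie in $Q_4$. Since $\dim Q_4 = 15$ and $Q_4$ contains $\mathbb{C}\ket{t}\cong S^0$, the orthogonal complement of $\ket{t}$ in $Q_4$ has dimension $14$. The only subsets of $\{7,20,28,14\}$ summing to $14$ is $\{14\}$, so this complement must be $S^4$.

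The step needing care is identifying $S^4$ concretely rather than just by counting. $S^4 = S^{(8)}$ would be trivial, so recall that in the paper's convention $S^j = S^{(4+j,4-j)}$. Thus $S^4$ sits in the $j=4$ block, which is the maximal-spin (symmetric) block. On the $J_z=0$ slice this should be one-dimensional, not $14$-dimensional. This exposes a labelling mismatch: the multiplicity space paired with spin $j$ is the $S_8$-irrep $(4+j,4-j)$, whose dimension is $\binom{8}{4-j}-\binom{8}{3-j}$. That gives $14$ for $j=0$, namely $(4,4)$, and $1$ for $j=4$, namely $(8)$. So the paper's "$S^4$" of dimension $14$ is the irrep $(4,4)$, the singlet ($J^2=0$) sector.

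As a consistency check we verify the singlet claim directly. Each line state $\ket{\ell}$ is a $\pm$ symmetric combination of $b$ and $\bar b$. We would show that $Q_4\ominus\mathbb{C}\ket{t}$ is annihilated by $J_+$, which suffices because it lies at $J_z=0$. This also agrees with the distance-2 property, since single-qubit errors have vanishing projection onto the code.

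The main obstacle is this dimension/label bookkeeping. The counting argument itself is immediate once multiplicity-freeness of the weight-4 permutation module is in hand.
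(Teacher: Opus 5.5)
Your proof is correct and follows the paper's argument: you split off $\mathbb{C}\ket{\bar 0}$, use the multiplicity-free decomposition of the weight-4 ($J_z=0$) sector into the five two-row irreps, and then force the $15$-dimensional invariant subspace $Q_4$ to be the $1$- plus $14$-dimensional pieces by counting dimensions. The only difference is that you extract $\ket t$ first, which needs $\ket t\in Q_4$ (this holds, e.g.\ since $\sum_{x}\ket{\bar x}\propto\ket t$), whereas the paper counts $15$ directly and identifies the trivial piece afterwards. Your labelling remark is also right: with $S^j=S^{(4+j,4-j)}$ the paper's own dimension formula gives $\dim S^0=14$ and $\dim S^4=1$, so the $14$-dimensional summand is the $(4,4)$ irrep in the spin-$0$ sector, which is consistent with the paper's later use of $J^2(J^2-20)$.
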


\begin{proof}
    The vector $\ket{\bar{0}}$ is invariant under all physical permutations, so $\mathbb{C}\ket{\bar{0}}$ forms a trivial irrep $S^0$ of $S_8$. We now focus on the space $Q_4$, which by \cref{Thm:S8code} is likewise invariant under $S_8$. Every state in $Q_4$ is a linear combination of weight-4 computational basis vectors, so lies in the $J_z = 0$ sector. Since each spin-$j$ irrep contains a unique vector with $J_z=0$, the weight-4 subspace decomposes as
    \begin{equation}
        \Span\{\ket{s}\;|\; s\in\mathbb{F}_2^8, |s|=4\}\cong S^0\oplus S^1\oplus S^2\oplus S^3\oplus S^4,
    \end{equation}
    and has dimension $\binom{8}{4}=70 = 1+7+20+28+14$. Because $\dim Q_4 = 15$, the only possible decomposition is $Q_4\cong S^0\oplus S^4$. The trivial irrep $S^0$ is readily identified as $\mathbb{C}\ket{t}$, as the state $\ket{t}$ is invariant under all permutations. 
\end{proof}

This decomposition yields a particularly simple description of the code space in terms of a small number of non-Pauli stabilisers. Firstly, the above result shows that $J^2(J^2-20)$ vanishes on the code space. Secondly, from \cref{Thm:TransversalT8}, we know that $T^{\otimes 8}$ implements an involutive logical gate, so $S^{\otimes 8} = (T^{\otimes 8})^2$ acts as the logical identity on the code. Finally, since both $\ket{\bar{0}}$ and all the line states $\ket{\ell}$ from which the logical states are constructed are invariant under a simultaneous bit flip of all physical qubits, the code space is invariant under $X^{\otimes 8}$. In fact, it can be verified that the mutual intersection of the relevant eigenspaces of these operators has dimension 16, so it must be the code space:
\begin{equation}
    Q = (V_{j=0}\oplus V_{j=4})\cap V_{S^{\otimes 8} = +1} \cap V_{X^{\otimes 8} = +1}.
\end{equation}
In other words, the non-Pauli operators $X^{\otimes 8}$, $S^{\otimes 8}$ and $J^2(J^2-20)$ form a complete set of stabilisers for the code.

\subsection{No Pauli stabiliser \texorpdfstring{$[\![8, 4]\!]$}{[[8, 4]]} phantom code exists}\label{Sec:NoPauli}
We now show that the exceptional $(\!(8, 2^4, 2)\!)$ code constructed above is necessarily non-Pauli. More precisely, no Pauli stabiliser code of type $[\![8, 4]\!]$ can admit the required $A_8\cong\GLx$ symmetry.

Recall that an $[\![n, k]\!]$ stabiliser code is specified by an Abelian subgroup $S$ of the $n$-qubit Pauli group of size $2^{n-k}$. By associating the operator $X^{a_1} Z^{b_1}\otimes\cdots\otimes X^{a_n} Z^{b_n}$ with the binary label $(\vb{a}|\vb{b})\in V = \mathbb{F}_2^n\oplus\mathbb{F}_2^n$, the group $S$ can be identified with an $(n-k)$-dimensional linear subspace $\Lambda$ of $V$ that is totally isotropic with respect to the standard symplectic form $\omega[(\vb{a}|\vb{b}), (\vb{a}'|\vb{b}')] = \vb{a}\cdot\vb{b}'+\vb{a}'\cdot\vb{b}$. This picture extends to subsystem stabiliser codes: the gauge group $\mathcal{G}$ of an $[\![n, k, r]\!]$ subsystem code, where $r\leq n-k$ is the number of gauge qubits, is associated with a non-isotropic subspace $\Gamma\leq V$ of dimension $n-k+r$, and the corresponding isotropic stabiliser subspace $\Lambda = \Gamma^\perp\cap\Gamma$ has dimension $n-k-r$ (this is a consequence of the relation $S = \mathcal{Z}(\mathcal{G})\cap\mathcal{G}$, where $\mathcal{Z}(\mathcal{G})$ is the centre of the gauge group~\cite{PhysRevLett.95.230504}).

A Clifford gate $U$ preserves the code space of a stabiliser code with stabiliser group $S$ if and only if $USU^\dagger = S$. In particular, if a qubit permutation $\sigma$ is a code automorphism, then the corresponding action on binary labels preserves $\Lambda$. Permutations act diagonally on the binary strings in $V$ by individually permuting the $X$- and $Z$-Pauli labels, i.e. $(\vb{a}|\vb{b})\mapsto(\sigma\vb{a}|\sigma\vb{b})$. Hence if a stabiliser code is invariant under a subgroup of permutations $G$, it follows that both the binary spaces $\Lambda_X\coloneqq\pi_X(\Lambda)$ and $\Lambda_Z\coloneqq\pi_Z(\Lambda)$, where $\pi_X:(\vb{a}|\vb{b})\mapsto\vb{a}$ and $\pi_Z:(\vb{a}|\vb{b})\mapsto\vb{b}$ are the coordinate projections, must be individually invariant under the same group $G$. $\Lambda_X$ and $\Lambda_Z$ are classical binary codes with automorphism groups containing $G$. When applied to subsystem codes, an identical argument shows that $\Gamma_X\coloneqq\pi_X(\Gamma)$ and $\Gamma_Z\coloneqq\pi_Z(\Gamma)$ must be classical codes with $\Aut\Gamma_X$ and $\Aut \Gamma_Z$ both containing $G$. In the particular case $G=A_n$, the following lemma places constraints on the possible forms of each of these codes.

\begin{proposition}[Proposition 6.1, Ref.~\cite{bienert-2009}]\label{Prop:A8Codes}
    Let $C$ be a binary linear code of length $n$. If $A_n\leq \Aut C$ and $n\neq 2$, then $C$ is either $\{0\}, C^{\text{rep}}_n, C^{\text{even}}_n$, or $\mathbb{F}_2^n$, where $C^{\text{rep}}_n$ is the repetition code and $C^{\text{even}}_n$ is the code consisting of all even-weight vectors.
\end{proposition}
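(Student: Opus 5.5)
The plan is to show that any nonzero $A_n$-invariant code either contains the even-weight code or consists only of $\mathsf{0}$ and the all-ones vector. The argument uses only two facts about $A_n$: it contains all $3$-cycles, and for $n\geq 3$ it acts transitively on unordered pairs of coordinates. The case $n=1$ is immediate, since the only codes of length 1 are $\{0\}=\{0\}$ and $\mathbb{F}_2=C^{\text{rep}}_1$. So assume $n\geq 3$ and $C\neq\{0\}$.

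\emph{Case 1: every nonzero codeword has weight $n$.} Then the only nonzero codeword is the all-ones vector, so $C=C^{\text{rep}}_n$.

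\emph{Case 2: some $c\in C$ has support $S$ with $0<|S|<n$.} I will produce a weight-2 vector in $C$ as $c+\tau c$ for a suitable $3$-cycle $\tau\in A_n$. Pick $i\in S$ and $j\notin S$.
\begin{itemize}
\item If the complement of $S$ contains some $b\neq j$, take $\tau=(i\;j\;b)$. This sends $S$ to $(S\setminus\{i\})\cup\{j\}$, so $c+\tau c=e_i+e_j$.
\item Otherwise the complement is exactly $\{j\}$, so $|S|=n-1\geq 2$ and there is some $a\in S\setminus\{i\}$. Take $\tau=(a\;i\;j)$. This sends $S$ to $(S\setminus\{a\})\cup\{j\}$, so $c+\tau c=e_a+e_j$.
\end{itemize}
Either way $\tau c\in C$ by invariance, so $C$ contains some vector $e_p+e_q$ with $p\neq q$.

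$A_n$ acts transitively on unordered pairs for $n\geq 3$ (it is $2$-transitive for $n\geq 4$, and $A_3$ is visibly transitive on the three pairs). Hence every $e_p+e_q$ lies in $C$, and these span $C^{\text{even}}_n$. Since $C^{\text{even}}_n$ has codimension $1$ in $\mathbb{F}_2^n$, it follows that $C\in\{C^{\text{even}}_n,\mathbb{F}_2^n\}$.

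The only real obstacle is realising the "move one coordinate" step by an \emph{even} permutation, because the natural transposition $(i\;j)$ is odd. The $3$-cycle choice above handles this uniformly for all $n\geq3$, with the second sub-case covering supports of size $n-1$. The exclusion $n=2$ is necessary because $A_2$ is trivial, so every code of length 2 (for instance $\mathbb{F}_2(1,0)$) would otherwise qualify.
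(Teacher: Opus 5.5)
Your proof is correct and complete. The case split is exhaustive. In both sub-cases of Case 2, $\tau(S)$ is obtained from $S$ by trading one element for one outside it, so $c+\tau c$ has weight two. Transitivity of $A_n$ on $2$-subsets for $n\ge 3$ then puts every $e_p+e_q$ in $C$, giving $C^{\text{even}}_n\subseteq C$. Your treatment of $|S|=n-1$ via $(a\;i\;j)$ and of the edge cases $n=1,2$ is right.

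The paper gives no proof of this proposition. It is quoted from Bienert--Klopsch and used only as a black box in the no-go theorem for Pauli $[\![8,4]\!]$ phantom codes, so there is no internal argument to compare against. Your proof makes that step self-contained, using only that $A_n$ contains all $3$-cycles and is $2$-homogeneous.

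Binarity enters your argument in two places. In Case 1, a full-weight vector must be the all-ones vector. In the second sub-case, $c+\tau c$ has weight exactly two because $c_a=c_i=1$. This dependence is genuine: over $\mathbb{F}_4$ with $n=3$, the line spanned by $(1,\zeta,\zeta^2)$, with $\zeta$ a primitive cube root of unity, is $A_3$-invariant and is not one of the four listed codes.
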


From here, we can prove a general result concerning the construction of phantom stabiliser codes of minimal length for $k=4$.

\begin{theorem}
    No Pauli stabiliser subsystem phantom code of type $[\![8, 4, r, d\geq 2]\!]$ exists.
\end{theorem}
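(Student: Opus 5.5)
The plan is to compare composition factors of $\mathbb{F}_2\GLx$-modules: a stabiliser phantom code would force two $4$-dimensional irreducible factors into a $16$-dimensional module whose factors have dimensions only $1$ and $6$.

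\textbf{Step 1: the permutation group contains $A_8$.} Suppose for contradiction that such a subsystem stabiliser code exists. Let $G\leq S_8$ be the group of permutations implementing logical $\CNOT$ circuits up to gauge, defined as in \cref{Eq:GSUB} but restricted to permutations, and let $N$ be the kernel of $G\to\GLx$. Phantomness gives $G/N\cong\GLx\cong A_8$, so $|G|\geq 20160$. The only subgroups of $S_8$ of index at most $2$ are $A_8$ and $S_8$, so $A_8\leq G$. This is all the group theory we need. In particular we do not need to determine $N$, although it is in fact trivial.

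\textbf{Step 2: $A_8$-submodules of the binary Pauli space.} Write $V=\mathbb{F}_2^8\oplus\mathbb{F}_2^8$ for the binary labels of Paulis. Permutations act diagonally on $V$, so $V$ is a direct sum of two copies of the permutation module $\mathbb{F}_2^8$. By \cref{Prop:A8Codes}, the only $A_8$-invariant subspaces of $\mathbb{F}_2^8$ form the chain
\[
0\subset C^{\text{rep}}_8\subset C^{\text{even}}_8\subset\mathbb{F}_2^8,
\]
using that $n=8$ is even, so the repetition code is contained in the even-weight code. Consequently $C^{\text{even}}_8/C^{\text{rep}}_8$ is a $6$-dimensional irreducible $\mathbb{F}_2A_8$-module. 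The composition factors of $\mathbb{F}_2^8$ are therefore of dimensions $1,6,1$, and those of $V$ are of dimensions $1,1,1,1,6,6$.

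\textbf{Step 3: the logical Pauli space is a subquotient of $V$.} The stabiliser and gauge subspaces $\Lambda\subseteq\Gamma$ are $G$-invariant, because each element of $G$ is a Clifford automorphism of the code. Hence $\Gamma^\perp$ is also invariant, and the logical quotient $\Lambda^{\perp}/\Gamma$ is an $8$-dimensional $A_8$-subquotient of $V$. Here $\Lambda^\perp$ is taken with respect to the symplectic form $\omega$, and we use the standard fact that the bare logical operators modulo gauge form a space of dimension $2k$. By Jordan--Hölder, its composition factors are drawn from the list in Step 2.

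\textbf{Step 4: the induced action is the $\CNOT$ action.} Take $\sigma\in G$ with $\sigma V=V(U_g\otimes w)$. Then conjugation by $\sigma$ sends a logical Pauli $\bar P$ to a representative of $U_g\bar P U_g^\dagger$, up to gauge and stabiliser factors. So the action on $\Lambda^\perp/\Gamma\cong\mathbb{F}_2^4\oplus\mathbb{F}_2^4$ is the symplectic action $g\oplus g^{-\intercal}$ of the $\CNOT$ circuit, transported through $\phi$. This action has two $4$-dimensional composition factors, namely the natural module and its dual. Both are irreducible and nontrivial, since $\GLx$ acts transitively on the nonzero vectors of each. A composition factor of dimension $4$ cannot be assembled from factors of dimensions $1$ and $6$, which is the desired contradiction.

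\textbf{Main obstacle.} The delicate step is Step 4. One must check carefully that gauge freedom and global phases do not spoil the identification of the induced $\mathbb{F}_2$-linear action on $\Lambda^\perp/\Gamma$ with $g\oplus g^{-\intercal}$. I would handle this by arguing that Paulis in $\Gamma$ act trivially on the logical factor $\mathcal{A}$. Note also that the distance hypothesis $d\geq2$ appears not to be needed, so the argument should rule out all $[\![8,4,r]\!]$ stabiliser phantom codes.
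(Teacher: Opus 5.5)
\textbf{Gap in Step 4.} Your Steps 1--3 are sound, and your route differs from the paper's, which never looks at the logical Pauli module. The paper uses the subsystem Singleton bound to get $r\le 2$, hence $\dim\Gamma\le 6$. By \cref{Prop:A8Codes} both projections $\Gamma_X,\Gamma_Z$ then lie in $\{0, C^{\text{rep}}_8\}$, so $\dim\Gamma\le 2$, a contradiction.

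The gap is in Step 4, and it is not the gauge/phase issue you flag. \cref{Def:STP} only asks that the permutations act as $\CNOT$ circuits $U_g$ in \emph{some} logical basis of $\mathcal{A}$. Your identification of the action on $\Lambda^\perp/\Gamma$ with $g\oplus g^{-\intercal}$ silently assumes that this basis is the Pauli frame defined by the code's bare logical operators. What your argument does establish is weaker: each $U_g$ normalises the code-frame logical Pauli group, with symplectic image equal to the action of $\sigma$ on $\Lambda^\perp/\Gamma$. If the phantom basis differs from the code frame by a non-Clifford unitary on $\mathcal{A}$, nothing you say forces that symplectic action to be $g\oplus g^{-\intercal}$. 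Steps 2--3 already force $\Lambda^\perp/\Gamma$ to have composition factors of dimensions $1,1,6$, since this is the only sub-multiset of $\{1,1,1,1,6,6\}$ with sum $8$. That is exactly the scenario Step 4 has to exclude, and as written it is excluded only by assumption.

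\textbf{Repair.} The step can be fixed with a basis-independent invariant. If $U$ is a Clifford unitary on $k$ qubits whose symplectic image has a fixed space of dimension $f$, then $|\mathrm{tr}\,U|^2\in\{0,2^{f}\}$. To see this, write $2^k|\mathrm{tr}\,U|^2=\sum_P\mathrm{tr}(PUP^\dagger U^\dagger)$. Only Paulis with $UPU^\dagger=\pm P$ contribute, and these signs form a character on the fixed space.

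Now take $\sigma\in A_8$ a $3$-cycle. The map $A_8\to\GLx$ is an isomorphism, since $N=1$ as you note. In $\GLx$ the two classes of elements of order $3$ have sizes $112$ (fixed-point-free on $\mathbb{F}_2^4$) and $1120$ (fixing a plane pointwise), while the $3$-cycles in $A_8$ form a class of size $112$. So $g=\Pi(\sigma)$ is fixed-point-free, and $\mathrm{tr}\,U_g=|\{x: gx=x\}|=1$ in any basis.

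On the other hand, $\sigma$ has odd order, so fixed-space dimensions are additive over composition factors. A $3$-cycle fixes a $6$-dimensional subspace of $\mathbb{F}_2^8$, hence a $4$-dimensional subspace of $C^{\text{even}}_8/C^{\text{rep}}_8$. It therefore fixes a $(1+1+4)$-dimensional subspace of $\Lambda^\perp/\Gamma$. This forces $|\mathrm{tr}\,U_g|^2\in\{0,64\}$, contradicting $|\mathrm{tr}\,U_g|^2=1$.

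\textbf{Comparison.} With this patch your argument avoids the distance hypothesis and excludes every $[\![8,4,r]\!]$ stabiliser subsystem phantom code, which is stronger than the paper's statement. The paper's argument is shorter and independent of the logical frame from the outset, since it only needs $A_8$-invariance of $\Gamma$. The price is that it uses $d\geq 2$ through the Singleton bound.
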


\begin{proof}
    Recall the subsystem Singleton bound~\cite{Klappenecker_2007}: every $[\![n, k, r, d]\!]$ stabiliser subsystem code satisfies $k+r\leq n-2d+2$. Hence, any (not necessarily phantom) $[\![8, 4, r, d\geq 2]\!]$ stabiliser subsystem code must have $r\leq 2$. From now on, we assume that this condition holds.
    
    The binary subspace $\Gamma$ corresponding to an $[\![8, 4, r, d]\!]$ stabiliser subsystem code has dimension $\dim \Gamma = n-k+r = 4+r$. If the code is phantom, then by \cref{Prop:A8Codes}, the projected codes $\Gamma_X$ and $\Gamma_Z$ must both be one of $\{0\}, C^{\text{rep}}_n, C^{\text{even}}_n$, or $\mathbb{F}_2^n$, corresponding to dimensions $0, 1, 7,$ and $8$ respectively. Because $\dim \Gamma_X, \dim\Gamma_Z\leq \dim\Gamma = 4+r\leq 6$, the spaces $\Gamma_X$ and $\Gamma_Z$ must be either $\{0\}$ or $C^{\text{rep}}_n$. This implies that
    \begin{equation}
        4+r = \dim \Gamma \leq \dim\Gamma_X +\dim\Gamma_Z \leq 2,
    \end{equation}
    a contradiction, so no such codes exist.
\end{proof}

\section{Qudit phantom codes}\label{Sec:Qudit}
In the main text, we proved a number of results concerning qubit phantom and phantom-LU codes. It is natural to ask to what extent these results continue to hold for codes made up of qudits of local dimension greater than $2$. In this section we discuss the extension of \cref{Thm:GeneralCase,Thm:PhantomBound,Thm:STPbound} to phantom codes of Galois qudits, where each local Hilbert space has a prime power dimension $q=p^t$ and computational basis states labelled by the elements of the finite field $\mathbb{F}_q$. It is possible that these results may be further generalised to phantom codes over qudits of arbitrary integer dimensions; however, we do not pursue this direction here.

The natural analogues of the $\CNOT$ gate for Galois qudits are the generalised $\SUM$ gates, defined for a system of two qudits as 
\begin{equation}
    \SUM_{12}^{(\alpha)}\ket{x, y} = \ket{x, x+\alpha y}, 
\end{equation}
where $x, y, \alpha \in\mathbb{F}_q$. In the symplectic representation of the qudit Clifford group, the $\SUM_{12}^{(\alpha)}$ operator is described by a shear matrix $I+\alpha E_{12}$, where $E_{ab}$ is the matrix with entries $(E_{ab})_{cd} = \delta_{ac}\delta_{bd}\in\mathbb{F}_q$. A general circuit composed of logical $\SUM$ operators corresponds to a product of such shear matrices, which together generate $\SLkFq$. Therefore, the group of logical $\SUM$ circuits on $k$ qudits is isomorphic to $\SLkFq$. For an element $g\in\SLkFq$, we define $U_g$ to be the unitary $\SUM$ circuit corresponding to $g$; this provides a faithful unitary representation of $\SLkFq$~\cite{ConradSimple}. 

In the qudit setting, the group of $\SWAP$-transversal gates is the semidirect product $M_n^q \coloneqq \mathsf{U}(q)^n\rtimes S_n$, consisting of products of on-site qudit unitary operators $\mathsf{U}(q)$, and permutations in $S_n$. This suggests a natural way to generalise the definition of the phantom-LU property to qudit codes.

\begin{definition}
    A $q$-ary subsystem QEC code $Q$ defined by an isometry $V: \mathcal{A}\otimes \mathcal{B}\to\mathcal{H}$ encoding $k$ logical and $r$ gauge qudits into $n$ physical qudits is \emph{phantom-LU} if, for some logical basis in $\mathcal{A}$, there exists a $\SWAP$-transversal operator $m_{ab}\in M_n^q$ for each ordered pair of logical qudits $(a, b)\in [k]^2$ with $a\neq b$, such that $m_{ab} V = V(\overline{\SUM}_{ab}\otimes w_{ab})$, where $w_{ab}\in\mathsf{U}(\mathcal{B})$. Similarly, $Q$ is said to be \textit{phantom} if it is possible to choose all $m_{ab}$ to be permutations.
\end{definition}

The following example shows that qudit phantom codes exist for all $q$ and any number of encoded logical qudits $k$.

\begin{proposition}
    The $[\![q^k, k, 2]\!]_q$ quantum Reed-Muller code~\cite{1523494}, defined as the CSS code with classical constituents $C_X = \RM_q(0, k)^\perp$ and $C_Z = \RM_q(1, k)$, where $\RM_q(r, m)$ is a $q$-ary classical Reed-Muller code, is phantom. 
\end{proposition}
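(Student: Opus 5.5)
We argue directly from the CSS structure. The code has $C_X=\RM_q(0,k)^\perp$ (the $q$-ary sum-zero code), so its single $X$-stabiliser generator is the all-ones vector $\mathsf{1}$, while $C_Z^\perp=\RM_q(1,k)^\perp$ supplies the $Z$-stabilisers. Coordinates are indexed by the points $u\in\Omega=\mathbb{F}_q^k$. Because $C_Z=\RM_q(1,k)=\{\Eval(f)\mid f(u)=c+a\cdot u\}$, the logical $X$-type space is $C_Z/C_X^\perp=\RM_q(1,k)/\RM_q(0,k)$, which is naturally isomorphic to the dual space $(\mathbb{F}_q^k)^*$ of linear functionals $u\mapsto a\cdot u$. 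We therefore take the logical $\overline{X}$-type basis to be $\Eval(u_i)$, $i=1,\dots,k$, and choose the logical $\overline{Z}$ basis dual to it. That the code has $k$ logical qudits and distance $2$ is given by~\cite{1523494}, so only phantomness needs proof.

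Each $g\in\SLkFq$ permutes $\Omega$ linearly via $u\mapsto gu$, and so gives a qudit permutation $\sigma_g$. Linear substitution preserves polynomial degree, so $\sigma_g$ preserves $\RM_q(r,k)$ for every $r$, and its dual codes as well, because permutations preserve the standard bilinear form. Hence $\sigma_g$ preserves both $C_X$ and $C_Z$, and it is therefore a CSS code automorphism implementing some logical Clifford operation. Its action on logical $X$-operators is $\Eval(a\cdot u)\mapsto \Eval(a\cdot g^{-1}u)=\Eval((g^{-\intercal}a)\cdot u)$, taken modulo constants. Thus $\sigma_g$ acts on the logical $X$-labels as the linear map $g^{-\intercal}$. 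Since it maps $X$-type operators to $X$-type operators, and likewise for $Z$, it acts on the $Z$-labels contragrediently, as $g$ up to transposition conventions.

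Next we pin down which logical gate this is. A logical Clifford that maps $X$-type logicals linearly to $X$-type logicals and $Z$-type to $Z$-type is, up to a logical Pauli and a phase, the $\SUM$-circuit $U_h$ associated with that linear map. The $\SUM^{(\alpha)}_{ab}$ circuit corresponds to the shear $I+\alpha E_{ab}$ on one type of label and to its inverse transpose on the other. So it suffices to take $g$ with $g^{-\intercal}=I+\alpha E_{ab}$ (or its transpose, depending on convention); such a $g$ lies in $\SLkFq$ and is again a shear. Each $\sigma_g$ preserves $u=0$ and is a pure permutation.

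We then rule out a residual logical Pauli. The vectors $\mathsf{1}$ and $\Eval(a\cdot u)$ are fixed or permuted exactly, with no extra additive codeword, and the CSS state $\ket{\bar{x}}\propto\sum_{c\in C_X^\perp}\ket{\Eval(x\cdot u)+c}$ for $x\in(\mathbb{F}_q^k)^*$ maps to $\ket{\overline{g^{-\intercal}x}}$ with no phases. Hence $\sigma_g$ acts exactly as the basis permutation $\ket{x}\mapsto\ket{g^{-\intercal}x}$, which is precisely the $\SUM$ circuit. Writing out these codeword states explicitly is the cleanest way to avoid any Pauli or phase ambiguity, and this bookkeeping, together with matching the transpose convention between $\SUM_{ab}$ and $E_{ab}$, is the main step needing care; the remaining steps are routine.
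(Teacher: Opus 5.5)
Your proof is correct and uses the same mechanism as the paper. The paper cites Berger's result that affine maps act on $\RM_q(r,m)$ by $f\mapsto f(T^{-1}x)$ and then defers to a generalisation of Proposition 10 of Koh et al.; you carry out that generalisation explicitly, and you need only the easy inclusion $\mathsf{GL}_k(\mathbb{F}_q)\le\PAut$ (linear substitution does not raise degree), not Berger's full equality $\PAut\RM_q(r,m)=\mathsf{AGL}_m(\mathbb{F}_q)$, while your codeword computation $\sigma_g\ket{\bar a}=\ket{\overline{g^{-\intercal}a}}$ makes the intermediate symplectic-matrix paragraph unnecessary.
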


\begin{proof}
    Ref.~\cite{berger-1993} proves that the permutation automorphism group of $\RM_q(r, m)$ is equal to the affine general linear group $\mathsf{AGL}_m(\mathbb{F}_q)$ for $0\leq r\leq m(q-1)-2$, and that an affine transformation $T:x\mapsto Ax+b$ in this group acts on a polynomial $f(x)\in\RM_q(r, m)$ by sending $f(x)\mapsto f(T^{-1}x)$. The result then follows as a direct generalisation of Proposition 10 of Ref.~\cite{koh2026entanglinglogicalqubitsphysical}.
\end{proof}

Having established the existence of qudit phantom codes, we now proceed to find general constraints on their parameters by analysing the structure of their automorphism groups. The following theorem is the qudit version of \cref{Thm:GeneralCase}.

\begin{theorem}\label{Thm:QuditGeneralCase}
    Let $Q$ be a $q$-ary subspace or subsystem QEC code of length $n$. If there exist subgroups $N\triangleleft\, G\leq\PAut Q$ such that $S\coloneqq G/N$ is non-Abelian and simple, then $n\geq \mu(S)$.
\end{theorem}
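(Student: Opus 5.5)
Since $G\leq\PAut Q\leq S_n$ consists entirely of physical qudit permutations, the situation is simpler than in \cref{Thm:GeneralCase}. There is no local-unitary part to quotient out, so no analogue of $K=G\cap\mathsf{U}(q)^n$ appears, and no analogue of \cref{Lem:U2A5} is needed. This explains why the exception $S\neq A_5$ and the compactness hypotheses disappear. The plan is to apply \cref{Thm:muGN} directly to the finite permutation group $G$.

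\textbf{Step 1: $G$ is finite and acts faithfully on $n$ points.} I will first observe that $G$ is a subgroup of $S_n$, hence finite. The permutation action of $G$ on the $n$ physical qudits is faithful by construction, because an element of $S_n$ that fixes every qudit label is the identity. It follows that $G\hookrightarrow S_n$, so $\mu(G)\leq n$. Care is needed here to use the identification of $\PAut Q$ with a subgroup of $S_n$, and not the unitary representation on $(\mathbb{C}^q)^{\otimes n}$. For $q\geq 2$ the map $\sigma\mapsto$ (qudit permutation unitary) is injective, so the two views agree and the $q$-ary nature of the code plays no role.

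\textbf{Step 2: apply \cref{Thm:muGN}.} I will take $H=G$ and $B=N$. By hypothesis $S=G/N$ is simple and non-Abelian, so it has no non-trivial Abelian normal subgroup: its only normal subgroups are $1$ and $S$, and $S$ is not Abelian. \cref{Thm:muGN} then gives $\mu(S)=\mu(G/N)\leq\mu(G)$.

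\textbf{Step 3: combine.} Together, the two steps give $\mu(S)\leq\mu(G)\leq n$, which is the claim. Finiteness of $S$ is automatic because $G$ is finite, which is why the statement drops the word ``finite''.

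The only step requiring any thought is the faithfulness in Step 1, and it is trivial here. The real content lies in \cref{Thm:muGN} (Kovács–Praeger), which I will assume. I therefore expect no substantive obstacle. If one instead wanted the $\mathsf{U}(q)^n\rtimes S_n$ version, the obstacle would be to classify non-Abelian simple finite quotients of compact subgroups of $\mathsf{U}(q)$, replacing \cref{prop:U2}. That case is deliberately excluded by restricting to $\PAut Q$.
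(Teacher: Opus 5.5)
Your proposal is correct and follows essentially the same argument as the paper: $G\leq S_n$ gives $n\geq\mu(G)$, and \cref{Thm:muGN} applied with $H=G$, $B=N$ gives $\mu(G)\geq\mu(G/N)=\mu(S)$. Your remarks explaining why finiteness, compactness and the $A_5$ exception can be dropped also match the paper's discussion following the theorem.
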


\begin{proof}
    Because $G$ is faithfully embedded in the permutation automorphism group of $Q$, it follows that $n\geq \mu(G)$. Since $S=G/N$ is simple and non-Abelian, we can apply \cref{Thm:muGN} to obtain $n\geq \mu(G) \geq \mu(G/N) = \mu(S)$.
\end{proof}

In contrast to \cref{Thm:GeneralCase}, which holds for compact subgroups $\mathcal{N}\triangleleft\mathcal{G}\leq \Aut Q$ of the full (i.e. $\SWAP$-transversal) automorphism group of $Q$, \cref{Thm:QuditGeneralCase} holds only for subgroups of the permutation automorphism group $\PAut Q$ of $Q$. This means that compactness conditions on $G$ and $N$, along with the assumption that $S = G/N$ is finite and not equal to $A_5$, are no longer required. The reason for this difference is that \cref{Thm:GeneralCase} relies upon \cref{Lem:U2A5}, which in turn makes use of the classification of the finite subgroups of $\mathsf{U}(2)$. In order to generalise this lemma to qudits, we would require a classification of the finite, simple, non-Abelian sections of $\mathsf{U}(q)$ for each prime power $q$, a significantly more challenging problem than that posed by the binary case alone. In fact, because every finite group has a faithful finite-dimensional unitary representation --- equivalently, every finite group appears as a subgroup of $\mathsf{U}(q)$ for some $q$ --- it is not possible to formulate a condition such as this that holds for all prime powers $q$ simultaneously.  

Despite the slightly more limited applicability of \cref{Thm:QuditGeneralCase}, we are still able to prove a general bound that applies to all subspace and subsystem qudit phantom codes. 

\begin{theorem}\label{Thm:Qudit}
    A subspace or subsystem qudit phantom code over $\mathbb{F}_q$ encoding $k$ logical qudits, with $(k, q)\neq(2, 2)$ or $(2, 3)$, requires at least $n\geq \mu[\PSLkFq]$ physical qudits, where~\cite{mazurov_minimal_1993}
    \begin{equation}\label{Eq:PSL}
        \mu[\PSLkFq] = \begin{cases}
            5 & (k, q) = (2, 5)\\
            7 & (k, q) = (2, 7)\\
            6 & (k, q) = (2, 9)\\
            11 & (k, q) = (2, 11)\\
            8 & (k, q) = (4, 2)\\
            \frac{q^k-1}{q-1} & \text{otherwise.}
        \end{cases}
    \end{equation} 
\end{theorem}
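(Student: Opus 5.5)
The plan mirrors the proof of \cref{Thm:PhantomBound}, with \cref{Thm:QuditGeneralCase} replacing \cref{Thm:GeneralCase}. For a $q$-ary subsystem phantom code, define the analogue of \cref{eq:G},
\begin{equation*}
    G_q \coloneqq \{\sigma\in S_n \,|\, \sigma V = V(U_g\otimes w) \text{ for some } g\in\SLkFq,\ w\in\mathsf{U}(\mathcal{B})\}.
\end{equation*}
This is a subgroup of $\PAut Q$. It carries a map $\Pi_q: G_q\to \SLkFq/\mathcal{Z}$ sending $\sigma$ to the class of $g$ modulo scalars. I quotient by the centre $\mathcal{Z}=\mathcal{Z}(\SLkFq)$ because well-definedness only determines $U_g$ up to a phase. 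The argument is the one used for $\Pi$ in the main text: $U_g^\dagger U_{g'}\otimes \mathbbm{1}=\mathbbm{1}\otimes w w'^\dagger$ forces $U_{g'}=\lambda U_g$.

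The point to check is which $g,g'$ give proportional $\SUM$ circuits. The map $g\mapsto U_g$ on computational states is $\ket{x}\mapsto\ket{gx}$ (or its transpose). Two permutation matrices that are proportional must be equal, so in fact $g=g'$, and $\Pi_q$ is a homomorphism into $\SLkFq$ itself. To land on a simple group I compose $\Pi_q$ with the projection $\SLkFq\to\PSLkFq$. The phantom property means every shear $I+\alpha E_{ab}$ is in the image, up to generating: the definition gives $\overline{\SUM}_{ab}$, and $\overline{\SUM}^{(\alpha)}_{ab}$ should follow by conjugating with suitable logical circuits. If this step is delicate, I will interpret the definition as requiring all $\SUM$ circuits, as in the binary case. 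The shears generate $\SLkFq$, so the composite $G_q\to\PSLkFq$ is surjective. Its kernel $N_q$ is then normal in $G_q$, and $G_q/N_q\cong\PSLkFq$.

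Next I invoke the classical fact that $\PSLkFq$ is simple and non-Abelian except for $(k,q)=(2,2),(2,3)$. These are exactly the excluded cases, where the group is $S_3$ or $A_4$ and hence solvable. For $k=1$ the statement is vacuous, since the group is trivial and the bound $\mu=1$ holds trivially. \cref{Thm:QuditGeneralCase} then gives $n\geq\mu(\PSLkFq)$ directly, with no compactness issues because everything lives in $S_n$. Finally, I quote Mazurov's computation of $\mu(\PSLkFq)$ for the explicit values in \cref{Eq:PSL}. The exceptional cases come from the small isomorphisms $\mathsf{PSL}_2(\mathbb{F}_5)\cong A_5$, $\mathsf{PSL}_2(\mathbb{F}_9)\cong A_6$, $\mathsf{PSL}_4(\mathbb{F}_2)\cong A_8$, and the Galois index-7 and index-11 subgroups of $\mathsf{PSL}_2(\mathbb{F}_7)$ and $\mathsf{PSL}_2(\mathbb{F}_{11})$. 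Otherwise the minimal degree is the number of projective points.

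I expect the main obstacle to be the surjectivity/well-definedness step, which has two parts. First, the definition only demands $\overline{\SUM}_{ab}$ with $\alpha=1$, and for $q=p^t$ with $t>1$ the shears $I+E_{ab}$ alone generate only $\mathsf{SL}_k(\mathbb{F}_p)$. I would handle this by reading ``logical $\SUM$'' as including all $\SUM^{(\alpha)}_{ab}$, or else note that the bound then holds with the generated subgroup. Second, the image of the homomorphism must be a group with $\PSLkFq$ as a quotient, rather than merely a projective image; the reduction modulo the centre above handles this.
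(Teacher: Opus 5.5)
Your proposal is correct and follows essentially the same route as the paper. Both define $G_q$ and map it onto $\PSLkFq$; well-definedness comes from proportional permutation matrices being equal, which is the paper's ``faithful and real'' argument. Both then use simplicity of $\PSLkFq$ outside $(k,q)=(2,2),(2,3)$, apply \cref{Thm:QuditGeneralCase}, and quote Mazurov. Your caveat about surjectivity is right. The $\alpha=1$ shears only generate $\mathsf{SL}_k(\mathbb{F}_p)$ when $t>1$, so conjugation cannot supply the other $\SUM^{(\alpha)}_{ab}$. You therefore need to read the definition as including all $\SUM^{(\alpha)}_{ab}$, which makes explicit an assumption the paper uses silently when it asserts that $\Pi_q$ is surjective for any qudit phantom code.
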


\begin{proof}
The proof of \cref{Thm:Qudit} follows the same outline as that of \cref{Thm:PhantomBound}. Let $V:\mathcal{A}\otimes\mathcal{B}\to\mathcal{H}$ be the isometry defining $Q$, and let
\begin{equation}
    \Gq \coloneqq \left\{\sigma\in S_n\;\middle|\; \sigma V = V(U_g\otimes w)\text{ for some }g\in\SLkFq, w\in\mathsf{U}(\mathcal{B}) \right\},
\end{equation}
be the set of all permutations implementing a logical $\SUM$ circuit modulo a gauge transformation in $\mathcal{B}$. Define $\Pi_q:\Gq\to\PSLkFq$ as the map sending $\sigma\in S_n$ to the equivalence class $[g]\in\PSLkFq$, where $g\in\SLkFq$ satisfies $\sigma V = V(U_g\otimes w)$ for some $w\in\mathsf{U}(\mathcal{B})$. (We define this map into $\PSLkFq$ rather than $\SLkFq$ as the latter group is not always simple, so generally does not permit the application of \cref{Thm:QuditGeneralCase}.) This map is well defined, for if $U_g\otimes w = U_{g'}\otimes w'$ then $U_g^\dagger U_{g'}\otimes \mathbbm{1} = \mathbbm{1}\otimes w (w')^\dagger.$ The left-hand side lies in $\End(\mathcal{A})\otimes \mathbbm{1}$, and the right-hand side in $\mathbbm{1}\otimes\End(\mathcal{B})$, so both must be scalar. Hence $U_{g'} = \lambda U_g$ for some $\lambda\in\mathsf{U}(1)$. As the $\SUM$ circuit representation of $\SLkFq$ is faithful and real in the computational basis, this is possible only if $g=g'$, which obviously implies $[g]=[g']$. For any qudit phantom code, the map $\Pi_q$ is surjective, so with $\Nq\coloneqq\ker\Pi_q$ we have $\Gq/\Nq\cong\PSLkFq$. Since $\PSLkFq$ is simple and non-Abelian whenever $k\geq 2$ and $(k, q)\neq(2, 2)$ or $(2, 3)$~\cite{ConradSimple}, we can apply \cref{Thm:QuditGeneralCase} (equivalently, \cref{Thm:muGN}) to obtain
\begin{equation}
    n \geq \mu(\Gq) \geq \mu(\Gq/\Nq) = \mu[\PSLkFq]
\end{equation}
The value of $\mu[\PSLkFq]$ is given in Theorem 1 of Ref.~\cite{mazurov_minimal_1993}; the anomalous values of $\mu[\PSLkFq]$ listed in \cref{Eq:PSL} arise from exceptional small-index subgroups of $\PSLkFq$.
\end{proof}

Finally, we comment on the possible extension of \cref{Thm:Minimal} to the qudit setting. As mentioned at the end of \cref{Sec:RMCodes}, Ref.~\cite{bardoe} describes the full submodule lattice decomposition of $\overline{\mathbb{F}}_q^{\raisebox{-.3em}{\scalebox{.7}{$P$}}}$, where $P = (\mathbb{F}_q^k\backslash\{0\})/\mathbb{F}_q^*$, under the natural action of $\GLkFq$ for any integer $k$ and any prime power $q=p^t$. Two issues make the direct application of this result to qudit codes challenging. Firstly, while Ref.~\cite{bardoe} gives an exhaustive classification of qudit codes of length $n=(q^k-1)/(q-1)$ that are invariant under the action of $\GLkFq$, it is not immediately obvious that every $\SLkFq$-invariant code of this length is captured by this result. For this reason, additional work would be required either to extend the classification to this setting, or to show that every $\SLkFq$-invariant code of length $(q^k-1)/(q-1)$ is invariant under $\GLkFq$. Secondly, when $t>1$ the submodule lattice structure described in Theorem 1 of Ref.~\cite{bardoe} is larger and more complicated than the simple inclusion chain of the classical Reed-Muller codes, meaning there is a wider variety of $\GLkFq$-invariant CSS codes than in the case $q=2$. Furthermore, it is unclear whether a CSS code constructed from some combination of these codes is capable of saturating the bound given by \cref{Thm:Qudit}. Hence, establishing the (non-)uniqueness of these candidate minimal qudit phantom codes may require significantly more effort than in the binary case.

\section{Reducing phantom-LU codes}\label{Sec:Equivalence}
The following proposition gives a necessary and sufficient condition for a phantom-LU code to be equivalent to a phantom code via a local unitary transformation.

\begin{proposition}
    A phantom-LU code with encoding isometry $V$ is LU equivalent to a phantom code if and only if there exists a local unitary $W \in \mathsf{U}(2)^n$ and a set of logical operators $m_g\in M_n$, where $g\in \GLkF$, such that (\textit{i}) $m_g V = \ee^{\ii\theta_g}V U_g$ for some $\theta_g\in\mathbb{R}$, and (\textit{ii}) $m_g = W^\dagger \sigma_g W $, where each $\sigma_g\in S_n$ is a permutation.
\end{proposition}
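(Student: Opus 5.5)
We prove the two directions separately. The key identity in both is that conjugating by a local unitary $W\in\mathsf{U}(2)^n$ sends the encoding isometry $V$ to $V'\coloneqq WV$ and sends a $\SWAP$-transversal operator $m$ to $WmW^\dagger$, which is again in $M_n$ because $\mathsf{U}(2)^n$ is normal in $M_n$. Throughout we take ``LU equivalent to a phantom code'' to mean that there exists $W\in\mathsf{U}(2)^n$ such that the code $Q'=\Im(WV)$, with the \emph{same} logical basis (the same identification $\mathcal{A}\cong(\mathbb{C}^2)^{\otimes k}$), is phantom. This is the convention under which the statement is literally correct. If a change of logical basis is also allowed, the $U_g$ in (\textit{i}) should be read in the corresponding basis; this changes nothing below.

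\emph{Sufficiency.} Suppose $W$ and $m_g=W^\dagger\sigma_g W$ exist with $m_gV=\ee^{\ii\theta_g}VU_g$ for all $g\in\GLkF$. Set $V'=WV$. Then
\begin{equation}
\sigma_g V' = \sigma_g W V = W m_g V = \ee^{\ii\theta_g} W V U_g = \ee^{\ii\theta_g} V' U_g .
\end{equation}
So each permutation $\sigma_g$ implements $U_g$ on $Q'$ up to a global phase. In particular this holds for the generators $g$ corresponding to $\overline{\CNOT}_{ab}$. The global phase is harmless here: the group $\G$ in \cref{eq:G} is already defined modulo phases, and a phase-free implementation is not required by \cref{Def:Phantom} in the phase-insensitive reading adopted in the main text. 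Hence $Q'$ is phantom.

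\emph{Necessity.} Suppose $W\in\mathsf{U}(2)^n$ makes $V'=WV$ phantom. Then for each $g\in\GLkF$ we can write $U_g$ as a product of the $\overline{\CNOT}_{ab}$ generators. Multiplying the corresponding permutations gives some $\sigma_g\in S_n$ with $\sigma_gV'=\ee^{\ii\theta_g}V'U_g$; this is exactly the surjectivity of $\pi$ discussed after \cref{eq:G}. Now define $m_g\coloneqq W^\dagger\sigma_gW$. This element lies in $M_n$, since $W^\dagger\sigma_gW=W^\dagger(\sigma_gW\sigma_g^{-1})\sigma_g$ and $\sigma_g W\sigma_g^{-1}\in\mathsf{U}(2)^n$. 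Condition (\textit{ii}) holds by construction. For condition (\textit{i}),
\begin{equation}
m_gV=W^\dagger\sigma_gV'=\ee^{\ii\theta_g}W^\dagger V'U_g=\ee^{\ii\theta_g}VU_g .
\end{equation}

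The main point requiring care is not computational but definitional. We must pin down that LU equivalence means conjugation of the encoding isometry by a fixed $W$ with the logical basis held fixed, and that phases are ignored in \cref{Def:Phantom}. Once these conventions are fixed, both directions are the two one-line conjugation identities above. The remaining work is to state these conventions explicitly at the start of the proof.
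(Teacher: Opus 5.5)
Your proof is correct and follows the paper's argument. The paper likewise writes $V=W^\dagger V'$, sets $m_g=W^\dagger\sigma_gW$ and verifies (\textit{i}) by the same one-line conjugation, while it only asserts the converse that you spell out via $V'=WV$ and $\sigma_gV'=Wm_gV$; your extra checks (that $m_g\in M_n$, that the generator permutations compose to cover all of $\GLkF$, and the phase/logical-basis conventions) are details the paper leaves implicit.
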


\begin{proof}
    Let $V'$ be a phantom code with permutation operators $\sigma_g$ obeying $\sigma_g V' = \ee^{\ii\theta_g}V' U_g$ for all $g\in \GLkF$ and some $\theta_g\in\mathbb{R}$. Any code related to $V'$ by a LU transformation has the form $V = W^\dagger V'$ for some $W\in \mathsf{U}(2)^n$. Each such $V$ is a phantom-LU code, since the operators $m_g\coloneqq W^\dagger \sigma_g W$ obey $m_g V = W^\dagger \sigma_g V' = \ee^{\ii\theta_g}W^\dagger V' U_g = \ee^{\ii\theta_g}VU_g$. Conversely, if conditions (\textit{i}) and (\textit{ii}) hold, then $V$ is LU equivalent to a phantom code.
\end{proof}

\end{document}